\pgfplotsset{compat=newest}
\def\reals{{\mathbb R}}
\newcommand{\eps}{\epsilon}
\newcommand{\band}{\text{band}}
\def\pr{\mathbb{P}}
\newcommand{\ind}{\mathds{1}}
\def\calD{\mathcal{D}}
\def\calE{\mathcal{E}}
\def\calN{\mathcal{N}}
\def\calQ{\mathcal{Q}}
\def\calR{\mathcal{R}}
\def\calS{\mathcal{S}}
\def\calX{\mathcal{X}}
\def\calY{\mathcal{Y}}
\def\reals{{\mathbb R}}
\def\nats{{\mathbb N}}
\def\integers{{\mathbb Z}}
\def\norm#1{\mathopen\| #1 \mathclose\|}
\newcommand{\arxiv}[1]{#1}
\theoremstyle{plain}
\newtheorem{theo}{Theorem}[section]
\newtheorem{theorem}[theo]{Theorem}
\newtheorem{lemma}[theo]{Lemma}
\newtheorem{claim}[theo]{Claim}
\newtheorem{corollary}[theo]{Corollary}
\theoremstyle{definition}
\newtheorem{definition}[theo]{Definition}
\newtheorem{example}[theo]{Example}
\newtheorem{proposition}[theo]{Proposition}
\theoremstyle{remark}
\newtheorem{remark}[theo]{Remark}
\DeclareMathOperator*{\argmax}{argmax}
\DeclareMathOperator*{\Lap}{Lap}
\DeclareMathOperator*{\Quantile}{Quantile}
\DeclareMathOperator*{\Insert}{Insert}
\DeclareMathOperator*{\Compress}{Compress}
\DeclareMathOperator*{\rank}{rank}
\DeclareMathOperator*{\ix}{ix}
\DeclareMathOperator*{\val}{val}
\DeclareMathOperator*{\dpexpgk}{\texttt{DPExpGK}}
\DeclareMathOperator*{\dpexpgkGumb}{\texttt{DPExpGKGumb}}
\DeclareMathOperator*{\dphistgk}{\texttt{DPHistGK}}
\DeclareMathOperator*{\dpexpfull}{\texttt{DPExpFull}}
\def\bx{\text{\textbf{x}}}
\def\Gumb{\text{Gumb}}
\def\maxIndex{\text{maxIndex}}
\def\maxValue{\text{maxValue}}
\begin{document}

\title{Bounded Space Differentially Private Quantiles}


\author{Daniel Alabi}
\affiliation{Harvard University}
\email{alabid@g.harvard.edu}
\author{Omri Ben-Eliezer}
\affiliation{Massachusetts Institute of Technology}
\email{omrib@mit.edu}
\author{Anamay Chaturvedi}
\affiliation{Northeastern University}
\email{chaturvedi.a@northeastern.edu}

\begin{abstract}
Estimating the quantiles of a large dataset is a fundamental problem in both the streaming algorithms literature and the differential privacy literature. However, all existing private mechanisms for distribution-independent quantile computation require space at least linear in the input size $n$. In this work, we devise a differentially private algorithm for the quantile estimation
problem, with strongly sublinear space complexity,
in the one-shot and continual observation settings. 
Our basic mechanism estimates any $\alpha$-approximate quantile of a length-$n$ stream over a data universe $\mathcal{X}$ with probability $1-\beta$ using
$O\left( \frac{\log (|\calX|/\beta) \log (\alpha \epsilon n)}{\alpha \epsilon} \right)$
space while
satisfying $\eps$-differential privacy 
at a single time point. 
Our approach builds upon deterministic streaming algorithms
for non-private quantile estimation 
instantiating the exponential mechanism using a utility function defined on sketch items, while (privately)
sampling from intervals defined by the sketch. We also present
another algorithm based on histograms that is especially
suited to the multiple quantiles case.
We implement our algorithms
and experimentally evaluate them on synthetic and real-world datasets. 
\end{abstract}
\maketitle

\section{Introduction}
Quantile estimation is a fundamental subroutine in data analysis and statistics. For $q \in [0,1]$, the $q$-quantile in a dataset of size $n$ is the element ranked $\lceil qn\rceil$ when the elements are sorted from smallest to largest.
Computing a small number of quantiles in a possibly huge collection of data elements can serve as a quick and effective sketch of the ``shape'' of the data. 
Quantile estimation also serves an essential role in robust statistics, where data is generated from some distribution but is contaminated by a non-negligible fraction of outliers, i.e., ``out of distribution'' elements that may sometimes even be adversarial. 
For example, the median (50th percentile) 
of a dataset is used as a robust estimator of the mean in such situations where the data may be contaminated.
Location parameters can also be (robustly)
estimated via truncation or
winsorization, an operation that relies on quantile
estimation as a subroutine~\citep{Tukey60, Huber64}.
Rank-based nonparametric
statistics can be used for hypothesis testing (e.g., the Kruskal-Wallis 
test statistic~\citep{KW52}). Thus, designing quantile-based or rank-based
estimators, 
whether distribution-dependent or distribution-agnostic, 
is important in many scenarios.

Maintaining the privacy of individual users or data items, or even of groups, is an essential prerequisite in many modern data analysis and management systems.
Differential privacy (DP) is a rigorous and now well-accepted definition of privacy for data analysis and machine learning.
In particular, there is already a substantial amount of literature on differentially private quantile estimation
(e.g., see~\cite{NissimRS07, AsiD20,GJK21,TzamosVZ20}).\footnote{
Robust estimators are also known to be useful for accurate
differentially private estimation; see, e.g., the work of Dwork and Lei~\cite{DworkL09} in the context of quantile estimation for the
interquartile range and for medians.} 

All of these previous works, however, either make certain distributional assumptions about the input, or assume the ability to access all input elements (thus virtually requiring a linear or worse space complexity). Such assumptions may be infeasible in many practical scenarios, where large scale databases have to quickly process streams of millions or billions of data elements without clear a priori distributional characteristics.

The field of streaming algorithms aims to provide space-efficient algorithms for data analysis tasks. These algorithms typically maintain good accuracy and fast running time while having space requirements that are substantially smaller than the size of the data. While distribution-agnostic quantile estimation is among the most fundamental problems in the streaming literature \cite{AgarwalCHPWY13, FelberO17, GreenwaldK01, HungDing10, KarninLL16, MankuRL99, MunroPaterson80, ShrivastavaBAS04, WangLYC13}, no differentially private sublinear-space algorithms for the same task are currently known. 
Thus, the following question, essentially posed 
 by~\cite{Smith11} and~\cite{MirMNW11},  naturally arises:
\begin{quote}
    \textit{Can we design differentially private quantile estimators that use space 
    sublinear in the stream length, have efficient running time, provide high-enough utility, and do not rely on restrictive distributional assumptions?}
\end{quote}

It is well known \cite{MunroPaterson80} that exact computation of quantiles cannot be done with sublinear space, even where there are no privacy considerations. Thus, one must resort to approximation. Specifically, for a dataset of $n$ elements, an \emph{$\alpha$-approximate $q$-quantile} is any element which has rank $(q \pm \alpha)n$ when sorting the elements from smallest to largest, and it is known that the space complexity of $\alpha$-approximating quantiles is $\tilde\Omega(1/\alpha)$~\cite{MunroPaterson80}. In our case, the general goal is to efficiently compute $\alpha$-approximate quantiles in a (pure or approximate) differentially private manner.

\subsection{Our Contributions}

We answer the above question 
affirmatively by providing theoretically proven algorithms with accompanying
experimental validation for quantile estimation with DP guarantees. The algorithms are suitable for private computation of either a single quantile or multiple quantiles. Concretely, the main contributions are:
\begin{enumerate}
\item We devise $\dpexpgk$, a differentially private sublinear-space algorithm for quantile estimation based on the exponential mechanism. In order to achieve sublinear space complexity, our algorithm carefully instantiates the exponential mechanism with the basic blocks being intervals from the Greenwald-Khanna \cite{GreenwaldK01} data structure for non-private quantile
estimation, rather than single elements.  We prove general distribution-agnostic utility bounds on our algorithm and show that the space complexity is logarithmic in $n$.  
\item We present $\dphistgk$, another differentially private mechanism for quantile estimation, which applies the Laplace mechanism to a histogram, again using intervals of the GK-sketch as the basic building block. We theoretically demonstrate that $\dphistgk$ may be  useful in cases where one has prior knowledge on the input.
\item We extend our results to the continual release
setting.
\item We empirically validate our results by evaluating $\dpexpgk$, analyzing and comparing various aspects of
performance on real-world and synthetic datasets.
\end{enumerate}

$\dpexpgk$ can be interpreted as a more ``general purpose'' solution that performs well unconditionally of the data characteristics. It is especially suitable for the single quantile problem, and can be adapted to multiple quantiles by splitting the privacy budget and applying standard composition theorems in differential privacy. 
On the other hand, $\dphistgk$ inherently solves the all-quantiles problem and
may be suitable when there are not too many bins (small set of possible values),
the variance of the target distribution is small, or
the approximation parameter (i.e., $\alpha$) is large.

\section{Related Work}
\subsection{Quantile Approximation of Streams and Sketches}

Approximation of quantiles in large data streams (without privacy guarantees) is among the most well-investigated problems in the streaming literature \cite{WangLYC13, GreenwaldK16, XiangD0Z20}. A classical result of Munro and Paterson from 1980 \cite{MunroPaterson80} shows that computing the median exactly with $p$ passes over a stream requires $\Omega(n^{1/p})$ space, thus implying the need for approximation to obtain very efficient (that is, at most polylogarithmic) space complexity. 
Manku, Rajagopalan and Lindsay \cite{MankuRL99} built on ideas from \cite{MunroPaterson80} to obtain a randomized algorithm with only $O((1/\alpha) \log^{2}(n\alpha))$ for $\alpha$-approximating all quantiles; a deterministic variant of their approach with the same space complexity exists as well \cite{AgarwalCHPWY13}. 
The best known deterministic algorithm is that of Greenwald and Khanna (GK) \cite{GreenwaldK01} on which we build on in this paper, with a space complexity of $O(\alpha^{-1} \log(\alpha n))$ to sketch all quantiles for $n$ elements (up to rank approximation error of $\pm \alpha n$). A recent deterministic lower bound of Cormode and Vesel\'{y} \cite{CormodeVesely2020}  (improving on \cite{HungDing10}) shows that the GK algorithm is in fact optimal among deterministic (comparison-based) sketches. 

Randomization and sampling help for streaming quantiles, and the space complexity becomes independent of $n$; an optimal space complexity of $O((1/\alpha) \log \log(1/\beta))$ was achieved by Karnin, Lang and Liberty \cite{KarninLL16} (for failure probability $\beta$), concluding a series of work on randomized algorithms \cite{AgarwalCHPWY13, FelberO17, LuoWYC16, MankuRL99}.

The problem of biased or relative error quantiles, where one is interested in increased approximation accuracy for very small or very large quantiles, has also been investigated \cite{CKLTV20,CormodeKMS06}; it would be interesting to devise efficient differentially private algorithms for this problem. 

Recall that our approach is based on the Greenwald-Khanna deterministic all-quantiles sketch \cite{GreenwaldK01}. While some of the aforementioned randomized algorithms have a slightly better space complexity, differential privacy mechanisms are inherently randomized by themselves, and the analysis seems somewhat simpler and more intuitive when combined with a deterministic sketch. This, of course, does not rule out improved private algorithms based on modern efficient randomized sketches
\arxiv{(see Section \ref{sec:conclusion})}. 

\subsection{Differential Privacy}

\paragraph{Differentially private single quantile estimation:}{In the absence of distributional assumptions, in work by Nissim, Raskhodnikova and Smith \cite{NissimRS07} and Asi and Duchi \cite{AsiD20}, the trade-off between accuracy and privacy is improved by scaling the noise added for obfuscation in an instance-specific manner for median estimation. Another work by Dwork and Lei \cite{DworkL09} uses a ``propose-test-release" paradigm to take advantage of local sensitivity; however, as observed in \cite{GJK21}, in practice the error incurred by this method is relatively large as compared to other works like \cite{TzamosVZ20}. The work \cite{TzamosVZ20} achieves the optimal trade-off between privacy and utility in the distributional setting, but again as observed by \cite{GJK21}, with a time complexity of $O(n^4)$, this method does not scale well to large data sets.}

A very recent work of Gillenwater et al.~\cite{GJK21} 
shows how to optimize the division of the privacy budget to estimate $m$ quantiles in a time-efficient manner. For estimation of $m$ quantiles, their time and space complexity are $O(mn \log (n) + m^2 n)$ and $O(m^2 n)$, respectively. 
They do an extensive experimental analysis and find lower error compared to previous work. However, although they provide intuition for why their method should incur relatively low error, they do not achieve formal theoretical accuracy guarantees. Note that their results substantially differ from ours, as they hold for private \emph{exact} quantiles and so cannot achieve sublinear space. 
Kaplan et al.~\citep{KSS21} improve upon~\citep{GJK21}
for the multiple approximate quantiles problem by
using a tree-based (recursive) approach to CDF estimation.
All these works do not deal with quantile estimation in
the sublinear space settings.

\paragraph{Differentially private statistical estimation:}{ Estimation of global data statistics (or more generally, inference) is in general an important use-case of differential privacy~\cite{KV17}. The related private histogram release problem was studied by \cite{BalcerV18, cardoso2021differentially, aumuller2021differentially, LHRMM09} and in the continual observation privacy setting \cite{DworkNPR10, ChanSS11, ChenMHM17}. Perrier et al.~\cite{PerrierAK19} make improvements over prior work for private release of some statistics such as the moving average of a stream when the data distribution is light-tailed. They use private quantile estimation of the input stream as a sub-routine to achieve their improvements. B\"ohler and Kerschbaum \cite{BohlerK20} solve the problem of estimating the joint median of two private data sets with time complexity sub-linear in the size of the data-universe and provide privacy guarantees for small data sets as well as limited group privacy guarantees unconditionally against polynomially time-bounded adversaries.}

\paragraph{Inherent privacy:}{ Another line of work \cite{BlockiBDS12, Smith0T20, ChoiDKY20} demonstrates that sketching algorithms for streaming problems might have inherent privacy guarantees under minimal assumptions on the dataset in some cases. For such algorithms, relatively little noise needs to be added to preserve privacy unconditionally.}



\section{Preliminaries and Notation}
In this section we give standard differential privacy notation and formally describe the quantile estimation problem. We also present the Greenwald-Khanna sketch guarantees in a form that is adapted for our use.

\subsection{Differential Privacy}

\begin{definition}[Differential Privacy~\cite{DworkMNS06}]
Let $\calQ:\calX^n\rightarrow\calR$ be a (randomized) mechanism.
For any $\eps\geq 0, \delta\in[0, 1]$, $\calQ$ satisfies
\textbf{$(\eps, \delta)$-differential privacy} if for any
neighboring databases $\bx\sim\bx'\in\calX^n$ and any
$S\subseteq\calR$,
$$
\pr[\calQ(\bx)\in S] \leq e^\eps\pr[\calQ(\bx')\in S] + \delta.
$$
\end{definition}

The probability is taken over the coin tosses of $\calQ$.
We say that $\calQ$ satisfies pure differential privacy
($\eps$-DP) if $\delta=0$
and approximate differential privacy ($(\eps, \delta)$-DP) 
if $\delta > 0$. We can set $\eps$ to be a small constant
(e.g., between 0.01 and 2) but will require that
$\delta\leq n^{-\omega(1)}$ be cryptographically small.
Stability-based histograms can be used to obtain approximate DP guarantees.
However, most of our results apply only to the pure DP setting.

\subsection{Quantile Approximation}

\begin{figure}
\begin{tikzpicture}[scale=0.4,
     roundnode/.style={circle, draw=black!60, thick, minimum size=7mm},
     roundnodemedapp/.style={circle, draw=blue!60, fill=blue!5, very thick, dashed, minimum size=7mm},
     roundnodemedapptrue/.style={circle, draw=green!100, fill=green!20, very thick, dashed, minimum size=7mm}
     ]
      \draw[ultra thick, <->] (0,0) -- (20,0);
      \foreach \x in {0,1,2,...,8}
        {        
          \coordinate (A\x) at ($(2,0)+(2*\x,0)$) {};
          \draw ($(A\x)+(0,5pt)$) -- ($(A\x)-(0,5pt)$);
          \ifthenelse{\(\x>1\)\AND\(\x<6\)}{
            \node at ($(A\x)-(0,3ex)$) {\Large \x};
          }{
          }
        }
      \node (domain) at (10,-2) {\large Totally ordered domain $\mathcal{X}$};
      \node[roundnode] at (4,2) {1} ;
      \node[roundnodemedapp] at (6,2) {2} ;
      \node[roundnodemedapp] at (6,4) {2} ;
      \node[roundnodemedapptrue] at (6,6) {2};
      \node[roundnodemedapp] at (8,2) {3} ;
      \node[roundnodemedapp] at (12,2) {5} ;
      \node[roundnode] at (12,4) {5} ;
      \node[roundnode] at (14,2) {6} ;
      \node at (10,8) {\Large Data set $S' = \{1,2,2,3,5,2,6,5 \}$};
    \end{tikzpicture}
    \caption{In the data set $S'$ of $8$ elements, the true $0.5$ quantile is the value $2$, and the values $2, 3, 4$ and $5$ are all acceptable $0.25$-approximate answers. Note that although $1$ is adjacent to $2$ in the data universe $\calX$, it is not an acceptable output.}
    \label{fig:example}
\end{figure}
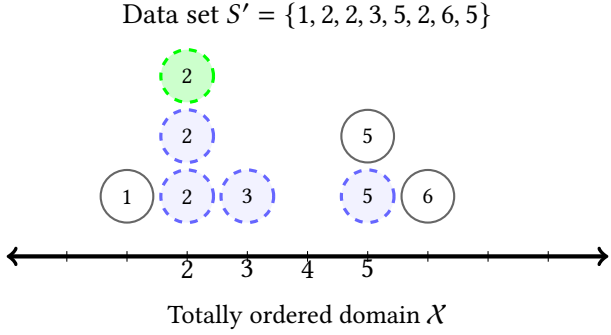

\begin{definition}\label{def:basics}
Let $X = ((x_1,1),\dots, (x_n,n))$ 
(sometimes implicitly referred to as $X = (x_1,\dots, x_n)$) be a stream of elements drawn from some finite totally ordered data universe $\calX$, i.e., $x_i \in \calX$ for all 
$i\in[n]$.
    \begin{enumerate}
    \item \textbf{Rank}: Given a totally ordered finite data universe $\calX$, a data set $X$ and a value $x \in \calX$, let $\rank_{X}(x) = \rank(X, x) = \sum_{y \in X} \ind[y \leq x]$.
    \item \textbf{Stream Prefix}: Given an input data stream $X = (x_1, \dots, x_n)$ we define the prefix up to the index $s$ element of this stream $X[1:s] := (x_1, \dots, x_s)$. Note that we can overload notation and treat $X[1:s]$ as a data set by ignoring the order in which the elements arrive.
    \item \textbf{Checkpoint}: A set $S\neq\emptyset$ is a set of checkpoints for stream $X$ if $\forall j\in S$,
	$\exists$ value $v_j$ that is a $(\alpha/2)$-approximate $q$-quantile
	for $X[1:j]$.
    \item For $(x_i,i) \in X$, let $\val((x_i,i)) = x_i$ and $\ix ((x_i,i)) = \sum_{j \leq i} \allowbreak 
    |\{(x_j,j): x_j < x_i \mbox{ or } x_j = x_i, j<i \}|$.
    \item For $v_1, v_2 \in X$, we say that $v_1 \leq v_2$ if $\ix (v_1) \leq \ix (v_2)$.
    \item The $q$-quantile of $X$ is $\val (v)$ for $v \in X$ such that $\ix(v) = \lceil qn \rceil$.
    \item For $x \in \calX$, we define $r_{\min} (x) = |\{v \in X: \val(v) < x\}|$, $r_{\max} (x) = |\{v \in X: \val(v) \leq x\}|$ and  $\rank(X,x)$ to be the interval $[r_{\min}(x) , r_{\max} (x)]$.
    \item We say that $x \in \calX$ is an $\alpha$-approximate $q$-quantile for $X$ if $\rank(X,x) \cap [\lceil qn \rceil - \alpha n, \lceil qn \rceil + \alpha n] \not= \emptyset$.
    \end{enumerate}
With this notation, the data set $X$ is naturally identified as a multi-set of elements drawn from $\calX$.
\end{definition}

\arxiv{
\begin{example}
Given the data set $\{1, 2, 2, 3, 5, 2, 6, 5\}$ (refer to figure~\ref{fig:example}), the $0.5$ quantile is $2$, which we distinguish from the median, which would be the average of the elements ranked $4$ and $5$, i.e, $2.5$ for this data set. A straightforward way to obtain quantiles is to sort the dataset and the pick the element at the $\lceil q\cdot n\rceil$ position. This method only works in the offline (non-streaming) setting. For $\alpha = 0.25$, the $\alpha$-approximate $0.5$ quantiles are $2,3,4$ and $5$. Note that $4$, which does not occur in the data set, is still a valid response, but $1$, which occurs in the data set and is even adjacent to the true $0.5$ quantile $2$ in the data universe $\calX$, is not a valid response.
\end{example}

\begin{definition}[Single Quantile]
Given sample $S = (x_1, \ldots, x_n)$ in a streaming fashion in
arbitrary order, construct a data structure for computing the quantile
$Q_\calD^q$ such that for any $q\in(0, 1)$, with probability at least
$1-\beta$,
$$|Q_\calD^q - \tilde{Q}_S^q|\leq \alpha.$$
\end{definition}

\begin{definition}[All Quantiles]
Given sample $S = (x_1, \ldots, x_n)$ in a streaming fashion in
arbitrary order, construct a data structure for computing the quantile
$Q_\calD^q$ such that
with probability at least
$1-\beta$, for all values of $q\in M$ where $M\subset(0, 1)$,
$$|Q_\calD^q - \tilde{Q}_S^q|\leq \alpha.$$
\end{definition}
}

\arxiv{
In Figures~\ref{fig:np1a} and~\ref{fig:np1b}, we show 
the performance of the
(non-private) Greenwald-Khanna sketch.
As can be seen in Figure~\ref{fig:np1a}, for higher $\alpha$
(corresponding to bigger approximation and smaller space), the sketch
is less accurate than for lower $\alpha$ (smaller approximation
and larger space) (Figure~\ref{fig:np1b}).
In the private algorithms, the loss in approximation must also be
incurred.

\begin{figure}
  \begin{subfigure}[b]{0.2\textwidth}
    \includegraphics[width=\textwidth]{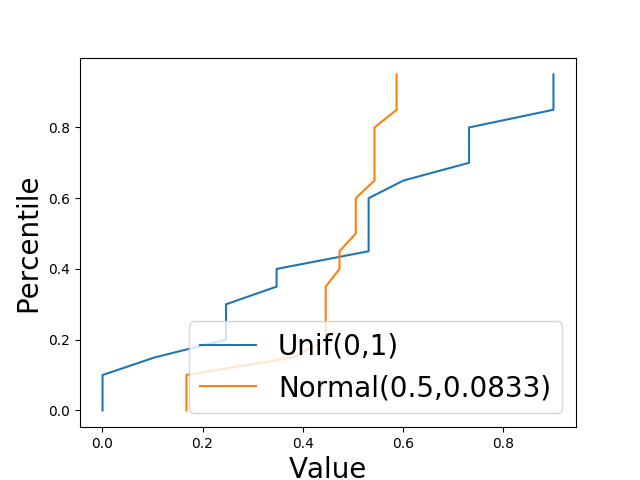}
    \caption{(Non-Private) $\alpha$-Approximate Quantiles for $\alpha = 0.1$ and $q\in(0, 1)$.}
    \label{fig:np1a}
  \end{subfigure}%
  \qquad
  \begin{subfigure}[b]{0.2\textwidth}
    \includegraphics[width=\textwidth]{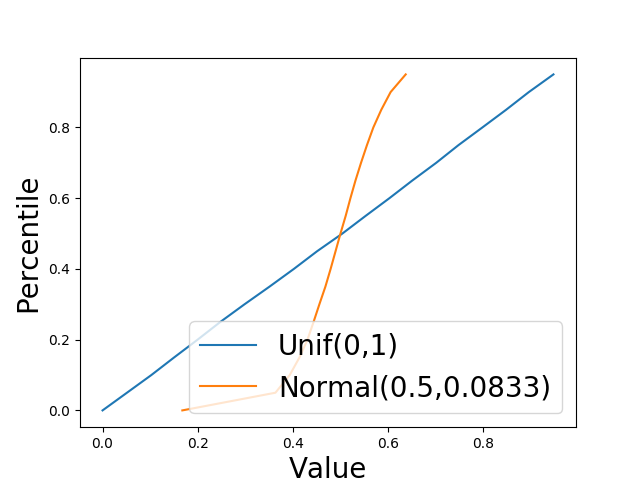}
    \caption{(Non-Private) $\alpha$-Approximate Quantiles for $\alpha = 0.0001$ and $q\in(0, 1)$.}
    \label{fig:np1b}
  \end{subfigure}
  \caption{}
\end{figure}
}

\subsection{Non-private Quantile Streaming}

\begin{lemma}[GK guarantees \cite{GreenwaldK01}]\label{lem:GKguarantee}
    Given a data stream $x_1, \dots, x_n$ of elements drawn from $\calX$, the GK sketch outputs a list $S(X)$ of $s (= O((1/\alpha) \log(\alpha n)))$ many tuples $(v_i, g_i, \Delta_i) \in (\calX \times \nats) \times \nats \times \nats$ for $i = 1, \dots, s$ such that if $X$ is the data multi-set then
    \begin{enumerate}
        \item $\rank (X, \val(v_i)) \subset [\sum_{j \leq i} g_j , \Delta_{i} + \sum_{j\leq i} g_j]$.
        \item $g_i + \Delta_i \leq 2\alpha n$.
        \item The first tuple is $(\min \{x \in X\}, 1, 0)$ and the last tuple is $(\max \{x \allowbreak \in X\}, 1, 0)$.
        \item The $v_i$ are sorted in ascending order. WLOG, the lower confidence interval bounds $\sum_{j \leq i} g_j$ and upper confidence interval bounds $\Delta_i + \sum_{j \leq i} g_j$ are also sorted in increasing order.
    \end{enumerate} 
\end{lemma}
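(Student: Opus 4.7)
The plan is to reconstruct the GK analysis specialized to the statement above, maintaining a list $S(X) = \{(v_i, g_i, \Delta_i)\}_{i=1}^s$ as an evolving data structure and proving the four properties by induction on the stream length. The semantic meaning of the tuples is: $g_i$ is defined so that $\sum_{j \leq i} g_j$ is a lower bound on $\rank(X, \val(v_i))$, and $\Delta_i$ bounds the remaining rank uncertainty, so that $\sum_{j \leq i} g_j + \Delta_i$ is the matching upper bound. Properties 1 and 4 become essentially definitional once we show these bounds remain valid under the sketch's two operations ($\Insert$ and $\Compress$), and property 3 is enforced explicitly at the endpoints by initialization and special-case handling. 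The core quantitative invariant to establish is property 2.

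For the inductive step, consider $\Insert$ of a new stream element $x$: locate the index $i$ with $\val(v_{i-1}) < x \leq \val(v_i)$ and append the tuple $(x, 1, g_i + \Delta_i - 1)$ (replacing it by $(x, 1, 0)$ when $x$ is a new extremum, to preserve property 3). Since the previous invariant gave $g_i + \Delta_i \leq 2\alpha(n-1)$, the new tuple satisfies $1 + (g_i + \Delta_i - 1) = g_i + \Delta_i \leq 2\alpha n$, so property 2 is maintained; property 1 for the new tuple follows because exactly one stream element (namely the newly inserted $x$) contributes the $+1$ to its lower rank bound, while the other bounds shift consistently by identifying the $\ix$-values as in Definition~\ref{def:basics}. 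For $\Compress$, we merge adjacent tuples $(v_i, g_i, \Delta_i)$ and $(v_{i+1}, g_{i+1}, \Delta_{i+1})$ into $(v_{i+1}, g_i + g_{i+1}, \Delta_{i+1})$ only when $g_i + g_{i+1} + \Delta_{i+1} \leq 2\alpha n$, which preserves property 2 by construction; property 1 is preserved because the lower bound $\sum_{j \leq i+1} g_j$ for $v_{i+1}$ is unchanged by the merge, and property 4 survives because the merge respects the sorted ordering.

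The main obstacle is coordinating $\Insert$ and $\Compress$ so that the data structure stays compact while never violating property 2 as $n$ grows. After many interleaved operations, one must ensure that the $\Delta_i$ value of a surviving tuple $v_i$ simultaneously absorbs the rank uncertainty inherited from its compressed predecessors and still satisfies $g_i + \Delta_i \leq 2\alpha n$. The standard GK remedy is to trigger $\Compress$ periodically (for instance every $\lceil 1/(2\alpha) \rceil$ insertions), so merges happen only when they are safe under the current value of $n$; the same schedule delivers the $O(\alpha^{-1} \log(\alpha n))$ size bound on $s$ via an amortized potential argument. Property 3 holds throughout because both the minimum and maximum endpoint tuples are created with $(\val, 1, 0)$ and are explicitly excluded from compression, which together with the inductive maintenance of properties 1, 2, and 4 completes the argument.
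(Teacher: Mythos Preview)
Your approach differs from the paper's in scope. The paper does \emph{not} reprove the GK analysis: properties 1--3 (and the fact that the $v_i$ are sorted) are simply cited from \cite{GreenwaldK01} as black-box guarantees. The only thing the paper actually argues is the ``WLOG'' clause in property 4, namely that the upper confidence bounds $\Delta_i + \sum_{j \leq i} g_j$ can be assumed monotone in $i$. Your proposal, by contrast, attempts to rederive the entire GK invariant structure by induction on the stream.

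There is a genuine gap in your treatment of property 4. You write that ``property 4 survives because the merge respects the sorted ordering,'' but this addresses only the ordering of the $v_i$ (and, trivially, the lower bounds $\sum_{j \leq i} g_j$, which are monotone since $g_j \geq 0$). The monotonicity of the \emph{upper} bounds $\Delta_i + \sum_{j \leq i} g_j$ is not maintained automatically by $\Insert$ and $\Compress$: nothing in those operations forces $\Delta_i + \sum_{j \leq i} g_j \leq \Delta_{i+1} + \sum_{j \leq i+1} g_j$. The paper handles this by observing that since the true ranks satisfy $\rank(X,v_i) \leq \rank(X,v_{i+1})$, one may post-process the sketch by \emph{decrementing} any offending $\Delta_i$ until the upper bounds are sorted, without violating property 1 or 2. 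This is the actual content the lemma adds beyond the cited GK guarantees, and your proposal does not supply it.

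A secondary issue: your sketch of the $O(\alpha^{-1}\log(\alpha n))$ size bound defers entirely to ``an amortized potential argument.'' That argument (via the band structure on $\Delta$-values) is the nontrivial part of the GK analysis, so if you intend to reprove rather than cite, it cannot be left as a placeholder. The paper sidesteps this by citation, which is the cleaner route here.
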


\begin{proof}
    The first three statements are part of the GK sketch guarantee. For the third statement, i.e., to see that the $v_i$ are sorted in ascending order, we see that the GK sketch construction ensures that $\val(v_i) \leq \val(v_{i+1})$ for all $i$. Since an insertion operation always inserts a repeated value after all previous occurrences and the tuple order is always preserved, it follows that $\ix (v_i) \leq \ix(v_{i+1})$ as well, so in sum $\rank(X,v_i) \leq \rank(X,v_{i+1})$. In other words, the sort order in the GK sketch is \emph{stable}.
    
    The fact that the sequence $\sum_{j \leq i} g_j$ is sorted in increasing order follows from the non-negativity of the $g_i$. To ensure that $\Delta_i + \sum_{j \leq i} g_j$ are sorted in increasing order note that we always have that $\rank(X,v_i) \leq \rank(X,v_{i+1})$ so we can decrement $\Delta_i$ and ensure that $\Delta_{i} + \sum_{j \leq i} g_j \leq \Delta_{i+1} + \sum_{j \leq i+1} g_j $ without violating the guarantees of the GK sketch.
\end{proof}

\begin{remark}\label{rem:infty}
 We can add two additional tuples $(-\infty, 0,0)$ and $(\infty, 0,0)$ to the sketch, which corresponds to respective rank intervals $[0,0]$ and $[n+1,n+1]$. The bounds $g_i + \Delta_i \leq 2\alpha n$ are preserved. This will ensure that the sets $\{i : \val (v_i) < x \}$ and $\{i : \val (v_i) > x \}$ for any $x\in \calX$ are always non-empty.
\end{remark}

\section{Differentially Private Algorithms}
In this section we present our two DP mechanisms for quantile estimation.
Throughout, we assume that $\alpha$ is a user-defined approximation parameter. The goal is to obtain $\alpha$-approximate
$q$-quantiles.

The new algorithms we introduce are: 

\begin{enumerate}
\item $\dpexpgkGumb$ (Algorithm~\ref{alg:dpexpgkgumb}): An exponential mechanism based $(\eps, 0)$-DP 
algorithm for computing a single $q$-quantile. To solve the all-quantiles problem with approximation factor $\alpha$, one can run this algorithm iteratively with target quantile $0, \alpha, 2\alpha, \dots$. Doing so requires scaling the privacy parameter in each call by an additional $\alpha$ factor which increases the space complexity by a factor of $1/\alpha$. 
We extend our results to the continual observation setting \cite{DworkNPR10, ChanSS11}.

\item $\dphistgk$: A histogram based $(\eps, 0)$-DP 
algorithm (Algorithm~\ref{alg:dphistgk}) for the $\alpha$-approximate all-quantiles problem. The privacy guarantee of this algorithm is unconditional, but there is no universal theoretical utility bound as in the previous algorithm. However, in some cases the utility is provably better:
for example, we show that if the data set is drawn from a normal distribution (with unknown mean and variance), we can avoid the quadractic $1/\alpha^2$ factor in the sample complexity that we incur when using $\dpexpgkGumb$ for the same all-quantiles task. 
\end{enumerate}
The rest of the section contains the descriptions of the two algorithms, as well as a detailed theoretical analysis. The next section demonstrates how the histogram-based mechanism can be effectively applied to data generated from a normal distribution.

\subsection{\texorpdfstring{$\dpexpgkGumb$}{DPExpGKGumb}: Exponential Mechanism Based Approach}

We first establish how the exponential mechanism and the GK sketch may be used in conjunction to solve the single quantile problem. Concretely, the high level idea is to call the privacy preserving exponential mechanism with a utility function derived from the GK sketch. The exponential mechanism is a fundamental privacy primitive which when given a public set of choices and a private score for each choice outputs a choice that with high probability has a score close to optimal whilst preserving privacy. In the course of constructing our algorithms, we have to resolve two problems; one, how to usefully construct a utility function to pass to the exponential mechanism so that the private value derived is a good approximation to the $q$-quantile, and two, how to execute the exponential mechanism efficiently on the (possibly massive) data universe $\calX$. On resolving the first issue we get the (not necessarily efficient) routine Algorithm~\ref{alg:dpexpgk}, and on resolving the second issue we get an essentially equivalent but far more efficient routine i.e. Algorithm~\ref{alg:dpexpgkgumb}.

\paragraph{Constructing a score function:}{We recall that the GK sketch returns a short sequence of elements from the data set with a deterministic confidence interval for their ranks and the promise that for any target quantile $q \in [0,1]$, there is some sketch element that lies within $\alpha n$ units in rank of $\lceil qn \rceil$. One technicality that we run into when trying to construct a score function on the data universe $\calX$ is that when a single value occurs with very high frequency in the data set, the ranks of the set of occurrences can span a large interval in $[0,n]$, and there is no one rank we can ascribe to it so as to compare it with the target rank $\lceil qn \rceil$. This can be resolved by defining the score for any data domain value in terms of the distance of its respective rank interval $[r_{\min} (x), r_{\max} (x)]$ (formalized in Definition~\ref{def:basics}) from $\lceil qn \rceil$; elements whose intervals lie closer to the target have a higher score than those whose intervals lie further away. }

\paragraph{Efficiently executing the exponential mechanism:}{The exponential mechanism samples one of the public choices (in our case some element from the data universe $\calX$) with probability that increases with the quality of the choice according to the utility function. In general every element can have a possibly different score and the efficiency of the exponential mechanism can vary widely depending on the context. In our setting, the succinctness of the GK sketch leads to a crucial observation: by defining the score function via the sketch, the data domain is partitioned into a relatively small number of sets such that the utility function is constant on each partition. Concretely, for any two successive elements in the GK sketch, the range of values in the data universe that lie between them will have the same score according to our utility function. We can hence first sample a partition from which to output a value, and then choose a value from within that interval uniformly at random. To make our implementation even more efficient and easy to use, we also make use of the \emph{Gumbel-max} trick that allows us to iterate through the set of choices instead of storing them in memory.}

\paragraph{Outline:}{From Definition~\ref{def:rankEst} to Lemma~\ref{lem:rankEst}, we formalize how the GK sketch may be used to construct rank interval estimates for any data domain value. We then recall and apply the exponential mechanism with a utility function derived from the GK sketch (Definition~\ref{def:exp} to Lemma~\ref{lem:sensitivity} and Algorithm~\ref{alg:dpexpgk}), and derive the error guarantee Lemma~\ref{lem:acc_bound}. We conclude this subsection with a detailed description of an efficient implementation of the exponential mechanism (Algorithm~\ref{alg:dpexpgkgumb} and Lemma~\ref{lem:EMtoGumb}), and summarize our final accuracy and space complexity guarantees in Theorem~\ref{thm:utility_universal_bound}.}

\begin{definition}\label{def:rankEst}
    Let $\hat{r}_{\min} (x) = \max \{ \sum_{j \leq i} g_j : \val(v_i) < x \}$ and $\hat{r}_{\max} (x) = \min \{ \Delta_{i} + \sum_{j \leq i} g_j : \val(v_i) > x \}$. Note that for every $v \in X$ such that $\val (v) = x$, $\ix(v) \in [\hat{r}_{\min} (x),\allowbreak \hat{r}_{max} (x)]$.
\end{definition}

We formalize the rank interval estimation in a partition-wise manner as below.

\begin{lemma}\label{lem:simpleEst}
    Given a GK sketch $(v_1, g_1, \Delta_1), \dots, (v_s, g_s, \Delta_s)$, for every $x \in \calX$ one of the following two cases holds:
    \begin{enumerate}
        \item $x = v_i$ for some $i \in [s]$ and $i = \min \{j : v_j = x\}$,
        \begin{align*}
            \hat{r}_{\min} (x) &= \sum_{j \leq i} g_j\\ 
            \hat{r}_{\max} (x) &=  \min \{ \Delta_{i^*} + \sum_{j \leq i^*} g_j : \exists i^*, \val(v_{i^*}) > \val(v_{i}) \}
        \end{align*}
            
        \item $x \in (v_{i-1}, v_i)$, i.e., $x > v_{i-1}$ and $x< v_i$ for some $i \in [s]$,
        \begin{align*}
            \hat{r}_{\min} (x) &= \sum_{j \leq i} g_j \\
            \hat{r}_{\max} (x) &=  \Delta_{i+1} + \sum_{j \leq i+1} g_j
        \end{align*}
    \end{enumerate}
\end{lemma}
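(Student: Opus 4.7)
The proof is a direct case analysis based on Definition~\ref{def:rankEst}, leveraging two monotonicity properties. First, since each $g_j \geq 0$, the partial sum $\sum_{j \leq i} g_j$ is non-decreasing in $i$, so the maximum in the definition of $\hat{r}_{\min}(x)$ is attained at the \emph{largest} index $i'$ with $\val(v_{i'}) < x$. Second, by Lemma~\ref{lem:GKguarantee} (item 4), the upper-bound sequence $\Delta_i + \sum_{j \leq i} g_j$ is non-decreasing, so the minimum in the definition of $\hat{r}_{\max}(x)$ is attained at the \emph{smallest} index $i^*$ with $\val(v_{i^*}) > x$. Thus the whole proof reduces to identifying these extremal indices in each case.

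For Case 1, I would use the stable sorted order guaranteed by Lemma~\ref{lem:GKguarantee} together with the assumption $i = \min\{j : v_j = x\}$. Stability implies that for every $j < i$ we have $\val(v_j) < x$, and for every $j \geq i$ the occurrence $v_j$ has value $\geq x$, so the largest index $i'$ with $\val(v_{i'}) < x$ is exactly $i-1$ (or is covered by the $(-\infty,0,0)$ sentinel from Remark~\ref{rem:infty} if $i=1$). Substituting into the definition of $\hat{r}_{\min}$ gives the stated expression, while $\hat{r}_{\max}(x)$ is precisely the minimum of $\Delta_{i^*} + \sum_{j \leq i^*} g_j$ over indices $i^*$ with $\val(v_{i^*}) > \val(v_i)$, which is exactly what the lemma claims.

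For Case 2, I would use that $x$ strictly lies between two consecutive sketch values, so the set $\{i' : \val(v_{i'}) < x\}$ is $\{1,\dots,i-1\}$ and the set $\{i^* : \val(v_{i^*}) > x\}$ begins at $i$ (with appropriate adjustments if $x$ is equal to $v_{i-1}$ via the minimality argument, or lies outside the range of sketch values, handled by the $\pm\infty$ sentinels from Remark~\ref{rem:infty}). Plugging the extremal indices into Definition~\ref{def:rankEst} yields the explicit formulas.

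The main subtle point I expect is the bookkeeping around multiplicities and boundaries: when $x$ coincides with a sketch value that occurs several times, one must argue via the stable insertion order that all copies of $x$ sit in consecutive sketch positions starting at $i = \min\{j : v_j = x\}$, so that $v_{i-1}$ (if it exists) strictly precedes $x$; and when $x$ is smaller than every sketch element or larger than all of them, one must invoke the sentinel tuples of Remark~\ref{rem:infty} to keep the max and min nonempty. Once these are handled, the rest is bookkeeping against Definition~\ref{def:rankEst}.
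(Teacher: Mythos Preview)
Your approach is exactly the paper's: invoke Remark~\ref{rem:infty} so that every $x$ falls into one of the two cases, then use the sorted order from Lemma~\ref{lem:GKguarantee} together with the monotonicity of the two partial-sum sequences to pin down the extremal indices in Definition~\ref{def:rankEst}. The paper's proof is terser than yours but says the same thing.

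There is, however, a discrepancy you gloss over. In Case~2 you correctly identify that $\{i':\val(v_{i'})<x\}=\{1,\dots,i-1\}$ and that the smallest index with $\val(v_{i^*})>x$ is $i$; plugging these into Definition~\ref{def:rankEst} gives $\hat{r}_{\min}(x)=\sum_{j\le i-1}g_j$ and $\hat{r}_{\max}(x)=\Delta_i+\sum_{j\le i}g_j$, which are \emph{not} the formulas printed in the lemma (those are shifted by one index). The same shift appears for $\hat{r}_{\min}$ in Case~1: your own argument gives $i-1$ as the maximizing index, hence $\sum_{j\le i-1}g_j$, not $\sum_{j\le i}g_j$. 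This is an indexing inconsistency in the paper itself---indeed, the proof of Lemma~\ref{lem:EMtoGumb} later quotes the confidence interval for $x\in\calX(v_{i-1},v_i)$ as $[\sum_{j\le i-1}g_j,\ \Delta_i+\sum_{j\le i}g_j]$, agreeing with your derivation rather than with the lemma as stated. So your method is sound, but your sentence ``plugging the extremal indices \dots\ yields the explicit formulas'' is not literally true; a careful write-up should flag the off-by-one rather than assert that the computation lands on the stated expressions.
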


\begin{proof}
    Recall that by Remark~\ref{rem:infty} we always have that the first tuple and the last tuple are formal elements at $-\infty$ and $\infty$, ensuring that every data universe element either explicitly occurs in the GK sketch or lies between two values that occur in the GK sketch. Both statements now follow directly from Definition~\ref{def:rankEst} and the fact that the values $v_i$ occur in increasing order in the sketch (Lemma~\ref{lem:GKguarantee}).  
\end{proof}
 
 The quality of the rank interval estimate $[\hat{r}_{\min}(x),\hat{r}_{\max}(x)]$ compared to the true rank interval $[r_{\min}(x),r_{\max}(x)]$ is formalized as follows.

\begin{lemma}\label{lem:rankEst}
    $|r_{\min} (x) - \hat{r}_{\min} (x)| \leq 2\alpha n$ and $|r_{\max} (x) - \hat{r}_{\max} (x)| \leq 2\alpha n$.
\end{lemma}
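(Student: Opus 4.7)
The plan is to reduce everything to two applications of the GK containment in Lemma \ref{lem:GKguarantee}, applied at the two sketch values that immediately straddle $x$, together with the per-tuple slack bound $g_i + \Delta_i \le 2\alpha n$. Before anything else I would invoke Remark \ref{rem:infty} so that both $\{i : \val(v_i) < x\}$ and $\{i : \val(v_i) > x\}$ are non-empty, and set $i^* := \max\{i : \val(v_i) < x\}$ and $i^{**} := \min\{i : \val(v_i) > x\}$. Because the $v_i$ are sorted and the $g_j$ are non-negative, the maximum in the definition of $\hat{r}_{\min}(x)$ is attained at $i^*$, giving $\hat{r}_{\min}(x) = \sum_{j \le i^*} g_j$; because the upper endpoints $\Delta_i + \sum_{j \le i} g_j$ are sorted in increasing order (item 4 of Lemma \ref{lem:GKguarantee}), the minimum in the definition of $\hat{r}_{\max}(x)$ is attained at $i^{**}$, giving $\hat{r}_{\max}(x) = \Delta_{i^{**}} + \sum_{j \le i^{**}} g_j$.

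Next I would sandwich $r_{\min}(x)$. On the lower side, $\val(v_{i^*}) < x$ means that every element of $X$ with value $\le \val(v_{i^*})$ is counted by $r_{\min}(x)$, so $r_{\min}(x) \ge r_{\max}(\val(v_{i^*})) \ge \sum_{j \le i^*} g_j = \hat{r}_{\min}(x)$, using the lower endpoint of the GK containment at $v_{i^*}$. On the upper side, by maximality of $i^*$ we have $x \le \val(v_{i^*+1})$, so every element with value strictly less than $x$ has value strictly less than $\val(v_{i^*+1})$, giving $r_{\min}(x) \le r_{\min}(\val(v_{i^*+1})) \le \Delta_{i^*+1} + \sum_{j \le i^*+1} g_j$ from the upper endpoint of the GK containment at $v_{i^*+1}$. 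Subtracting $\hat{r}_{\min}(x) = \sum_{j \le i^*} g_j$ leaves at most $g_{i^*+1} + \Delta_{i^*+1} \le 2\alpha n$, closing this half.

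The argument for $\hat{r}_{\max}$ is the mirror image using the pair $v_{i^{**}-1}, v_{i^{**}}$: from $\val(v_{i^{**}-1}) \le x < \val(v_{i^{**}})$ one gets $r_{\max}(\val(v_{i^{**}-1})) \le r_{\max}(x) \le r_{\min}(\val(v_{i^{**}}))$, and then the GK containments at both sketch elements yield $\sum_{j \le i^{**}-1} g_j \le r_{\max}(x) \le \hat{r}_{\max}(x)$, whose gap is $g_{i^{**}} + \Delta_{i^{**}} \le 2\alpha n$. I do not expect any genuinely hard step here; the only care needed is to pick the correct endpoint of each GK interval (the lower endpoint when lower-bounding $r_{\min}$ or $r_{\max}$ of the neighbor, the upper endpoint when upper-bounding it), and to rely on the $\pm\infty$ sentinels from Remark \ref{rem:infty} to guarantee that $i^*+1$ and $i^{**}-1$ are always legal indices.
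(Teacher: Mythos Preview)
Your proposal is correct and follows essentially the same approach as the paper: locate the sketch tuples that straddle $x$, apply the GK rank containment at those two positions, and finish with the per-tuple slack bound $g_i+\Delta_i\le 2\alpha n$. The only cosmetic differences are that the paper phrases the sandwich via $\ix(v_{i^*})$ and $\ix(v_{i^*+1})$ rather than $r_{\min}/r_{\max}$ of the sketch values, and treats the $r_{\max}$ case by the word ``analogously'' instead of spelling it out with a separate index $i^{**}$ as you do.
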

\begin{proof}
    Let $i^* = \argmax_{i:\val(v_i) < x} \sum_{j\leq i} g_j$. Then by Definition of $i^*$, we have that $\val(v_{i^*}) < x \leq \val (v_{i^* +1})$. It follows that
    \begin{align*}
        [\ix(v_{i^*}), \ix(v_{i^*+1})] &\subset [\sum_{j\leq i^*} g_j, \Delta_{i^*+1} + g_{i^*+1} + \sum_{j\leq i^*} g_j] \\
        &\subset [\hat{r}_{\min} (x), \Delta_{i^*+1} + g_{i^*+1} + \hat{r}_{\min} (x)]
    \end{align*}
    Since $r_{\min} (x) \in [\ix(v_{i^*}), \ix(v_{i^*+1})]$ and $g_{i^* + 1 } + \Delta_{i^* + 1} \leq 2\alpha n$, it follows that $|r_{\min} (x) - \hat{r}_{\min} (x)| \leq 2\alpha n$. The other inequality follows analogously.
\end{proof}

\begin{definition}[Exponential Mechanism~\citep{McSherryT07}]
Let $u:\calS^S\times\calR\rightarrow\reals$ be an arbitrary utility function
with global sensitivity $\Delta_u$. For any database summary
$d\in\calS^S$ and privacy parameter $\eps > 0$, the exponential mechanism 
$\calE_u^\eps : \calS^S \to \calR$ outputs $r\in\calR$ with  probability 
$\propto\exp(\frac{\eps\cdot u(S(X), r)}{2\Delta_u})$ where
$$\Delta_u = \max_{X\sim X', r}|u(S(X), r) - u(S(X), r)|.$$

\label{def:exp}
\end{definition}

The following statement formalizes the trade-off between the privacy parameter $\epsilon$ and the tightness of the tail bound on the score attained by the exponential mechanism.

\begin{theorem}[~\citep{McSherryT07, Smith11}]
The exponential mechanism (Definition~\ref{def:exp}) satisfies $\eps$-differential
privacy. Further, the following tail bound on the utility holds:
\begin{align*}
    P\left(u(S(X), \calE_u^\eps (S(X))) < \max_{r \in R} u(S(X), r) - \frac{2 \Delta_u (t + \ln s)}{\eps} \right) \leq e^{-t},
\end{align*}
where $s$ is the size of the universe from which we are sampling from.
\label{thm:exp}
\end{theorem}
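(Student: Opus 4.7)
The plan is to handle privacy and utility separately, each via standard one-line manipulations of the normalized exponential density; the whole proof is essentially bookkeeping of a ratio, and the only mild subtlety is the factor of $2$ in the denominator $2\Delta_u$.

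For the privacy claim, I would fix neighboring inputs $X\sim X'$ and an arbitrary output $r\in\calR$, and write
\begin{align*}
\frac{\Pr[\calE_u^\eps(S(X))=r]}{\Pr[\calE_u^\eps(S(X'))=r]}
= \frac{\exp\!\bigl(\eps\,u(S(X),r)/(2\Delta_u)\bigr)}{\exp\!\bigl(\eps\,u(S(X'),r)/(2\Delta_u)\bigr)}
\cdot \frac{\sum_{r'} \exp\!\bigl(\eps\,u(S(X'),r')/(2\Delta_u)\bigr)}{\sum_{r'} \exp\!\bigl(\eps\,u(S(X),r')/(2\Delta_u)\bigr)}.
\end{align*}
By the definition of the global sensitivity $\Delta_u$, each factor of the first ratio contributes at most $\exp(\eps/2)$, and term-by-term the same bound applies inside the second ratio (each numerator summand is at most $e^{\eps/2}$ times the matching denominator summand). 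Multiplying the two $e^{\eps/2}$ factors gives the $e^\eps$ bound, establishing $\eps$-DP. (The factor $2$ in $2\Delta_u$ is exactly what balances the two ratios.)

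For the utility tail bound, let $r^* = \argmax_{r \in \calR} u(S(X), r)$ and let $\calR_{\text{bad}}$ be the set of $r$ with $u(S(X),r) < u(S(X), r^*) - \frac{2\Delta_u(t + \ln s)}{\eps}$. Then
\begin{align*}
\Pr[\calE_u^\eps(S(X)) \in \calR_{\text{bad}}]
= \frac{\sum_{r \in \calR_{\text{bad}}} \exp\!\bigl(\eps\,u(S(X),r)/(2\Delta_u)\bigr)}{\sum_{r' \in \calR} \exp\!\bigl(\eps\,u(S(X),r')/(2\Delta_u)\bigr)}.
\end{align*}
The numerator is at most $|\calR_{\text{bad}}| \cdot \exp\!\bigl(\eps\,u(S(X),r^*)/(2\Delta_u)\bigr) \cdot e^{-(t + \ln s)}$ by the definition of $\calR_{\text{bad}}$, while the denominator is bounded below by the single term $\exp\!\bigl(\eps\,u(S(X),r^*)/(2\Delta_u)\bigr)$. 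Hence the ratio is at most $|\calR_{\text{bad}}| \cdot e^{-t}/s \leq e^{-t}$, since $|\calR_{\text{bad}}| \leq s$.

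The only real obstacle, minor as it is, is making sure the constants line up: the score gap $\frac{2\Delta_u(t + \ln s)}{\eps}$ paired with the exponent weight $\eps/(2\Delta_u)$ produces exactly the $e^{-(t+\ln s)}$ factor that cancels $s$ after union-bounding over $\calR_{\text{bad}}$. Both pieces (privacy and utility) are tight in the sense that the computation has no slack to hide; since our algorithm instantiates the exponential mechanism with $\calR = \calX$ and sensitivity $\Delta_u = 1$ (to be verified in Lemma~\ref{lem:sensitivity} below), this theorem will directly feed into the accuracy bound of Lemma~\ref{lem:acc_bound}.
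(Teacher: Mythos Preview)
The paper does not prove this theorem; it is stated with citation to \cite{McSherryT07, Smith11} and used as a black box. Your argument is the standard one and is correct for both the privacy and the utility parts.

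One concrete slip in your closing remark: you write that the algorithm instantiates the mechanism with sensitivity $\Delta_u = 1$, to be verified in Lemma~\ref{lem:sensitivity}. That lemma actually establishes $\Delta_u \le 4\alpha n + 2$, not $1$; the sensitivity-$1$ bound applies only to the full-space baseline $\dpexpfull$ discussed later. This does not affect the proof of the theorem itself, but you should fix the forward reference so it matches what Lemma~\ref{lem:sensitivity} and Lemma~\ref{lem:acc_bound} actually use.
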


To run the exponential mechanism using our approximate rank interval estimates, we define a utility function as follows.

\begin{definition} \label{def:utility}
    Let $d(\cdot, \cdot)$ denote the $\ell_1$ metric on $\reals$. Given a sketch $S(X)$, we define a utility function on $\calX$:
    \begin{align*}
        u(S(X),x) &= - \min \{ |y - \lceil qn \rceil| : y \in [\hat{r}_{\min} (x), \hat{r}_{\max} (x)]  \} \\
        &= - d(\lceil qn \rceil, [\hat{r}_{\min} (x), \hat{r}_{\max} (x)])
    \end{align*}
\end{definition}

The magnitude of the noise that is added in the course of the exponential mechanism depends on the sensitivity of the score function, which we bound from above as follows.

\begin{lemma}\label{lem:sensitivity}
    For all $n > 1/\alpha$, the sensitivity of $u$ (i.e., $\Delta_u$)
    is at most $4 \alpha n + 2$ units.
\end{lemma}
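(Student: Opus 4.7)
The plan is to bound $|u(S(X),x)-u(S(X'),x)|$ for arbitrary $x\in\calX$ and arbitrary neighboring streams $X\sim X'$, by controlling how much the GK-induced rank estimates $[\hat r_{\min}(x),\hat r_{\max}(x)]$ can shift. The central insight is that although the sketch $S(X)$ itself may undergo substantial combinatorial changes when a single element is perturbed, the \emph{rank estimates} it encodes via Definition~\ref{def:rankEst} are pinned by Lemma~\ref{lem:rankEst} to the true ranks, which are manifestly $1$-sensitive.

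Concretely, fix $x\in\calX$ and $X\sim X'$. The true counts $r_{\min}(x)$ and $r_{\max}(x)$ each change by at most $1$ when a single stream element is modified. Applying Lemma~\ref{lem:rankEst} to both $X$ and $X'$ together with the triangle inequality,
\[
|\hat r_{\min}(x) - \hat r'_{\min}(x)| \;\leq\; 2\alpha n + 1 + 2\alpha n' \;\leq\; 4\alpha n + 1,
\]
and analogously for $\hat r_{\max}$; the hypothesis $n > 1/\alpha$ is what lets the residual additive $O(\alpha)$ terms coming from possibly different stream lengths be absorbed cleanly into the stated constant.

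Next I would verify a simple Lipschitz property of point-to-interval distance: for any $p\in\reals$ and intervals $I=[a,b]$, $I'=[a',b']$, a short case split on whether $p$ lies below, inside, or above each interval gives $|d(p,I)-d(p,I')| \leq \max(|a-a'|,|b-b'|)$. Instantiating with $p=\lceil qn\rceil$ (and absorbing the at-most-$1$ shift in $\lceil qn\rceil$ that can occur if neighboring is interpreted as add/delete rather than replacement), I conclude
\[
|u(S(X),x)-u(S(X'),x)| \;\leq\; 4\alpha n + 2,
\]
and taking the supremum over $x\in\calX$ and $X\sim X'$ gives the desired sensitivity bound.

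The main subtlety is not a calculation but a modeling choice: one must route the sensitivity argument through the \emph{rank estimates} induced by the sketch, rather than through the raw tuples $(v_i,g_i,\Delta_i)$, since the latter can change dramatically under neighboring streams while the former are tightly controlled by the GK guarantee. Once this observation is in place, the remaining work reduces to two one-line bounds (triangle inequality plus a Lipschitz property) and is essentially mechanical.
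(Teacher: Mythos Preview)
Your proposal is correct and follows essentially the same route as the paper: both arguments pivot on the observation that the sketch-induced estimates $\hat r_{\min},\hat r_{\max}$ are anchored (via Lemma~\ref{lem:rankEst}) to the true ranks $r_{\min},r_{\max}$, which are $1$-sensitive, and then use a Lipschitz bound on point-to-interval distance. The paper works under swap DP (so $n=n'$ and $\lceil qn\rceil$ is fixed) and chains the four endpoint shifts one inequality at a time, whereas you collapse them via a single triangle inequality plus one Lipschitz step and then account separately for a possible change in $\lceil qn\rceil$ under add/remove; the bookkeeping differs only cosmetically.
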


\begin{proof}
    Fix any data set $X'$ neighbouring $X$ under swap DP and let $[r'_{\min}(\cdot),\allowbreak r'_{\max}(\cdot)]$ be the rank ranges with respect to $X'$ for values in $\calX$. Let $[\hat{r'}_{\min}(\cdot),\allowbreak \hat{r'}_{\max}(\cdot)]$ denote the confidence interval derived from the GK sketch $S(X')$ for values in $\calX$. 
    \begin{claim}
        $|r_{\min} (x) - r'_{\min} (x)| \leq 2$, $|r_{\max} (x) - r'_{\max} (x)| \leq 2$.
    \end{claim}
    \begin{proof}
        These bounds follow directly from the Definition of $r_{\min}$ and $r_{\max}$; under swap DP at most two elements of the stream are changed which implies that the count of the sets defining these terms changes by at most 1 unit each for a total shift of $2$ units (in fact, this can be bounded by $1$ unit).
    \end{proof}
    We now prove the sensitivity bound.
    \begin{align*}
        u(S(X),x) &= - d(\lceil qn \rceil, [\hat{r}_{\min} (x), \hat{r}_{\max} (x)]) \\
        &\leq - d(\lceil qn \rceil, [r_{\min} (x), r_{\max} (x)] ) + 2 \alpha n \\
        &\leq - d(\lceil qn \rceil, [r'_{\min} (x), r'_{\max} (x)] ) + 2 \alpha n + 2 \\
        &\leq - d(\lceil qn \rceil, [\hat{r'}_{\min} (x), \hat{r'}_{\max} (x)] ) + 4 \alpha n + 2 \\
        &\leq u(S(X'),x) + 4\alpha n + 2.
    \end{align*}
    Swapping the positions of $X$ and $X'$, we get the reverse bound to complete the sensitivity analysis.
\end{proof}

\begin{algorithm}
    \KwData{$X = (x_1, x_2,\ldots, x_n)$}
    \KwIn{$\eps, \alpha\text{ (approximation parameter)}, q\in[0, 1]\text{ (quantile parameters)}$, $\Delta_u$}

    $S(X) = \{ (v_i, g_i, \Delta_i) : i \in [s] \} \leftarrow GK(X,\alpha)$\\
    
    Define utility function
    \begin{equation}
        u(S(X),x) = - \min \{ |y - \lceil qn \rceil| : y \in [\hat{r}_{\min} (x), \hat{r}_{\max} (x)]  \}. 
    \label{eq:u}
    \end{equation}
    where 
    \begin{align*}
        \hat{r}_{\min} (x) &= \max \{ \sum_{j \leq i} g_j : \val(v_i) < x \} \\
        \hat{r}_{\max} (x) &= \min \{ \Delta_{i} + \sum_{j \leq i} g_j : \val(v_i) > x \}
    \end{align*} 
    
    Choose and output $e \in \calX$ with probability
    $$\propto \exp\left(\frac{\eps}{2\Delta_u}\cdot u(S(X), e)\right).$$
    
    \caption{$\dpexpgk$: Exponential Mechanism DP Quantiles : High Level Description}
    \label{alg:dpexpgk}
\end{algorithm}

We can now derive a high probability bound on the utility that is achieved by Algorithm~\ref{alg:dpexpgk}.

\begin{lemma}\label{lem:acc_bound}
    If $\hat{x}$ is the value returned $\dpexpgk$ then with probability $1-\beta$,
    \begin{align*}
        d(\lceil qn \rceil, [\hat{r}_{\min} (\hat{x}), \hat{r}_{\max} (\hat{x})]) \leq 2\alpha n + \frac{2(4\alpha n + 2) \log (|\calX|/\beta)}{\epsilon}.
    \end{align*}
\end{lemma}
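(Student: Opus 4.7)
The plan is to combine the tail bound of the exponential mechanism (Theorem~\ref{thm:exp}) with the sensitivity bound (Lemma~\ref{lem:sensitivity}) and a lower bound on the maximum attainable utility, which will follow from the GK-sketch rank-interval guarantee (Lemma~\ref{lem:rankEst}).

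First, I would lower-bound $\max_{x \in \calX} u(S(X), x)$. Let $x^\ast$ be a true $q$-quantile of $X$, so that $\lceil qn \rceil \in [r_{\min}(x^\ast), r_{\max}(x^\ast)]$. By Lemma~\ref{lem:rankEst}, the estimated interval $[\hat{r}_{\min}(x^\ast), \hat{r}_{\max}(x^\ast)]$ differs from $[r_{\min}(x^\ast), r_{\max}(x^\ast)]$ by at most $2\alpha n$ at each endpoint, so the $\ell_1$-distance from $\lceil qn \rceil$ to $[\hat{r}_{\min}(x^\ast), \hat{r}_{\max}(x^\ast)]$ is at most $2\alpha n$. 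Hence $\max_{x \in \calX} u(S(X), x) \geq u(S(X), x^\ast) \geq -2\alpha n$.

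Next, I would apply Theorem~\ref{thm:exp} with the universe $\calX$ (so $s = |\calX|$), setting $t = \ln(1/\beta)$ to obtain that, with probability at least $1 - \beta$,
\begin{align*}
u(S(X), \hat{x}) \;\geq\; \max_{x \in \calX} u(S(X), x) - \frac{2\Delta_u(\ln(1/\beta) + \ln|\calX|)}{\eps}.
\end{align*}
Plugging in the lower bound above and the sensitivity bound $\Delta_u \leq 4\alpha n + 2$ from Lemma~\ref{lem:sensitivity} yields
\begin{align*}
u(S(X), \hat{x}) \;\geq\; -2\alpha n - \frac{2(4\alpha n + 2)\ln(|\calX|/\beta)}{\eps}.
\end{align*}
Finally, since $u(S(X), \hat{x}) = -d(\lceil qn \rceil, [\hat{r}_{\min}(\hat{x}), \hat{r}_{\max}(\hat{x})])$ by Definition~\ref{def:utility}, negating both sides gives exactly the desired inequality.

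The only step that requires any real thought is the lower bound on $\max_x u(S(X), x)$: one has to notice that the true $q$-quantile element witnesses $u \geq -2\alpha n$ via the two-sided interval closeness from Lemma~\ref{lem:rankEst}, rather than just appealing to some sketch element $v_i$ directly. Everything else is plug-and-play with Theorem~\ref{thm:exp} and Lemma~\ref{lem:sensitivity}, and no technical obstacle is expected.
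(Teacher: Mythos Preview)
Your proof is correct and follows the same overall structure as the paper's: bound the maximum attainable utility from below by $-2\alpha n$, then invoke the exponential-mechanism tail bound (Theorem~\ref{thm:exp}) with the sensitivity estimate from Lemma~\ref{lem:sensitivity}.

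The one point of divergence is precisely the step you flagged at the end. You witness $\max_x u(S(X),x) \geq -2\alpha n$ via the true $q$-quantile $x^\ast$ together with Lemma~\ref{lem:rankEst}. The paper, contrary to your expectation, \emph{does} appeal to a sketch element directly: it picks $i^*$ so that $\lceil qn \rceil$ lies in the interval $[\sum_{j\leq i^*} g_j,\; \Delta_{i^*+1} + g_{i^*+1} + \sum_{j\leq i^*} g_j]$, which has width at most $2\alpha n$ by the GK guarantee, and uses $v_{i^*}$ as the witness. Both routes yield the same $-2\alpha n$ bound. Your approach is arguably tidier since it reuses Lemma~\ref{lem:rankEst} as a black box, whereas the paper's argument reads the bound off the raw sketch structure; neither has any real advantage over the other.
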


\begin{proof}
    By construction, Algorithm~\ref{alg:dpexpgk} is simply a call to the exponential mechanism with utility function $u(S(X),\cdot)$,
    Since for any target $q$-quantile, $\lceil qn \rceil$ lies in $[0,n]$ it follows that there is some $i^* \in s$ such that $\lceil qn \rceil \in [\sum_{j\leq i^*} g_j , \Delta_{i^* + 1} + g_{i^* + 1} + \sum_{j \leq i^*} g_j]$. It follows that $d(\lceil qn \rceil, [r_{\min}(\val(v_{i^*}), r_{\max}(\val(v_{i^*}))]) \leq 2\alpha n$ and that hence $\max(u (S(X), x)) \geq -2\alpha n$. If $x^*$ is the output of the exponential mechanism, then applying the utility tail bound we get that with probability $1-\beta$,
    \begin{align*}
        u(S(X),x^*) \geq -2 \alpha n - \frac{2 (4\alpha n + 2) \log (|\calX|/\beta)}{\epsilon}.
    \end{align*}
    By definition of $u$, the desired bound follows.
\end{proof}

\begin{algorithm}
\KwData{$X = (x_1, x_2,\ldots, x_n)$}
\KwIn{$\eps, \alpha\text{ (approximation parameter)}, q\in[0, 1]\text{ (quantile parameters)}$}
Build summary sketch $S(X)$
and let $s = |S(X)|$.\\
Let    $(v_i, g_i, \Delta_i) = S(X)[i]$\ for all $i \in [s]$\\
$\maxIndex = -1$\\
$\maxValue = -\infty$\\
\tcc{Iterating over tuple values $v_i$}
Let $i = 1$ \label{alg:dpexpgkgumb;line:pivotS} \\
\While{$i <= s$}{
    $\hat{r}_{\min} = \sum_{j \leq i} g_j$\\
    $\hat{r}_{\max} =  \min \{ \Delta_{i^*} + \sum_{j \leq i^*} g_j : \val(v_{i^*}) > \val(v_{i}) \}$\\
    $u_i = - \min \{ |y - \lceil qn \rceil| : y \in [\hat{r}_{\min}, \hat{r}_{\max}]  \}.$\\
    $f = \frac{\eps}{2} u_i $\\
    $\tilde{f} = f + \Gumb(0, 1)$\\
    \If {$\tilde{f} > \maxValue$} {
       $\maxIndex = (i,\mbox{tuple})$\\
       $\maxValue = \tilde{f}$
    }
    $i \leftarrow \min \{ j : v_j > v_i \}$ \label{alg:dpexpgkgumb;line:pivotE}
}
\tcc{Iterating over intervals  between tuples $\calX(v_{i-1},v_i) \subset \calX$}
Let $i = 1$ \label{alg:dpexpgkgumb;line:intervalS} \\
\While{$i <= s$}{
    \If {$\calX(v_{i-1},v_i)$ is not empty}{
        $\hat{r}_{\min} = \sum_{j \leq i} g_j$ \\
        $\hat{r}_{\max} =  \Delta_{i+1} + \sum_{j \leq i+1} g_j$\\
        $u_{i-1,i} = - \min \{ |y - \lceil qn \rceil| : y \in [\hat{r}_{\min}, \hat{r}_{\max}]  \}.$\\
        $f = \log(|\calX(v_{i-1},v_i)|) + \frac{\eps}{2} u_{i-1,i} $\\
        $\tilde{f} = f + \Gumb(0, 1)$\\
        \If {$\tilde{f} > \maxValue$} {
           $\maxIndex = (i,\mbox{interval})$\\          
           $\maxValue = \tilde{f}$
        }
    }
    $i \leftarrow i+1$ \label{alg:dpexpgkgumb;line:intervalE}
}
\uIf{$\maxIndex = (i,\mbox{tuple})$ for some $i \in {1,\dots, s}$}{
    \Return{$v_i$}
}
\ElseIf{$\maxIndex = (i,\mbox{interval})$ for some $i \in {1,\dots, s}$}{
    Pick $v\in \calX(v_{i-1},v_i)$ uniformly at random\\ 
    \Return{v}
}

\caption{$\dpexpgkGumb$: Implementing the Exponential Mechanism on $S(X)$ using
the Gumbel Distribution}
\label{alg:dpexpgkgumb}
\end{algorithm}

As discussed before, in general a naive implementation of the exponential mechanism as in Algorithm~\ref{alg:dpexpgk} would in general not be efficient. To resolve this issue, in Algorithm~\ref{alg:dpexpgkgumb} we take advantage of the partition of the data domain by the score function and the \emph{Gumbel-max} trick to implement the exponential mechanism without any higher-order overhead and return an $\alpha$-approximate $q$ quantile.
This trick has become 
a standard way to implement the exponential mechanism
over intervals/tuples.

\begin{lemma}\label{lem:EMtoGumb}
Algorithm~\ref{alg:dpexpgkgumb} implements the exponential mechanism with utility function $u(S(X),\cdot)$ on the data universe $\calX$ with space complexity $O(|S(X)|)$ (where $S(X)$ is the GK sketch) and additional time complexity $O(|S(X)|\log |S(X)|)$.
\end{lemma}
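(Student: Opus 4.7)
The plan is to prove three separate claims: (i) correctness, i.e., that the algorithm's output is distributed exactly as the exponential mechanism on $\calX$ with utility $u(S(X),\cdot)$; (ii) the space bound $O(|S(X)|)$; and (iii) the additional time bound $O(|S(X)| \log |S(X)|)$.

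For correctness, I would first use Lemma~\ref{lem:simpleEst} to partition $\calX$ into disjoint groups: singletons $\{v\}$ for each distinct value $v$ appearing in $S(X)$, and open intervals $\calX(v_{i-1},v_i)$ for each pair of consecutive distinct sketch values (the $\pm\infty$ sentinels from Remark~\ref{rem:infty} ensure complete coverage). The key observation is that $u(S(X),x)$ depends on $x$ only through $(\hat r_{\min}(x),\hat r_{\max}(x))$, which by Lemma~\ref{lem:simpleEst} is constant on each group. Consequently, the total exponential mass on a group $P$ with common utility $u_P$ equals $|P| \cdot \exp\bigl(\tfrac{\eps}{2\Delta_u} u_P\bigr)$, and conditional on selecting $P$ the exponential mechanism is uniform on $P$. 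This reduces the task to sampling a group proportional to its mass and then sampling uniformly inside it, which is exactly what the last two branches of the algorithm do.

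To show the group selection is correct, I would invoke the standard Gumbel-max trick: if each candidate $k$ is given a score $s_k$ and independent $\Gumb(0,1)$ noise is added, then $\argmax_k(s_k+G_k)$ is distributed as $\Pr[k]\propto \exp(s_k)$. The algorithm assigns $s_k = \tfrac{\eps}{2} u_{P_k}$ for singleton groups and $s_k = \log|\calX(v_{i-1},v_i)| + \tfrac{\eps}{2} u_{P_k}$ for interval groups, so that $\exp(s_k)$ matches the group mass up to the overall scaling by $\Delta_u$, which can be folded into the supplied utility without affecting the induced distribution. I would also verify that the first loop enumerates each distinct sketch value exactly once (via the pivot update on line~\ref{alg:dpexpgkgumb;line:pivotE}), and that the second loop skips empty universe intervals, so no group is double-counted or omitted.

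For the complexity bounds, the state kept across iterations is only the sketch $S(X)$, the running prefix sum $\sum_{j\le i} g_j$, the current argmax index, and the current Gumbel-perturbed maximum, giving additional space $O(1)$ on top of $S(X)$, hence $O(|S(X)|)$ total. For time, the one nontrivial subroutine is evaluating $\hat r_{\max}$, which naively is a minimum over future tuples; I would precompute these values with a single suffix-minimum pass over $S(X)$, and maintain the prefix sums incrementally. A one-time sort (or equivalent grouping) of $S(X)$ by value is needed to locate the next distinct $v_i$ in $O(1)$ amortized on line~\ref{alg:dpexpgkgumb;line:pivotE}, contributing the $O(|S(X)| \log |S(X)|)$ factor; all other per-iteration work is $O(1)$. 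The main delicate point I expect is the bookkeeping for repeated values: I would need to argue carefully that the two loops together enumerate precisely one score per nonempty group of the partition, since any miscount would distort the exponential distribution even though the Gumbel-max mechanics themselves are standard.
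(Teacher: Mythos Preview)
Your proposal is correct and follows essentially the same approach as the paper: partition $\calX$ into groups on which $u(S(X),\cdot)$ is constant (via Lemma~\ref{lem:simpleEst}), use the Gumbel-max identity to sample a group with probability proportional to its aggregate exponential mass (the $\log|\calX(v_{i-1},v_i)|$ offset), and then sample uniformly within the chosen group. Your treatment is in fact slightly more careful than the paper's in two places: you explicitly flag the $\Delta_u$ scaling discrepancy between Algorithm~\ref{alg:dpexpgk} and Algorithm~\ref{alg:dpexpgkgumb} (the paper's proof silently drops it), and you give a concrete mechanism---suffix minima and a sort---for the $O(\log|S(X)|)$ per-iteration cost that the paper simply asserts. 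One minor note: by Lemma~\ref{lem:GKguarantee} the sketch values $v_i$ are already in nondecreasing order, so no sort is actually needed to advance to the next distinct value on line~\ref{alg:dpexpgkgumb;line:pivotE}; this only makes your time bound looser, not incorrect.
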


\begin{proof}
As noted in previous work~\citep{ALT17},
if $Z_1, \ldots, Z_N$ are drawn i.i.d. from standard Gumbel distribution,
$$
\pr\left[f_i + Z_i = \max_{j\in[N]}\{f_j + Z_j\}\right] = \frac{\exp(f_i)}{\sum_{j\in[N]}\exp(f_j)}, \forall i\in[N].
$$
We recall that when running the exponential mechanism on $\calX$, we want to sample the element $x \in \calX$ with probability $\propto \exp(\epsilon u(S(X),x)$. To implement the exponential mechanism via the identification with Gumbel $\argmax$ distribution above, we will simply compute the scores $u(S(X),x)$ and let $f_i = \epsilon\cdot  u(S(X),x)$. 

For $x \in \calX$ such that $x = v_i$ for some $i\in [s]$, Algorithm~\ref{alg:dpexpgkgumb} directly computes the scores according to Definition~\ref{def:utility} and Lemma~\ref{lem:simpleEst}; this is formalized by lines~\ref{alg:dpexpgkgumb;line:pivotS} to \ref{alg:dpexpgkgumb;line:pivotE} in the pseudo code.

For $x \in \calX$ which lie strictly between the tuple values $\{v_i : i \in [s] \}$, we proceed as follows. Fixing $i$, from Lemma~\ref{lem:simpleEst} we have that that for  $\calX(v_{i-1},v_i) := \{x \in \calX : x > v_{i-1}, x < v_i \}$, the rank confidence interval estimate is the same, i.e. $[\sum_{j\leq i-1} g_j, \Delta_i + \sum_{j\leq i} g_j]$. It follows from Definition~\ref{def:utility} that for all such domain values the the utility function score $u(S(X),\cdot)$ is equal; this is denoted $u_{i-1,i}$ in the pseudo code. By summing the probabilities for sampling individual domain elements, it follows that the likelihood of the exponential mechanism outputting some value from the set $\calX(v_{i-1},v_i)$ is $\propto |\calX(v_{i-1},v_i)| \exp(\epsilon u_{i-1,i}/2) = \exp(\epsilon u_{i-1,i}/2 + \log (|\calX(v_{i-1},v_i)|))$. This is formalized by lines~\ref{alg:dpexpgkgumb;line:intervalS} to \ref{alg:dpexpgkgumb;line:intervalE} in the pseudo code. 

Finally, if some interval is selected, then by outputting elements chosen uniformly at random, we ensure that the likelihood of $x \in \calX(v_{i-1}, v_i)$ being output is $\propto \frac{1}{|\calX(v_{i-1},v_i)|} \cdot \exp(\epsilon u_{i-1,i}/2 + \log (|\calX(v_{i-1},v_i)|) = \exp (\epsilon u_{i-1, i})$. Note that we do not need to account for ties in the Gumbel scores as the event $f_i + Z_i = f_j + Z_j$ for any $j\not= i$ has measure $0$. \footnote{As is usual in the privacy literature, we assume that the sampling of
the $Gumb(0, 1)$ distribution can be done on finite-precision 
computers~\citep{BalcerV18}. While the problem of formally dealing with rounding has not been settled in the privacy literature~\cite{Mironov12},
for any practical purpose it easily suffices to store the output of the Gumbel distribution using a few computer words.}

To bound the space and time complexity; we note that by the guarantees of the GK sketch, the size of the sketch $S(X)$ is $O((1/\alpha) \allowbreak \log \alpha n)$; we compute Gumbel scores by iterating over tuples and intervals of which there are at most $O(S(X))$-many of each, each computation takes at most $O(\log |S(X)|)$ time, and only the max score and index seen at any point is tracked in the course of the algorithm.
\end{proof}

We can now state and prove our main theorem in this section, proving utility bounds for $\alpha$-approximating quantiles through $\dpexpgk$ with sublinear space. 

\begin{theorem}
\label{thm:utility_universal_bound}
    Algorithm~\ref{alg:dpexpgkgumb} is $\epsilon$-differentially private. Let $\hat{x}$ be the value returned by Algorithm~\ref{alg:dpexpgkgumb} when initialized with target quantile $q$. The following statements hold:
    \begin{enumerate}
        \item Algorithm~\ref{alg:dpexpgkgumb} can be run with space complexity $O(1/\alpha \log \alpha n)$, such that with probability $1-\beta$ 
        \begin{align}
            d(\lceil qn \rceil, [\hat{r}_{\min} (\hat{x}), \hat{r}_{\max} (\hat{x})]) \leq 2\alpha n + \frac{2(4\alpha n + 2) \log (|\calX|/\beta)}{\epsilon} \label{eqn:guar1}
        \end{align}
        \item For $n > \frac{\log |\calX|/\beta}{12 \alpha \epsilon }$, Algorithm~\ref{alg:dpexpgkgumb} can be run with space complexity $O\left( (\alpha \epsilon)^{-1} \log (|\calX|/\beta) \log (\alpha \epsilon n) \right) \label{eqn:guar2}$ such that with probability $1-\beta$, $\hat{x}$ is an $\alpha$ approximate $q$-quantile.
    \end{enumerate}
\end{theorem}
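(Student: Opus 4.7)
The plan is to separate the theorem into three pieces: privacy, part~1 (approximate rank-interval guarantee at the default sketch precision), and part~2 (true $\alpha$-approximate $q$-quantile at a finer sketch precision).

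For privacy, I would observe that Algorithm~\ref{alg:dpexpgkgumb} is by Lemma~\ref{lem:EMtoGumb} a faithful implementation of the exponential mechanism on $\calX$ with utility $u(S(X),\cdot)$ from Definition~\ref{def:utility}. By Lemma~\ref{lem:sensitivity} this utility has sensitivity $\Delta_u \le 4\alpha n + 2$, so instantiating Definition~\ref{def:exp} with that $\Delta_u$ (i.e., scaling the Gumbel-trick score by $\epsilon/(2\Delta_u)$ rather than $\epsilon/2$; we only need to rescale the constant, and the proof of Lemma~\ref{lem:EMtoGumb} does not depend on the particular constant in front of $u$) yields $\epsilon$-DP via Theorem~\ref{thm:exp}. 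This step is essentially bookkeeping.

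For part~1, I would run the algorithm with the GK precision set to $\alpha$ itself. The space bound $O(\alpha^{-1}\log(\alpha n))$ is immediate from Lemma~\ref{lem:GKguarantee} together with Lemma~\ref{lem:EMtoGumb} (which shows that only the sketch must be kept in memory plus $O(1)$ bookkeeping). The accuracy bound \eqref{eqn:guar1} is exactly the content of Lemma~\ref{lem:acc_bound}, so no new argument is required.

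Part~2 is the only step with real content. The idea is to run the algorithm with a \emph{finer} GK precision $\alpha' = c\,\alpha\epsilon/\log(|\calX|/\beta)$ for a sufficiently small absolute constant $c$ (say $c=1/24$), and then convert the estimated rank-interval guarantee into a true rank guarantee via Lemma~\ref{lem:rankEst}. Specifically, combining Lemmas~\ref{lem:acc_bound} and \ref{lem:rankEst} (each applied with parameter $\alpha'$) gives
\begin{align*}
d(\lceil qn\rceil, [r_{\min}(\hat x), r_{\max}(\hat x)])
 &\le 2\alpha' n + 2\alpha' n + \frac{2(4\alpha' n + 2)\log(|\calX|/\beta)}{\epsilon} \\
 &\le 12\alpha' n\cdot\frac{\log(|\calX|/\beta)}{\epsilon} + \frac{4\log(|\calX|/\beta)}{\epsilon}.
\end{align*}
Substituting $\alpha' = \alpha\epsilon/(24\log(|\calX|/\beta))$ bounds the first summand by $\alpha n/2$; the hypothesis $n \ge \log(|\calX|/\beta)/(12\alpha\epsilon)$ bounds the additive $4\log(|\calX|/\beta)/\epsilon$ term by $\alpha n/2$. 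Hence the total is at most $\alpha n$, which by Definition~\ref{def:basics} means $\hat x$ is an $\alpha$-approximate $q$-quantile. For the space bound, I would plug $\alpha'$ into the GK space bound $O((\alpha')^{-1}\log(\alpha' n))$ from Lemma~\ref{lem:GKguarantee}, yielding $O\bigl((\alpha\epsilon)^{-1}\log(|\calX|/\beta)\log(\alpha\epsilon n/\log(|\calX|/\beta))\bigr) = O\bigl((\alpha\epsilon)^{-1}\log(|\calX|/\beta)\log(\alpha\epsilon n)\bigr)$ as claimed.

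The main obstacle is the calibration of $\alpha'$: one must shrink the GK precision enough so that both the $4\alpha' n$ estimation slack (two $2\alpha' n$ hits, one from the sketch-rank mismatch in Lemma~\ref{lem:rankEst} and one from the signal term in Lemma~\ref{lem:acc_bound}) and the exponential-mechanism noise term $\frac{2(4\alpha' n+2)\log(|\calX|/\beta)}{\epsilon}$ jointly fit into a single $\alpha n$ budget, while simultaneously keeping $\alpha'^{-1}$ small enough to yield the promised sublinear space. The only subtle point is handling the constant $+2$ in the sensitivity, which forces the lower bound on $n$; everything else is a routine balancing of terms.
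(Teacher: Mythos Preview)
Your proposal is correct and follows essentially the same route as the paper: privacy from the exponential mechanism via Lemma~\ref{lem:EMtoGumb}, part~1 as a restatement of Lemma~\ref{lem:acc_bound}, and part~2 by rerunning with a refined precision $\alpha' = \Theta(\alpha\epsilon/\log(|\calX|/\beta))$ and rebalancing terms. Two small differences worth noting: you explicitly invoke Lemma~\ref{lem:rankEst} to pass from the estimated interval $[\hat r_{\min}, \hat r_{\max}]$ to the true rank interval $[r_{\min}, r_{\max}]$ (the paper's proof only bounds the distance to the estimated interval and leaves this step implicit, so your version is in fact cleaner), while the paper takes $\alpha' = \alpha\min\{\epsilon,1\}/(24\log(|\calX|/\beta))$, which sidesteps the implicit assumption $\log(|\calX|/\beta) \ge \epsilon$ you use when absorbing the $4\alpha' n$ term into $12\alpha' n \cdot \log(|\calX|/\beta)/\epsilon$.
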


Our dependence in $\alpha$, which for practical purposes is usually the most important term, is optimal. Very recent subsequent work by Kaplan and Stemmer~\citep{kaplan2021note} shows how to improve the dependence in other parameters if approximate (rather than pure) differential privacy is allowed, or if the stream length is large enough.



\begin{proof}
    The privacy guarantee of Algorithm~\ref{alg:dpexpgkgumb} follows from the privacy guarantee of the exponential mechanism and Lemma~\ref{lem:EMtoGumb}. The accuracy bound in equation~\ref{eqn:guar1} is simply a restatement of Lemma \ref{lem:acc_bound}. To derive the second statement, we substitute $\frac{\alpha \min \{ \epsilon,1 \}}{24 \log (|\calX|/\beta)}$ for the approximation parameter $\alpha$ in equation~\ref{eqn:guar1} and get
    \begin{align*}
            &d(\lceil qn \rceil, [\hat{r}_{\min} (\hat{x}), \hat{r}_{\max} (\hat{x})]) \\
            &\leq 2 \cdot \frac{\alpha \min \{ \epsilon,1 \} n}{24 \log (|\calX|/\beta)} + \frac{2(4 (\frac{\alpha \min \{ \epsilon,1 \}}{24 \log (|\calX|/\beta)}) n + 2) \log (|\calX|/\beta)}{\epsilon} \\
            &\leq \frac{\alpha n}{12 \log (|\calX|/\beta)} + \frac{\alpha n}{3} + \frac{4 \log | \calX |/\beta }{\epsilon} \\
            &\leq \frac{\alpha n}{12} + \frac{\alpha n}{3} + \frac{\alpha n}{3} \\
            &\leq \alpha n.
    \end{align*}
    The space complexity bound now follows directly from the space complexity bound derived in Lemma~\ref{lem:EMtoGumb}, the space complexity bound $O(1/\alpha \log \alpha n)$ for the GK sketch, and by substituting \newline $\frac{\alpha \min \{ \epsilon,1 \}}{24 \log (|\calX|/\beta)}$ for $\alpha$.
\end{proof}

\subsection{\texorpdfstring{$\dphistgk$}{DPHistGK}: Histogram Based Approach}

For methods in this section, we assume that we have $K \geq 1$
\emph{disjoint} bins each of width $w$ (e.g., $w = \alpha/2$).
These bins are used to construct a histogram.

Essentially, Algorithm~\ref{alg:dphistgk} builds
an empirical histogram based on the GK sketch, adds noise
so that the bin values satisfy $(\eps, 0)$-DP, and
converts this empirical histogram to an approximate
empirical CDF, from which the quantiles can be 
approximately calculated.

\begin{algorithm}
\KwData{$X = (x_1, x_2,\ldots, x_n)$}
\KwIn{$\eps, \alpha\text{ (approximation parameter)}, q\in[0, 1]\text{ (quantile parameters)}, w$}

Build summary sketch $S(X)$ where\\
$S(X) = \{ (v_i, g_i, \Delta_i) : i \in [s] \} \leftarrow GK(X,\alpha)$\\

\tcc{cell labels $a_i$ and counts $c_i = 0$}
Initialize data-agnostic (empty) histogram $Hist = \langle(a_i, c_i), \ldots\rangle$ with cell widths $w$

\For {$(v_i, g_i, \Delta_i) \in S(X)$} {
    Insert $g_i$ counts of $v_i$ into histogram $Hist$
}
$c = 0$\\
$H = []$\\
\For {$(a_i, c_i)\in Hist$} {
   $\tilde{c}_i = \max(0, c_i + \Lap(0, 2/\eps))$
   
   Append $(a_i, c + \tilde{c}_i)$ to $H$

   $c = c + \tilde{c}_i$
}

$r = \lceil q\cdot n\rceil$

\For {$(b, rank)\in H$} {
    \If {$r < rank$} {
        \Return $b$
    }
}

\tcc{return last element of $H$}

\Return $H[|H|-1]$

\caption{$\dphistgk$: Computing DP Quantiles in Bounded Space}
\label{alg:dphistgk}
\end{algorithm}

\begin{lemma}
Algorithm~\ref{alg:dphistgk} satisfies $(\eps, 0)$-DP.
\end{lemma}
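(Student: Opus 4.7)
The plan is to reduce the privacy claim to a sensitivity bound on the histogram vector followed by an invocation of the Laplace mechanism and the post-processing property. Concretely, let $h:\calX^n\to\mathbb{Z}_{\geq 0}^{K}$ be the deterministic map that, on input $X$, runs the GK sketch and returns $c(X)=(c_1,\dots,c_K)$ with $c_j=\sum_{i:\,\val(v_i)\in a_j}g_i$. The only data-dependent randomized step of Algorithm~\ref{alg:dphistgk} is the coordinatewise addition of $\Lap(0,2/\eps)$ noise to $h(X)$; the clipping $\max(0,\cdot)$, the running prefix sum that produces $H$, and the final bin selection depend on $X$ only through the noisy vector. Hence it suffices to show that the noisy release $h(X)+Z$ with $Z\sim\Lap(0,2/\eps)^{\otimes K}$ is $\eps$-DP, and then invoke post-processing.

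By the standard Laplace mechanism guarantee, releasing $h(X)+Z$ at scale $2/\eps$ is $\eps$-DP provided that $\Delta h:=\max_{X\sim X'}\|h(X)-h(X')\|_1\leq 2$. The core observation is the GK invariant $\sum_i g_i=n$: every stream element contributes exactly one unit of weight to some tuple, and $\Compress$ operations conserve total weight while merging only tuples that are adjacent in sort order. For swap-neighboring $X\sim X'$ differing at a single position, I intend to couple the two sketch executions so that $n-1$ units of weight can be matched to the same bin in both $c(X)$ and $c(X')$, leaving only the swapped element to contribute $-1$ to one bin and $+1$ to another. This yields $\|h(X)-h(X')\|_1\leq 2$. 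Combining with Laplace noise of scale $2/\eps$ delivers $\eps$-DP for the released $(\tilde c_1,\dots,\tilde c_K)$, and post-processing carries the guarantee to the returned bin label.

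The main obstacle is precisely this sensitivity step, because the GK sketch is adaptive: the traces of $\Insert$ and $\Compress$ operations on $X$ and $X'$ can diverge after the modified position, so $S(X)$ and $S(X')$ may differ in many tuples despite a single changed input. I expect to resolve this either by explicitly coupling the two $\Compress$ histories and tracking each unit of weight through merges, or, if merges can genuinely move weight across a bin boundary, by noting that we may equivalently run the sketch on the bin-quantized stream (the histogram uses each $v_i$ only through its bin label), in which case the output histogram becomes a deterministic function of the true histogram of the quantized data and inherits the textbook $\ell_1$-sensitivity of $2$ under the swap model.
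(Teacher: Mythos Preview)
Your high-level reduction---bound the $\ell_1$-sensitivity of the per-bin count vector by $2$, apply the Laplace mechanism at scale $2/\eps$, then invoke post-processing for the clipping, prefix sums, and final bin selection---is exactly the paper's argument. The paper's proof is two sentences: each stream element $x_i$ lies in at most one bin, so swapping one element touches at most two bin counts, hence sensitivity $2$. What you add, and the paper does not, is the observation that Algorithm~\ref{alg:dphistgk} does not release the raw bin counts but rather $c_j=\sum_{i:\,v_i\in a_j} g_i$, a GK-weighted histogram; since $\Compress$ merges a tuple's weight into its right neighbor, an element's unit of weight can end up attributed to a representative $v_i$ in a different bin than the element itself. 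The paper's ``each $x_i$ in one bin'' reasoning is the exact-histogram argument and does not speak to this.

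Neither of your proposed resolutions actually closes the gap, though. For (a), you only state an intention to couple; after the swapped position the $\Compress$ decisions on $X$ and $X'$ can cascade and diverge, and you have not shown that $n{-}1$ units of weight can always be matched bin-for-bin. For (b), quantizing first does not make the GK-derived counts equal the true bin counts: $\Compress$ still pushes weight rightward across bin labels, and GK is order-sensitive, so the output is not even a function of the bin-count multiset. And even if it were such a function, being computed from a sensitivity-$2$ statistic does not bound the output's sensitivity---functions can amplify sensitivity arbitrarily. In short, your decomposition matches the paper's and is more honest in naming the obstacle, but neither your proposal nor the paper's proof actually establishes $\Delta h\le 2$ for the histogram as computed in Algorithm~\ref{alg:dphistgk}.
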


\begin{proof}
For any $i\in[n]$, any item $x_i$  can belong in at most one bin. Plus, the global
sensitivity of the function that computes the empirical histogram is $2$, since changing
a single item can change the contents of at most two bins.

As a result, adding noise of $\Lap(0, 2/\eps)$ to each bin satisfies $\eps$-DP
by Theorem~\ref{thm:laplace}.
\end{proof}

\begin{theorem}[Laplace Mechanism~\cite{DworkMNS06}]
Fix $\eps > 0$ and any function $f:\calY^n\rightarrow\reals^K$. The Laplace mechanism
outputs
$$
f(y) + (L_1, \ldots, L_K),
$$
$L_1, \ldots, L_K\sim\Lap(0, GS_f/\eps)$ where $GS_f$ is the global sensitivity of the
function $f$. Furthermore, the mechanism satisfies
$(\eps, 0)$-DP.
\label{thm:laplace}
\end{theorem}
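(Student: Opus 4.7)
The plan is the classical density-ratio argument for the Laplace mechanism. Let $M(y) := f(y) + (L_1, \dots, L_K)$ with $L_i \sim \Lap(0, GS_f/\eps)$ independent. It suffices, by integration against any measurable set $S \subseteq \reals^K$, to show that for every pair of neighboring databases $y \sim y'$ and every $z \in \reals^K$, the pointwise density ratio $p_y(z)/p_{y'}(z)$ is at most $e^\eps$, where $p_y$ is the density of $M(y)$.

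First, I would write the joint density explicitly, using independence of the $L_i$:
\[
p_y(z) \;=\; \prod_{i=1}^{K} \frac{\eps}{2\,GS_f}\,\exp\!\left(-\frac{\eps\,|z_i - f_i(y)|}{GS_f}\right).
\]
Second, I would form the ratio and note that the normalizing constants cancel, leaving
\[
\frac{p_y(z)}{p_{y'}(z)} \;=\; \exp\!\left(\frac{\eps}{GS_f}\sum_{i=1}^{K} \bigl(|z_i - f_i(y')| - |z_i - f_i(y)|\bigr)\right).
\]
Third, the reverse triangle inequality applied coordinatewise gives $|z_i - f_i(y')| - |z_i - f_i(y)| \le |f_i(y) - f_i(y')|$, so the sum inside the exponential is bounded by $\|f(y) - f(y')\|_1 \le GS_f$ by the definition of $\ell_1$ global sensitivity over neighboring inputs. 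This yields $p_y(z) \le e^\eps\,p_{y'}(z)$, and integrating over any measurable $S$ gives $\Pr[M(y) \in S] \le e^\eps\,\Pr[M(y') \in S]$, completing the proof.

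There is no real obstacle in this argument; the only subtlety worth flagging is that $GS_f$ here is implicitly the $\ell_1$ global sensitivity $\max_{y \sim y'} \|f(y) - f(y')\|_1$, which is the natural choice when Laplace noise is added coordinatewise and independently, because the telescoping of the $K$ reverse triangle inequalities naturally produces an $\ell_1$ quantity. Any other choice of norm would require either rescaling the noise or introducing additional dimension-dependent factors (e.g., a $\sqrt{K}$ loss when passing from $\ell_2$ to $\ell_1$).
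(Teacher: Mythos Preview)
Your argument is the standard, correct density-ratio proof of the Laplace mechanism, and your remark that $GS_f$ must be the $\ell_1$ global sensitivity is exactly the right caveat. The paper itself does not prove this theorem at all: it is stated as a cited result from \cite{DworkMNS06} with no accompanying proof, so there is nothing to compare against beyond noting that your argument is the classical one found in the original reference and in standard expositions.
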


\section{Learning differentially private quantiles of a normal distribution with unknown mean}
We demonstrate one use case of $\dphistgk$ where the space complexity required 
improves upon the worst-case bound for $\dpexpgk$, Theorem \ref{thm:utility_universal_bound}. While the histogram based mechanism does not have universal utility bounds in the spirit of the above theorem, the results in this section serve as one simple example where it may yield desirable accuracy while using less space.

Suppose that we are given an i.i.d. sample $S = (X_1, X_2, \ldots, X_n)$ such that for all $i\in[n]$,
$X_i\sim\calN(\mu, \sigma^2I_{d\times d})$, $\mu\in\reals^d$. 
The goal is to estimate DP quantiles of the distribution $\calN(\mu, \sigma^2I_{d\times d})$ without
knowledge of $\mu$ or $\sigma^2$.
We will show how to estimate the quantiles assuming
that $\sigma^2$ is known. Note that it is easy to generalize the work
to the case where $\sigma^2$ is unknown as follows:

For any sample $S$ drawn from i.i.d. from 
$\calN(\mu, \sigma^2I_{d\times d})$, the $1-\beta$ confidence interval is
$$
\bar{X} \pm \frac{\sigma}{\sqrt{n}}\cdot z_{1-\beta/2},
$$
where $z_{1-\beta/2}$ is the $1-\beta/2$ quantile of the standard
normal distribution and $\bar{X}$ is the empirical mean. The length of this interval is fixed and equal to
$$
\frac{2\sigma z_{1-\beta/2}}{\sqrt{n}} = \Theta\left(\frac{\sigma}{\sqrt{n}}\sqrt{\log\frac{1}{\beta}}\right).
$$

In the case where $\sigma^2$ is unknown, the confidence interval becomes
$$
\bar{X} \pm \frac{s}{\sqrt{n}}\cdot t_{n-1, 1-\beta/2},
$$
where $s^2 = \frac{1}{n-1}\sum_{i=1}^n(X_i - \bar{X})^2$ is the 
sample variance (sample estimate of $\sigma^2$) and
$t_{n-1, 1-\beta/2}$ is the $1-\beta/2$ quantile of the $t$-distribution
with $n-1$ degrees of freedom.
The length of the interval can be shown to be
$$
\frac{2\sigma}{\sqrt{n}}\cdot k_n\cdot t_{n-1, 1-\beta/2} =
\Theta\left(\frac{\sigma}{\sqrt{n}}\sqrt{\log\frac{1}{\beta}}\right),
$$
where $k_n = 1 - O(1/n)$ is an appropriately chosen constant.
See~\citep{LehmannE, KV17, keener2010theoretical} for more details and discussion. We will assume that $\sigma^2$
is known and proceed to show sample and space complexity bounds.
One could also estimate the variance in a DP way and then prove the
complexity bounds.

For any $q\in(0, 1)$, we denote the $q$-quantile of the sample as $Q_S^q$ and the $q$-quantile of the distribution as
$Q_\calD^q$.

That is, for any $q\in(0, 1)$ and sample $S = (X_1, X_2, \ldots, X_n)$,
we wish to obtain a DP $q$-quantile $\tilde{Q}_S^q$ with the following guarantee:
$$
\pr[\norm{Q_\calD^q - \tilde{Q}_S^q}\geq \alpha] \leq \beta,
$$
for any $\beta\in(0, 1], \alpha > 0$.

We shall proceed to use a three-step approach:
(1) Estimate a DP range of the population in sub-linear space;
(2) Use this range of the population to construct a DP histogram using
the stream $S$;
(3) Use the DP histogram to estimate one or more quantiles via the
sub-linear data structure of Greenwald and Khanna.

In the case where $d=1$,
by Theorem~\ref{thm:normal}, there exists an
$(\eps, \delta)$-DP algorithm $\tilde{Q}_S^q$ such that
if $\mu\in(-R, R)$ then using space of
$O(\max\{\frac{R}{\sigma}, \frac{1}{\alpha}\log \alpha n\})$ (with
probability 1) as long as
the stream length is at least
\begin{align*}
n \geq O\bigg(
\max\bigg\{
\min\bigg\{&
O\left(\frac{R}{\eps\sigma\alpha}\log\frac{R}{\sigma\beta}\right),
O\left(\frac{R}{\eps\sigma\alpha}\log\frac{1}{\beta\delta}\right)
\bigg\},\\
&O\left(\frac{R^2}{\sigma^2\alpha^2}\log\frac{1}{\beta}\right)\bigg\}
\bigg),
\end{align*}
we get the guarantee that, 
for all $\beta\in(0, 1], \alpha > 0$,
$$
\pr[\norm{Q_\calD^q - \tilde{Q}_S^q}\geq \alpha] \leq \beta.
$$
Intuitively, this means that: 
(1) \textbf{Space}: We need less space to estimate any quantile with DP guarantees if the distribution is less concentrated
(i.e., $\sigma$ can be large) or if we do not require a high degree of accuracy
for our queries (i.e., $\alpha$ can be large).
(2) \textbf{Stream Length}: We need a large stream length to
estimate quantiles if we require a high degree of accuracy
(i.e., smaller $\beta, \alpha$), or do not have a good public estimate
of $\mu$ (large $R$), or have small privacy parameters 
(small $\eps, \delta$), or have concentrated datasets
(small $\sigma$).

\begin{theorem}
For the 1-D normal distribution $\calN(\mu, \sigma^2)$, let
$S = (X_1, X_2, \ldots, X_n)$ be a data stream
through which we wish to obtain $\tilde{Q}_S^q$, a DP estimate of
the $q$-quantile of the distribution.

For any $q\in(0, 1)$, there exists an $(\eps, \delta)$-DP algorithm
such that, with probability at least $1-\beta$, we
obtain $|Q_\calD^q - \tilde{Q}_S^q|\leq \alpha$ for any $\alpha > 0$,
$\beta\in(0, 1], \eps, \delta\in(0, 1/n)$ and for stream length
\begin{align*}
n \geq 
\max\left\{
\min\left\{
A,B
\right\},
C
\right\}, \ \ \text{where}
\end{align*}
$$
A = O\left(\frac{R}{\eps\sigma\alpha}\log\frac{R}{\sigma\beta}\right), B = O\left(\frac{R}{\eps\sigma\alpha}\log\frac{1}{\beta\delta}\right), C = O\left(\frac{R^2}{\sigma^2\alpha^2}\log\frac{1}{\beta}\right)
$$
as long as $\mu\in(-R, R)$ and using space of 
$O(\max\{\frac{R}{\sigma}, \frac{1}{\alpha}\log \alpha n\})$.
\label{thm:normal}
\end{theorem}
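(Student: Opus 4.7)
The plan is to realize the three-step blueprint outlined immediately before the statement: privately locate $\mu$ to precision $O(\sigma)$ using a coarse noisy histogram, then build a refined DP histogram on a narrow interval around this estimate via $\dphistgk$ (Algorithm~\ref{alg:dphistgk}), and finally read the $q$-quantile from the noised cumulative distribution, translating rank error into value error through the Gaussian density.

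For the first step, partition $(-R,R)$ into $K = O(R/\sigma)$ cells of width $\sigma$ and stream-count the occupancies. Gaussian concentration guarantees that a constant fraction of the $n$ samples lie within $\sigma$ of $\mu$, so the true mode cell has count $\Omega(n)$, while cells at distance $\Omega(\sigma \sqrt{\log(1/\beta)})$ from $\mu$ receive exponentially fewer points. In pure-DP mode, adding $\Lap(1/\eps)$ noise and returning the argmax bin is $\eps$-DP; it succeeds with probability $1-\beta$ once the worst-case Laplace deviation $O(\log(K/\beta)/\eps)$ falls below the true count gap, which after propagating to the final $\alpha$-accuracy through the $\Theta(1/\sigma)$ Gaussian density yields bound $A$. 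Replacing this with a stability-based approximate-DP histogram~\cite{BalcerV18} swaps $\log K$ for $\log(1/(\beta\delta))$ and gives $B$. The space cost is $O(K) = O(R/\sigma)$. Let $\tilde\mu$ denote the returned bin centre; with probability $1-\beta$, $|\tilde\mu - \mu| = O(\sigma)$.

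Set the refined range $I = [\tilde\mu - c\sigma\sqrt{\log(1/\beta)},\, \tilde\mu + c\sigma\sqrt{\log(1/\beta)}]$, which by Gaussian tails contains $Q_\calD^q$ with probability $1-\beta$ for any $q$ bounded away from $\{0,1\}$. Run $\dphistgk$ with bin width $w = \Theta(\alpha)$ on the stream clipped to $I$. By Lemma~\ref{lem:GKguarantee}, the GK sketch uses space $O((1/\alpha)\log(\alpha n))$ and has rank error at most $2\alpha n$; adding $\Lap(0, 2/\eps)$ per bin preserves $\eps$-DP (the privacy lemma stated for $\dphistgk$). The output $\tilde Q_S^q$ is the smallest bin centre whose cumulative noisy count exceeds $\lceil qn \rceil$. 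Standard composition over the two stages yields $(\eps,\delta)$-DP (and $\eps$-DP in the pure case after rescaling budgets), and because the two stages run concurrently over the stream with no intermediate storage, the total space is $O(\max\{R/\sigma,\, (1/\alpha)\log(\alpha n)\})$, as claimed.

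The utility bound follows from a triangle-inequality split into three error sources, each bounded by $\Theta(\alpha)$: (a) the empirical-CDF deviation, controlled at rate $O(\sqrt{\log(1/\beta)/n})$ by DKW, which produces the purely statistical constraint $C$; (b) the GK rank error $2\alpha n$, converted to a value error by dividing by the Gaussian density on $I$, which is bounded below by $\Omega(1/(\sigma\sqrt{\log(1/\beta)}))$; and (c) the aggregated Laplace noise over the $O(\sqrt{\log(1/\beta)}/\alpha)$ cells of $I$, converted via the same density lower bound, which together with the step-(i) budget produces $A$ and $B$. The main obstacle is this last conversion: carefully propagating rank error through the Gaussian CDF and jointly tuning $w$, $|I|$, and $n$ so that the three sources line up. The density lower bound $\Omega(1/(\sigma\sqrt{\log(1/\beta)}))$ is precisely what introduces the $\sigma$ in the denominator of each stated sample-size bound and the $\log(1/\beta)$ factor inside the logarithms, while the choice $w = \Theta(\alpha)$ keeps the quantization bias at $O(\alpha)$ and matches the GK resolution.
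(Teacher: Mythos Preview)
Your approach differs substantially from the paper's. The paper does \emph{not} first localize $\mu$ and then refine; it builds a single histogram over the entire range $[-R-\sigma/2,\,R+\sigma/2]$ with $K=2\lceil R/\sigma\rceil+1$ bins of width~$\sigma$, and invokes the histogram learner (Lemma~\ref{lem:dphist}) once with per-bin accuracy $\alpha/K$. All three bounds $A$, $B$, $C$ then fall out directly from the three sample-complexity terms in that lemma after substituting $\alpha/K$ for the accuracy parameter and $K=O(R/\sigma)$ for the number of bins. The triangle inequality in the paper has only two pieces, $|Q_{\calD}^q - Q_S^q|$ (Corollary~\ref{cor:converge}, via DKW) and $|Q_S^q - \tilde Q_S^q|$ (Lemma~\ref{lem:dp}); no Gaussian-density conversion from rank to value is carried out, and the GK sketch enters only to certify the $O((1/\alpha)\log(\alpha n))$ portion of the space bound.

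Your two-stage localize-then-refine plan is a plausible alternative, but the write-up has two concrete gaps. First, you assert that the coarse and fine stages ``run concurrently over the stream with no intermediate storage'', yet the fine histogram is built on the stream \emph{clipped to $I$}, and $I$ is determined only after the coarse stage completes; in a one-pass model this is a real obstruction. (It is repairable---run the GK sketch on the unclipped stream and extract the histogram on $I$ by post-processing---but you do not say so.) Second, you attribute the bounds $A$ and $B$ to the coarse argmax step, but that step only needs Laplace noise of order $O(\log(K/\beta)/\eps)$ to fall below a count gap of order $\Omega(n)$ and carries no dependence on $\alpha$; the $1/\alpha$ factor in $A$ and $B$ has to come from the refined histogram, not the localization. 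In the paper's one-stage scheme this bookkeeping never arises: $A$, $B$, $C$ are simply the three clauses of Lemma~\ref{lem:dphist} with accuracy $\alpha/K$.
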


\begin{proof}
For any stream $S = (X_1, \ldots, X_n)$,
we use the triangle inequality so that
\begin{align}
|Q_\calD^q - \tilde{Q}_S^q| &\leq |Q_\calD^q - Q_S^q| + |Q_S^q - \tilde{Q}_S^q|\\
&\leq \alpha/2 + \alpha/2.
\end{align}

$|Q_\calD^q - Q_S^q|\leq \alpha/2$ follows with probability $1-\beta/2$ by Corollary~\ref{cor:converge} and
$|Q_S^q - \tilde{Q}_S^q|\leq \alpha/2$ follows with probability $1-\beta/2$ by Lemma~\ref{lem:dp}. The space complexity follows
with probability 1 via the deterministic nature of
the Greenwald-Khanna sketch.
\end{proof}

\begin{lemma}[Dvoretzky-Kiefer-Wolfowitz inequality~\cite{dvoretzky1956}]
For any $n\in\mathbb{Z}_+$, let $X_1, \ldots, X_n$ be i.i.d. random variables with
cumulative distribution function $F$ so that $F(x)$ is the probability that a single random variable
$X$ is less than $x$ for any $x\in\reals$. 
Let the corresponding empirical distribution function be 
$F_n(x) = \frac{1}{n}\sum_{i=1}^n\ind[X_i \leq x]$ for any $x\in\reals$.
Then for any $\gamma > 0$,
$$
\pr\left(\sup_{x\in\reals}|F_n(x) - F(x)| > \gamma\right)\leq 2\exp(-2n\gamma^2).
$$

\label{lem:dkw}
\end{lemma}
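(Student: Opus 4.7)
The plan is to first reduce the two-sided inequality over an arbitrary CDF $F$ to a one-sided inequality for the uniform empirical process, and then prove that one-sided bound with a sharp exponential argument. First I would apply the probability integral transform: when $F$ is continuous, the random variables $U_i := F(X_i)$ are i.i.d. $\Unif[0,1]$, so if $G_n(u) = \frac{1}{n}\sum_i \ind[U_i \leq u]$ denotes the uniform empirical CDF, then $\sup_{x\in\reals}|F_n(x) - F(x)| = \sup_{u\in[0,1]}|G_n(u) - u|$. The case of general (possibly discontinuous) $F$ is handled by a standard approximation, inserting an independent uniform jitter on each atom of $F$, which can only increase the supremum, so the inequality for continuous $F$ transfers. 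It then suffices to prove the one-sided bound
\begin{equation*}
\pr\Bigl(\sup_{u\in[0,1]}(G_n(u) - u) > \gamma\Bigr)\leq \exp(-2n\gamma^2),
\end{equation*}
after which the symmetric statement $\pr(\sup_u (u - G_n(u)) > \gamma)\leq \exp(-2n\gamma^2)$ (applied to $1-U_i$) and a union bound deliver the factor of $2$ in the lemma.

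Next I would prove the one-sided bound by partitioning according to the \emph{first} order statistic at which the empirical process exceeds the threshold. Since $G_n$ is a step function and $u\mapsto u$ is increasing, the event $\{\sup_u(G_n(u)-u)>\gamma\}$ decomposes as the disjoint union over $k\in\{1,\dots,n\}$ of $A_k$, the event that the $k$-th order statistic $U_{(k)}$ is the first at which $k/n - U_{(k)} > \gamma$, i.e.\ $U_{(k)} < k/n - \gamma$. On $A_k$ one has precise control: conditional on $U_{(k)}=s$, the remaining $n-k$ samples are i.i.d.\ uniform on $[s,1]$, and the excursion above $\gamma$ after time $s$ is governed by a Binomial$(n-k,(1-s)^{-1}\cdot\,\cdot)$ process. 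Summing Chernoff bounds across $k$ and optimizing the exponent pointwise gives the one-sided exponential tail. To extract Massart's sharp constant $2$, one replaces the crude Chernoff optimization with the convex bound $\log\E e^{\lambda(X-\E X)} \leq \lambda^2/8$ for $X\in[0,1]$ combined with a carefully weighted linear combination of increments of $G_n$, mirroring the standard derivation of $\exp(-2n\gamma^2)$ from Hoeffding's lemma.

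The hard part is producing the tight constant $2$ in the exponent rather than some unspecified $c>0$. A naive argument — pointwise Hoeffding applied to $n G_n(u)\sim \text{Binomial}(n,u)$ yielding $\pr(|G_n(u)-u|>\gamma)\leq 2\exp(-2n\gamma^2)$, followed by a union bound over a $\gamma$-net of $[0,1]$ — loses an extra polynomial prefactor $\Theta(1/\gamma)$, which is fatal for the stated clean bound. Avoiding this prefactor is precisely the content of the Dvoretzky–Kiefer–Wolfowitz–Massart theorem, and the clean way forward is either to invoke Massart's argument in full, or to prove a slightly weaker bound with constant $c<2$ and additional polynomial factor, which already suffices for every downstream use in this paper. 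Given that the lemma is applied here only as a statistical black box to certify $|Q_\calD^q - Q_S^q|\leq\alpha/2$ in Theorem~\ref{thm:normal}, the cleanest presentation is to cite Massart's sharp version, noting explicitly that the reduction to the uniform case and the one-sided-to-two-sided union bound are the only conceptual steps on our side.
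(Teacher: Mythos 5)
The paper does not prove this statement at all: Lemma~\ref{lem:dkw} is quoted as a classical black-box result (the Dvoretzky--Kiefer--Wolfowitz inequality, with Massart's sharp constant) and is simply cited, then used once in Corollary~\ref{cor:converge}. So there is no in-paper argument to compare against, and your decision to treat the sharp form as something to be invoked from the literature is exactly in line with how the paper uses it.

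As a standalone proof attempt, your outline of the surrounding reductions is correct: the probability integral transform reducing to the uniform empirical process, the handling of atoms by jittering, the symmetry argument turning a one-sided bound into the two-sided one at the cost of a factor $2$, and the decomposition of the one-sided event over the first order statistic $U_{(k)}$ with $U_{(k)} < k/n - \gamma$ are all standard and sound. But the mathematical core of the lemma --- obtaining the exponent $2n\gamma^2$ with no polynomial prefactor --- is named rather than derived: your own text concedes that the naive Chernoff-plus-net route loses a $\Theta(1/\gamma)$ factor and that the fix is ``to invoke Massart's argument in full.'' That is an honest and acceptable resolution here (the downstream use in Theorem~\ref{thm:normal} only needs the inequality as a black box, and indeed any constant $c>0$ in the exponent would suffice there up to constants in the sample-size bound), but it means your proposal is an extended citation with a correct reduction scaffold, not a self-contained proof of the stated bound. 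If you wanted a genuinely complete argument, you would need to carry out Massart's refinement (or at least a Bretagnolle--Massart style computation on the order-statistics decomposition), which is substantially more delicate than the steps you have written out.
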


\begin{corollary}
For any $q\in(0, 1)$, let 
$Q_\calD^q$ be the $q$-quantile estimate for the distribution
$\calD$ and $Q_S^q$ be the $q$-quantile estimate for the sample.
Then,
$|Q_\calD^q - Q_S^q| \leq \alpha/2$ with probability $1-\beta/2$
when $n \geq \frac{2}{\alpha^2}\log4/\beta$.
\label{cor:converge}
\end{corollary}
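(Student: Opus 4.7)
The plan is to invoke the Dvoretzky--Kiefer--Wolfowitz inequality (Lemma \ref{lem:dkw}) with $\gamma = \alpha/2$. Substituting, we obtain
\[
\pr\left(\sup_{x\in\reals}|F_n(x) - F(x)| > \alpha/2\right) \leq 2\exp\!\left(-n\alpha^2/2\right).
\]
Requiring the right-hand side to be at most $\beta/2$, taking logs, and solving for $n$ yields exactly $n \geq (2/\alpha^2)\log(4/\beta)$, which matches the stated sample-size hypothesis. This is essentially a one-line plug-and-chug once the right choice of $\gamma$ is identified.

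Next I would condition on the high-probability event $\mathcal{E} := \{\sup_x |F_n(x) - F(x)| \leq \alpha/2\}$. On $\mathcal{E}$, plugging in $x = Q_S^q$ and using $F_n(Q_S^q) \approx q$ gives $|F(Q_S^q) - q| \leq \alpha/2$ (up to a $1/n$ rounding term coming from the discrete definition $\ix(v) = \lceil qn\rceil$, which is absorbed into lower-order terms). Since by definition $F(Q_\calD^q) = q$, we conclude $|F(Q_S^q) - F(Q_\calD^q)| \leq \alpha/2$.

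The main subtlety will be the final step: DKW measures closeness in the vertical (CDF-value) direction, whereas the claim $|Q_\calD^q - Q_S^q| \leq \alpha/2$ is a statement in the horizontal (value) direction. The clean resolution, and the one I believe the authors intend given how this corollary is used in the proof of Theorem \ref{thm:normal}, is to interpret the corollary as a statement about approximate quantiles in the rank/CDF sense, so that the DKW bound translates directly without invoking the density. If instead one insists on reading $\alpha$ as a value-error parameter, then the standard recourse is a lower bound on the density of $F$ near $Q_\calD^q$: for $\calN(\mu,\sigma^2)$, the density is bounded below by a constant (depending on $\sigma$ and the target region) on any compact set, giving a Lipschitz quantile function and hence the desired conversion up to a multiplicative factor. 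Any such factor is harmless because it can be absorbed into the constants defining $\alpha$ in the outer Theorem \ref{thm:normal}.

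In summary, the proof is essentially a direct invocation of DKW with $\gamma = \alpha/2$ followed by algebraic manipulation to solve for $n$; the only conceptual care required is the (standard) passage from uniform CDF closeness to pointwise quantile closeness, which I do not expect to be the bottleneck here.
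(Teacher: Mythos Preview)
Your proposal is correct and matches the paper's approach exactly: the paper's entire proof is the one-liner ``Follows by the DKW inequality (Lemma~\ref{lem:dkw}) where $n\geq \frac{1}{2\gamma^2}\log 4/\beta$ and $\gamma = \alpha/2$.'' Your discussion of the vertical-to-horizontal conversion subtlety is in fact more careful than what appears in the paper, which simply asserts the conclusion without addressing that point.
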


\begin{proof}
Follows by the DKW inequality (Lemma~\ref{lem:dkw}) where $n\geq \frac{1}{2\gamma^2}\log4/\beta$ and
$\gamma = \alpha/2$.
\end{proof}

\begin{lemma}
For any $q\in(0, 1)$, $\alpha > 0$, $\beta\in(0, 1]$,
$\eps, \delta\in(0, 1/n)$, 
there exists an $(\eps, \delta)$-differentially
private algorithm $\tilde{Q}_S^q$
for computing the $q$-quantile such that
$$|Q_S^q - \tilde{Q}_S^q| \leq \alpha/2,$$
with probability $\geq 1 - \beta$ for stream length
$$
n \geq O\left(
\min\{
O\left(\frac{R}{\eps\sigma\alpha}\log\frac{R}{\sigma\beta}\right),
O\left(\frac{R}{\eps\sigma\alpha}\log\frac{1}{\beta\delta}\right)
\}
\right).
$$

Furthermore, with probability 1, $\tilde{Q}_S^q$ uses space of
$O(\max\{\frac{R}{\sigma}, \frac{1}{\alpha}\log \alpha n\})$.
\label{lem:dp}
\end{lemma}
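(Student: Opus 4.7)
I will instantiate Algorithm~\ref{alg:dphistgk} with bin width $w=\sigma$ over the interval $I=[-R-c\sigma\sqrt{\log(n/\beta)},\,R+c\sigma\sqrt{\log(n/\beta)}]$ for a suitable absolute constant $c$, chosen so that every sample $X_i\sim\calN(\mu,\sigma^2)$ lies in $I$ with probability at least $1-\beta/4$ (using $\mu\in(-R,R)$ and a standard Gaussian tail bound). This produces a histogram with $K=O(R/\sigma)$ cells, which together with the GK summary maintained in parallel matches the stated $O(\max\{R/\sigma,\alpha^{-1}\log(\alpha n)\})$ space. Privacy follows verbatim from the $\eps$-DP analysis of Algorithm~\ref{alg:dphistgk} (each cell has sensitivity $2$ and is perturbed by $\Lap(0,2/\eps)$).

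For accuracy, I decompose $|Q_S^q-\tilde Q_S^q|$ into a rank error from the Laplace noise and a within-bin quantization error. For the first, the Laplace tail bound together with a union bound over the $K$ bins certifies that every partial sum of the noisy cell counts deviates from the true partial sum by at most $M=O((K/\eps)\log(K/\beta))=O\!\left((R/(\sigma\eps))\log(R/(\sigma\beta))\right)$, simultaneously, with probability $1-\beta/4$; hence the noisy $q$-quantile bin is within $M$ ranks of the true sample $q$-quantile bin. I convert rank error to value error via the bound $\sup_x f(x)\le 1/(\sigma\sqrt{2\pi})$ on the normal density: the quantile function is $\sigma\sqrt{2\pi}$-Lipschitz in probability on the slice where $f=\Omega(1/\sigma)$, so a rank shift of $M$ translates into a value shift of at most $O(\sigma M/n)$. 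Forcing this to be at most $\alpha/4$ and substituting $M$ yields the stream-length requirement $A$. The residual within-bin quantization error of $w=\sigma$ is removed by refining the answer within the identified bin using the GK summary, which by Lemma~\ref{lem:GKguarantee} returns a value of rank within $\alpha n/4$ of $qn$ and hence (again by the density bound) within an additional $O(\sigma\alpha)\le\alpha/4$ in value. Splitting the $\beta$ budget over the tail event, the Laplace concentration, and the GK guarantee finishes the accuracy argument. For the branch $B$, I replace the pure Laplace histogram by a stability-based histogram release (thresholding out cells with low noisy count), which is $(\eps,\delta)$-DP and has per-released-cell noise $O(\log(1/(\beta\delta))/\eps)$ \emph{independent of} $K$; repeating the rank-to-value argument with this improved per-cell bound gives $B$, and taking the smaller of the two recovers $\min\{A,B\}$.

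\textbf{Main obstacle.} The rank-to-value conversion is the delicate step, since in the tails of the normal the density decays like $e^{-x^2/(2\sigma^2)}$ and in principle a small rank shift can blow up into a large value shift. I will confine the argument to the event (of total probability $1-\beta/2$) that every sample lies in $I$ and that the noisy $q$-quantile bin lies inside the high-probability band $\mu\pm O(\sigma\sqrt{\log(n/\beta)})$; on this event the density on the relevant slice is $\Theta(1/\sigma)$ and the Lipschitz estimate used above is valid. Extreme $q$ near $0$ or $1$ are handled by a sharper Gaussian tail estimate whose extra $\sqrt{\log(1/\min(q,1-q))}$ factor is swallowed by the $O(\cdot)$ in $A$ and $B$.
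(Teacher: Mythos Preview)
Your high-level architecture matches the paper's: bound the data range via a Gaussian tail estimate, partition an interval of length $O(R)$ into $K=O(R/\sigma)$ cells of width $\sigma$, privately learn the histogram, and invoke the GK sketch for the space bound. But the paper's argument is considerably shorter than your plan, and the extra machinery you introduce is where problems arise.

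First, the paper does not re-derive Laplace tail bounds or separately invoke a stability-based histogram for the $(\eps,\delta)$ branch. It simply applies the black-box Histogram Learner (Lemma~\ref{lem:dphist}), which already packages both the pure-DP and approximate-DP regimes and whose sample-complexity requirement, once you plug in per-bin accuracy $\alpha/K$ with $K=2\lceil R/\sigma\rceil+1$, immediately gives the $\min\{A,B\}$ form in the statement. The accuracy step is then one line: with per-bin probability error $\alpha/K$ and at most $K$ bins summed, any quantile read off the noisy CDF has accuracy $\alpha$. The paper never performs a rank-to-value conversion through a density lower bound.

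Second, the paper uses GK only as a \emph{post-processing} space saver on the already-private histogram counts (hence ``private by DP post-processing'' and ``with probability~1'' for the space bound). It is not used to refine the answer within a bin.

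Your detours introduce two concrete gaps. Your handling of extreme $q$ asserts that the factor $\sqrt{\log(1/\min(q,1-q))}$ is ``swallowed by the $O(\cdot)$ in $A$ and $B$'', but the sample-complexity bound in the lemma is uniform in $q\in(0,1)$ and contains no $q$-dependent term, so nothing can absorb it. And your GK refinement step claims a value error of ``$O(\sigma\alpha)\le\alpha/4$'', which silently assumes $\sigma=O(1)$; nothing in the hypotheses guarantees this. Dropping the density-based conversion and the GK-refinement step, and invoking Lemma~\ref{lem:dphist} directly as the paper does, avoids both issues.
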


\begin{proof}
First, by the tail bounds of the Gaussian distribution 
(Claim~\ref{claim:gauss}), we can obtain that
for any $i\in[n]$,
$$
\pr[|X_i - \mu| > c] \leq 2e^{-c^2/2\sigma^2},
$$
so that by the union bound,
$$
\pr[\exists i, |X_i - \mu|\geq c] \leq 2ne^{-c^2/2\sigma^2},
$$
which implies that for any $\beta\in(0, 1]$,
$$
\pr[\forall i, |X_i - \mu|\leq \sigma\sqrt{2\log 4n/\beta}] \geq 1 - \beta/2,
$$
which holds by our sample complexity (stream length) guarantees.

Next, let $r = \lceil R/\sigma \rceil$. 
\footnote{Note that this argument is similar to the arguments for
Algorithm 1 in~\cite{KV17}.}
Divide
$[-R - \sigma/2, R + \sigma/2]$ into $2r + 1$ bins of length
at most $\sigma$ each. Each bin $B_j$ should equal
$((j-0.5)\sigma, (j+0.5)\sigma]$ for any $j\in\{-r, \ldots, r\}$.
Next run the histogram learner of Lemma~\ref{lem:dphist} with
per-bin accuracy parameter of $\alpha/K$,
high-probability parameter of $\beta/2$,
privacy parameters $\eps, \delta\in(0, 1/n)$, and number of bins
$K = 2\lceil R/\sigma \rceil + 1$.
We can do this because of our sample complexity (stream length)
bounds.
Then we obtain
noisy estimates $\tilde{p}_{-r}, \ldots, \tilde{p}_{r}$ with per-bin
accuracy of $\alpha/K$. Then any quantile estimate would have
accuracy of $\alpha$ (by summing noisy estimates for at most $K$ bins).

Next, we use these bins to construct a sketch (private by DP
post-processing) based on the deterministic algorithms
of~\citep{GreenwaldK04} to, with probability 1, obtain space of
$O(\max\{\frac{R}{\sigma}, \frac{1}{\alpha}\log \alpha n\})$.
\end{proof}

\begin{lemma}[Histogram Learner~\cite{BunNSV15, Vadhan17, KV17}]
  For every $K\in\mathbb{N}\cup\{\infty\}$ and every collection of disjoint bins
  $B_1, \ldots, B_K$ defined on the domain $\calX$.
  For any $n\in\mathbb{N}$,
  $\eps, \delta\in (0, 1/n), \alpha > 0,$ and $\beta\in(0, 1)$, there exists an
  $(\eps, \delta)$-DP algorithm $M:\calX^n\rightarrow\reals^K$ such that for every distribution
  $\mathbb{D}$ on the domain $\calX$, if
  \begin{enumerate}
  \item $X_1, \ldots, X_n \sim \mathbb{D}$, $p_k = \pr[X_i\in B_k]$
  for any $k\in[K]$,
  \item $(\tilde{p}_1, \ldots, \tilde{p}_K) \leftarrow M(X_1, \ldots, X_n)$,
  \item $n \geq \max\left\{\min\left\{\frac{8}{\eps\alpha}\log\frac{2K}{\beta}, \frac{8}{\eps\alpha}\log\frac{4}{\beta\delta}\right\}, \frac{1}{2\alpha^2}\log\frac{4}{\beta}\right\}$,
  \end{enumerate}
  then (over the randomness of the data $X_1, \ldots, X_n$ and of $M$)
  \begin{enumerate}
  \item $\pr_{\underline{X}\sim\mathbb{D}, M}[|\tilde{p}_k - p_k|\leq\alpha] \geq 1-\beta$,
  \item $\pr[\argmax_k \tilde{p}_k = j] \leq np_j$ if $K\geq 2/\delta$,
  \item $\pr[\argmax_k \tilde{p}_k = j] \leq np_j + 2\exp(-(\eps n/8)\cdot (\max_k p_k))$ if $K < 2/\delta$.
  \end{enumerate}
\label{lem:dphist}
\end{lemma}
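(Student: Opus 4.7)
The plan is to analyze two natural mechanisms and pick whichever gives a better sample complexity depending on how $K$ compares to $1/\delta$. Let $c_k = |\{i : X_i \in B_k\}|$ and $\hat{p}_k = c_k / n$ denote the empirical frequency in bin $B_k$, which has sensitivity $1/n$ under swap neighbors (and $2/n$ under add/remove, absorbed into constants). The two mechanisms are: (i) the \emph{Laplace histogram}, which outputs $\tilde{p}_k = \hat{p}_k + \Lap(2/(\eps n))$ for every bin; and (ii) the \emph{stability-based histogram}, which first thresholds out all bins whose noisy count falls below $T = \Theta(\log(1/(\beta\delta))/\eps)$ and releases zeros elsewhere, using that only bins with $c_k > 0$ matter so that changing one datum affects the output on at most two bins.

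First I would handle the statistical error. By a Chernoff (or Hoeffding) bound, for any fixed $k$, $\Pr[|\hat{p}_k - p_k| > \alpha/2] \leq 2 e^{-n\alpha^2/2}$. When $K$ is finite, a union bound over $k \in [K]$ together with $n \geq \frac{1}{2\alpha^2}\log(4K/\beta)$ controls the sampling error simultaneously across all bins; when $K$ is effectively infinite, one instead only union bounds over bins with $p_k \geq \alpha/4$ (there can be at most $4/\alpha$ such bins) and notes that for bins with $p_k < \alpha/4$ the empirical frequency is at most $\alpha/2$ with the required probability by a multiplicative Chernoff bound, so the stated accuracy $|\tilde{p}_k - p_k| \leq \alpha$ still holds after adding the noise on all surviving bins. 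This is why the $\log K$ factor can be replaced by $\log(1/(\beta\delta))$ when using mechanism (ii).

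Next I would handle the noise. For mechanism (i), the maximum of $K$ i.i.d. $\Lap(2/(\eps n))$ variables is at most $\frac{2}{\eps n}\log(2K/\beta)$ with probability $1-\beta/2$, so taking $n \geq \frac{8}{\eps\alpha}\log(2K/\beta)$ makes every coordinate of the noise at most $\alpha/2$. For mechanism (ii), only bins with $c_k \geq \Omega(\log(1/(\beta\delta))/\eps)$ survive thresholding; the $(\eps,\delta)$-DP guarantee is the standard stability-based histogram argument (the output changes on at most two coordinates between neighbors, and a bin crossing the threshold contributes the $\delta$ term). Taking $n \geq \frac{8}{\eps\alpha}\log(4/(\beta\delta))$ forces $T/n \leq \alpha/2$, so thresholded bins necessarily satisfied $p_k \leq \alpha$ (up to the sampling error already handled), and the released noisy counts are $\alpha/2$-accurate. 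Taking the minimum of the two stream-length requirements and the statistical requirement $\frac{1}{2\alpha^2}\log(4/\beta)$ yields the stated bound.

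The remaining items on $\argmax_k \tilde{p}_k$ follow from the stability of mechanism (ii). If $K \geq 2/\delta$ one uses mechanism (ii), and any bin $j$ returned must have had $c_j \geq 1$; the probability that $j \in \{k : c_k \geq 1\}$ is at most $n p_j$ by a union bound over the $n$ samples, and the $\argmax$ certainly lies in this set, giving claim~2. If $K < 2/\delta$ one can afford mechanism (i), and for $\argmax_k \tilde{p}_k = j$ to happen either some $X_i$ landed in $B_j$ (probability $\leq n p_j$) or the noise on $B_j$ exceeded $\max_k p_k / 2$ (probability $\leq 2\exp(-(\eps n/8)\max_k p_k)$ by the Laplace tail), giving claim~3. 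The main obstacle is the accuracy analysis for mechanism (ii) when $K$ can be infinite, because one must argue that the thresholding only discards bins whose true probability is already small and therefore does not inflate the error beyond $\alpha$; the rest of the argument is a routine combination of concentration bounds and standard Laplace / stability-based privacy analysis.
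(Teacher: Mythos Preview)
The paper does not give its own proof of this lemma; it is quoted from \cite{BunNSV15, Vadhan17, KV17} and invoked as a black box in the proof of Lemma~\ref{lem:dp}. Your sketch follows the standard route to this result and is essentially right: the plain Laplace histogram is $\eps$-DP and produces the $\frac{8}{\eps\alpha}\log\frac{2K}{\beta}$ term, the stability-based (thresholded) histogram is $(\eps,\delta)$-DP even for $K=\infty$ and produces the $\frac{8}{\eps\alpha}\log\frac{4}{\beta\delta}$ term, and Hoeffding handles the sampling error and yields the $\frac{1}{2\alpha^2}\log\frac{4}{\beta}$ term.

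One place to tighten is your argument for claim~3. You write that if $c_j=0$ and $\argmax_k\tilde p_k=j$ then ``the noise on $B_j$ exceeded $\max_k p_k/2$.'' That event is neither necessary nor sufficient: what is actually needed is $L_j\geq \hat p_{k^*}+L_{k^*}$ for $k^*=\argmax_k p_k$, which involves the random empirical frequency $\hat p_{k^*}$ and the Laplace noise on \emph{both} bins. A clean fix is to bound
\[
\pr\bigl[L_j-L_{k^*}\geq \hat p_{k^*}\bigr]\;\leq\;\E\!\left[\exp\!\left(-\tfrac{\eps n}{4}\,\hat p_{k^*}\right)\right]
\]
via the Laplace tail on each of $L_j$ and $-L_{k^*}$, and then evaluate the right-hand side with the binomial moment generating function of $n\hat p_{k^*}\sim\mathrm{Bin}(n,p_{k^*})$; for $\eps\leq 1$ one gets at most $\exp(-\eps n\, p_{k^*}/8)$, which matches the stated bound. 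Your claim~2 argument is correct once you make explicit that in the stability-based mechanism every empty bin deterministically outputs $0$, so a well-defined $\argmax$ must come from the set $\{k:c_k\geq 1\}$.
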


\begin{claim}[Gaussian Tail Bound]
Let $Z$ be a random variable distributed according to a
standard normal distribution (with mean 0 and variance 1).
For every $t > 0$,
$$
\pr[|Z| > t] \leq 2\exp(-t^2/2).
$$

\label{claim:gauss}
\end{claim}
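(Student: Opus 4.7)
The plan is to use a standard Chernoff-style argument exploiting the moment generating function of the Gaussian. First I would reduce to a one-sided tail bound by symmetry: since the standard normal density is even, $\pr[|Z| > t] = 2\pr[Z > t]$, so it suffices to show $\pr[Z > t] \leq \exp(-t^2/2)$ for all $t > 0$.

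Next, I would apply Markov's inequality to the random variable $e^{\lambda Z}$ for an arbitrary $\lambda > 0$, obtaining $\pr[Z > t] = \pr[e^{\lambda Z} > e^{\lambda t}] \leq e^{-\lambda t}\,\E[e^{\lambda Z}]$. Using the standard formula for the MGF of the standard normal, $\E[e^{\lambda Z}] = e^{\lambda^2/2}$ (which follows by completing the square inside the Gaussian integral), this gives $\pr[Z > t] \leq \exp(\lambda^2/2 - \lambda t)$. Optimizing over $\lambda > 0$ by setting $\lambda = t$ yields $\pr[Z > t] \leq \exp(-t^2/2)$, and combining with the symmetry reduction completes the bound.

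There is no real obstacle here; the only mildly delicate step is verifying the MGF identity $\E[e^{\lambda Z}] = e^{\lambda^2/2}$, which is a one-line computation via completing the square: $\int_{-\infty}^\infty e^{\lambda z} \cdot \frac{1}{\sqrt{2\pi}} e^{-z^2/2}\,dz = e^{\lambda^2/2} \int_{-\infty}^\infty \frac{1}{\sqrt{2\pi}} e^{-(z-\lambda)^2/2}\,dz = e^{\lambda^2/2}$. Alternatively, if one prefers to avoid MGFs entirely, one can prove $\pr[Z > t] \leq e^{-t^2/2}$ by the integral bound $\int_t^\infty e^{-z^2/2}\,dz \leq \int_t^\infty (z/t) e^{-z^2/2}\,dz = e^{-t^2/2}/t$, which gives the even slightly sharper Mills-ratio form; but the Chernoff approach above is the cleanest route to exactly the stated inequality.
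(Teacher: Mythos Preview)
Your argument is correct: the symmetry reduction, the Chernoff/Markov step with the Gaussian MGF, and the optimization at $\lambda = t$ all go through cleanly and yield exactly the stated bound. The paper itself does not supply a proof of this claim---it is simply stated as a standard fact---so there is nothing to compare against, and your Chernoff approach is the canonical way to establish it.
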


\section{Continual Observation}
\newcommand{\cp}{\mathrm{cp}}
\newcommand{\GK}{\mathrm{GK}}

We now describe how our one-shot approach can be used as a black box to obtain a continual observation solution \cite{DworkNPR10, ChanSS11}.
    
    \begin{algorithm}
        \KwData{Input stream $X = (x_1, \dots, x_n)$ for some $n > n_{\min} =  \Omega \left(\frac{1}{\eps\alpha^2}\log^2 n\log\left(\frac{|\calX|\log n}{\alpha\beta}\right)\right)$, privacy parameter $\eps$, approximation parameter $\alpha$,
        target quantile $q$}
        Let $s = 0$\\
        $\epsilon^* \leftarrow O(\alpha \epsilon/\log n)$\\
        $\alpha^* \leftarrow \alpha/2$\\
        $\cp(s) \leftarrow n_{\min}$ \tcc{stores stream checkpoints}
        Instantiate $\GK \leftarrow GK(\alpha^*)$ \\
        $v_s \leftarrow \perp$ \tcc{holds $\alpha$-approximate $q$-quantile}
        \For{stream element $x_s\in X$}{
            $s \leftarrow s+1$\\
            $\GK.insert(x_s)$\\
            \uIf{$s = \lceil \cp(s-1) \rceil$}{
                $v_s \leftarrow \dpexpgk(\GK, \epsilon^*,
                \alpha^*, q)$ \\
                $\cp(s) \leftarrow \cp({s-1}) (1 + \alpha/2)$\\
            }
            \uElse{
                $v_s \leftarrow v_{s-1}$\\
                $\cp(s) \leftarrow \cp({s-1})$\\
            }
        }
        \caption{Continual Observation DP Quantiles}
        \label{alg:CO}
    \end{algorithm}

	\begin{lemma} \label{lem:geometric}
	If $x \in X$ is an $\alpha/2$-approximate $q$-quantile for a data set $X$ (for some $q \in [0,1]$), then it is an $\alpha$-approximate $q$-quantile for any data set $X' \supset X$ such that $|X'| \leq (1+\alpha/2)|X|$.
	\end{lemma}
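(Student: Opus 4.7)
The plan is to track how the $\rank$ interval of $x$ and the target rank $\lceil qn \rceil$ both shift as we pass from $X$ to $X'$, and then verify that the $\alpha/2$ slack in $X$ is enough to absorb these shifts after loosening the tolerance to $\alpha$.

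Let $n = |X|$, $n' = |X'|$, and $k := n' - n$, so $k \leq (\alpha/2) n$ by hypothesis. Writing $Y$ for the multiset of $k$ elements in $X' \setminus X$, let $k_< = |\{y \in Y : y < x\}|$ and $k_\leq = |\{y \in Y : y \leq x\}|$; note $0 \leq k_< \leq k_\leq \leq k$. Directly from Definition~\ref{def:basics}, $\rank(X', x) = [r_{\min}^X(x) + k_<,\ r_{\max}^X(x) + k_\leq]$, where $r_{\min}^X$ and $r_{\max}^X$ denote the rank endpoints with respect to $X$. By hypothesis there is some $r^* \in [r_{\min}^X(x), r_{\max}^X(x)]$ with $|r^* - \lceil qn \rceil| \leq (\alpha/2) n$, and the goal reduces to exhibiting some $r' \in \rank(X', x)$ with $|r' - \lceil qn' \rceil| \leq \alpha n'$.

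I would proceed by case analysis on the position of $\lceil qn' \rceil$ relative to $\rank(X', x)$. If $\lceil qn' \rceil$ lies inside the new rank interval, take $r' = \lceil qn' \rceil$ and the distance is zero. If $\lceil qn' \rceil < r_{\min}^X(x) + k_<$, take $r' = r_{\min}^X(x) + k_<$; using $r_{\min}^X(x) \leq r^* \leq \lceil qn \rceil + (\alpha/2) n$ together with the monotonicity $\lceil qn \rceil \leq \lceil qn' \rceil$ (from $n' \geq n$ and $q \geq 0$), this yields $r' - \lceil qn' \rceil \leq (\alpha/2) n + k \leq \alpha n \leq \alpha n'$. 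In the symmetric case $\lceil qn' \rceil > r_{\max}^X(x) + k_\leq$, take $r' = r_{\max}^X(x) + k_\leq$; using $r_{\max}^X(x) \geq r^* \geq \lceil qn \rceil - (\alpha/2) n$ and the subadditivity bound $\lceil qn' \rceil - \lceil qn \rceil \leq \lceil qk \rceil \leq k$, the distance $\lceil qn' \rceil - r'$ is at most $(\alpha/2) n + k - k_\leq \leq \alpha n \leq \alpha n'$. In every case the $\alpha$-approximate $q$-quantile condition for $X'$ is satisfied.

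The delicate step is handling the ceilings without losing an extra additive constant. The key observation is $\lceil qn' \rceil - \lceil qn \rceil \leq \lceil qk \rceil$, via $\lceil a+b \rceil \leq \lceil a \rceil + \lceil b \rceil$ applied to $qn' = qn + qk$, combined with $\lceil qk \rceil \leq k$ since $q \leq 1$. This keeps the target shift entirely inside the $k \leq (\alpha/2) n$ budget, which together with the $(\alpha/2)n$ slack from $r^*$ gives exactly the needed $\alpha n$ bound and avoids any off-by-one looseness.
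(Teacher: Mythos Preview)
Your proof is correct and follows the same core idea as the paper's: the rank of $x$ can increase by at most $|X' \setminus X| \leq (\alpha/2)|X|$ when passing from $X$ to $X'$, and combining this shift with the original $(\alpha/2)|X|$ slack yields the $\alpha$-approximate guarantee for $X'$. Your treatment is more careful than the paper's own proof, which works directly with the scalar rank $\rank_X(x) = \sum_{y \in X} \ind[y \leq x]$, ignores the ceilings in $\lceil qn \rceil$, and does not explicitly convert the final bounds from $|X|$ to $|X'|$; your case analysis and handling of $\lceil qn' \rceil - \lceil qn \rceil \leq k$ fill in exactly those gaps.
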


	\begin{proof}
		Since $x$ is an $\alpha/2$-approximate $q$-quantile, we have that
		\begin{align*}
			(1 - \alpha/2) |X| \leq \rank_{X} (x) \leq  (1 + \alpha/2) |X|.
		\end{align*}
		We then have that
		\begin{align*}
			\rank_{X'} (x) &= \sum_{y \in X'} \ind[y \leq x] \\
			&= \sum_{y \in X} \ind[y \leq x] + \sum_{y \in X' \backslash X} \ind[y \leq x]\\
			\Rightarrow \sum_{y \in X} \ind[y \leq x] &\leq  \sum_{y \in X'} \ind[y \leq x] \leq \sum_{y \in X} \ind[y \leq x] + \sum_{y \in X' \backslash X} \ind[y \leq x] \\
			\Rightarrow (1 - \alpha/2) |X| &\leq  \sum_{y \in X'} \ind[y \leq x] \leq (1 + \alpha/2)|X| + (|X'| - |X|) \\
			\Rightarrow (1 - \alpha/2) |X| &\leq  \sum_{y \in X'} \ind[y \leq x] \leq (1 + \alpha/2)|X| + \alpha |X|/2 \\
			\Rightarrow (1 - \alpha) |X| &\leq  \sum_{y \in X'} \ind[y \leq x] \leq (1 + \alpha)|X|,
		\end{align*}
		i.e., $x$ is also an $\alpha$-approximate $q$-quantile for $X'$, as required.
	\end{proof}
	
	\begin{lemma}
	    For any $\beta\in(0, 1]$,
	    with probability $\geq 1-\beta$, 
	    for all $s \in \{ \cp(s') : s' \in [n] \}$, 
	    $v_{s}$ is an $\alpha/2$-approx.~$q$-quantile for
	    $X[1 : s]$.
	\end{lemma}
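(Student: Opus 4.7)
The plan is to apply the per-call accuracy guarantee of $\dpexpgk$ (Theorem~\ref{thm:utility_universal_bound}) at each checkpoint and take a union bound over the (small) set of checkpoints. Because the checkpoints form a geometric sequence $\cp(0) = n_{\min}, \cp(1) = n_{\min}(1+\alpha/2), \cp(2) = n_{\min}(1+\alpha/2)^2, \dots$ that is capped at $n$, the number of distinct checkpoint values is
\[
K := |\{\cp(s') : s' \in [n]\}| \;=\; O\!\left(\frac{\log(n/n_{\min})}{\log(1+\alpha/2)}\right) \;=\; O\!\left(\frac{\log n}{\alpha}\right).
\]

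Next, I would verify the hypothesis of Theorem~\ref{thm:utility_universal_bound} at each checkpoint. Set the per-call failure parameter to $\beta^* = \beta/K = \Theta(\alpha\beta/\log n)$. With the instantiated parameters $\alpha^* = \alpha/2$ and $\epsilon^* = \Theta(\alpha\epsilon/\log n)$, Theorem~\ref{thm:utility_universal_bound}(2) guarantees that, on input $X[1:s]$, $\dpexpgk$ returns an $\alpha^*$-approximate $q$-quantile with probability $\geq 1-\beta^*$ provided
\[
s \;\geq\; \Omega\!\left(\frac{\log(|\calX|/\beta^*)}{\alpha^* \epsilon^*}\right) \;=\; \Omega\!\left(\frac{\log n \cdot \log\!\left(\frac{|\calX|\log n}{\alpha\beta}\right)}{\alpha^2\epsilon}\right).
\]
The assumed lower bound $n_{\min} = \Omega\!\left(\frac{1}{\epsilon\alpha^2}\log^2 n\,\log\!\left(\frac{|\calX|\log n}{\alpha\beta}\right)\right)$ dominates this quantity (with room to spare, courtesy of the extra $\log n$), and every checkpoint satisfies $s \geq n_{\min}$, so the precondition holds at every checkpoint.

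Finally, I would apply a union bound over the $K$ checkpoints: the probability that there exists some checkpoint $s \in \{\cp(s') : s' \in [n]\}$ at which $v_s$ fails to be an $\alpha/2$-approximate $q$-quantile for $X[1:s]$ is at most $K \cdot \beta^* = \beta$. The claim follows by taking the complementary event. The main obstacle, really the only delicate part, is the parameter bookkeeping in the second paragraph: one must confirm that the stated $n_{\min}$ is large enough to meet Theorem~\ref{thm:utility_universal_bound}'s stream-length requirement after rescaling $\epsilon$ by $1/K \approx \alpha/\log n$ and $\beta$ by $1/K$; the $\log^2 n$ factor in $n_{\min}$ is exactly what absorbs the $\log n$ blow-up from $\epsilon^*$ together with the $\log n$ already present inside the $\log(|\calX|\log n/(\alpha\beta))$ term.
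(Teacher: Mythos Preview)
Your proposal is correct and follows essentially the same approach as the paper: bound the number of checkpoints by $O((1/\alpha)\log n)$ via the geometric spacing, apply Theorem~\ref{thm:utility_universal_bound} at each checkpoint with the rescaled failure probability $\beta^* = \Theta(\alpha\beta/\log n)$, and take a union bound. If anything, you are more explicit than the paper about verifying that $n_{\min}$ meets the stream-length hypothesis of Theorem~\ref{thm:utility_universal_bound} after the $\epsilon^* = \Theta(\alpha\epsilon/\log n)$ rescaling; the paper handles this only implicitly.
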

	
	\begin{proof}
	    First we bound the size of the set of checkpoints $\{ \cp(s') : s' \in [n] \}$. Since a new checkpoint value is generated only when $s \geq (1 + \alpha/2) \cp (s-1)$, it follows that for any new checkpoint where
	    $\cp(s')\neq\cp(s'-1)$, we have
	    $\cp(s') \geq (1 + \alpha/2) \cp(s'-1)$. 
	    Using Theorem~\ref{thm:utility_universal_bound},
	    we set the first checkpoint value to
	    $\frac{\log |\calX|/\beta}{12 \alpha^* \epsilon^*}$.
	    \footnote{The last checkpoint might occur at 
	    $(1+\alpha/2)n$. We may ignore checkpoints past $n$.}
	    As a result, there are at most
	    $k \leq \log_{1+\alpha/2} n = O(\frac{1}{\alpha}\log n)$
	    checkpoints using the fact that for all $x > -1$,
	    $\frac{x}{1+x} \leq \log(1+x) \leq x$.

	    Since $v_{\cp(s)}$ is the output of $\dpexpgk$ given a GK sketch with accuracy parameter $\alpha^*$ and privacy parameter $\epsilon^*$ it follows that with probability $1 - \beta^*$ where $\beta^* = O(\alpha \beta/\log n)$, $\rank_{X[1:s']} \in (1-\alpha/2, 1+\alpha/2) qn$. The stated result follows by applying the union bound over all private approximate quantile computations at checkpoints.
	\end{proof}
	
	\begin{theorem}
     	Let $\eps, \alpha > 0$, $n\in\integers$.
     	For any $\beta\in(0, 1]$,
	    with probability $\geq 1-\beta$, 
Algorithm~\ref{alg:CO} maintains an $\alpha$-approximate $q$-quantile at every point $s$ in the data stream 
$s = \Omega \left(\frac{1}{\eps\alpha^2}\log^2 n\log\left(\frac{|\calX|\log n}{\alpha\beta}\right)\right)$. Furthermore,
Algorithm~\ref{alg:CO} satisfies $\epsilon$-DP.
	\end{theorem}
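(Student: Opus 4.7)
The plan is to combine the utility guarantee at checkpoints (established by the preceding lemma) with the geometric spacing property (Lemma~\ref{lem:geometric}) to extend the guarantee to all time steps, and to apply basic composition over the $\dpexpgk$ invocations for privacy. These are two essentially independent arguments.

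For utility, I would first observe that by the construction of Algorithm~\ref{alg:CO}, at any time $s \geq n_{\min}$ the stored estimate $v_s$ equals $v_{\hat{s}}$, where $\hat{s}$ is the most recent checkpoint with $\hat{s} \leq s$. Consecutive checkpoints $\hat{s}, \hat{s}'$ satisfy $\hat{s}' \geq (1+\alpha/2)\hat{s}$ by the update rule $\cp(s) \leftarrow \cp(s-1)(1+\alpha/2)$, so every $s \in [\hat{s}, \hat{s}')$ satisfies $s \leq (1+\alpha/2)\hat{s}$. Conditioned on the high-probability event of the preceding lemma, $v_{\hat{s}}$ is an $(\alpha/2)$-approximate $q$-quantile for the prefix $X[1:\hat{s}]$. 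Applying Lemma~\ref{lem:geometric} with $X \mapsto X[1:\hat{s}]$ and $X' \mapsto X[1:s]$ (so that $|X'| \leq (1+\alpha/2)|X|$) then yields that $v_s$ is an $\alpha$-approximate $q$-quantile for $X[1:s]$, as required.

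For privacy, the algorithm is an adaptive composition of at most $k = O((1/\alpha)\log n)$ invocations of $\dpexpgk$ (one per checkpoint), each with privacy parameter $\epsilon^* = c\alpha\epsilon/\log n$ for a suitable small constant $c$, interleaved with purely deterministic bookkeeping (incrementing $s$, updating $\cp(s)$, and re-emitting the previous output at non-checkpoint steps). By Theorem~\ref{thm:utility_universal_bound} each such call is $\epsilon^*$-DP; by basic (sequential) composition together with the post-processing invariance of differential privacy, the overall mechanism is $(k\epsilon^*)$-DP, and choosing $c$ small enough makes this exactly $\epsilon$-DP.

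The main obstacle—really more of a bookkeeping check than a conceptual hurdle—is to pick the first checkpoint $n_{\min}$ large enough that Theorem~\ref{thm:utility_universal_bound} applies to each $\dpexpgk$ call with per-call failure probability $\beta^* = O(\alpha\beta/\log n)$, small enough that a union bound over the $k = O((1/\alpha)\log n)$ checkpoints keeps the total failure probability below $\beta$. Substituting $\alpha^* = \alpha/2$, $\epsilon^* = \Theta(\alpha\epsilon/\log n)$, and $\beta^* = \Theta(\alpha\beta/\log n)$ into the sample-complexity requirement of Theorem~\ref{thm:utility_universal_bound}~(2) recovers (up to constants) the threshold $n_{\min} = \Omega\bigl(\tfrac{1}{\epsilon\alpha^2}\log^2 n \log(\tfrac{|\calX|\log n}{\alpha\beta})\bigr)$ stated in the theorem, since the first checkpoint must already sit above this threshold. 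All remaining pieces then follow by direct chaining of the earlier lemmas.
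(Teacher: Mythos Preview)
Your proposal is correct and follows essentially the same approach as the paper: invoke the preceding lemma for $(\alpha/2)$-accuracy at checkpoints, apply Lemma~\ref{lem:geometric} to propagate this to every intermediate time step via the geometric spacing of checkpoints, and obtain $\epsilon$-DP by basic composition over the $O((1/\alpha)\log n)$ calls to $\dpexpgk$ (noting, as the paper does, that the checkpoint times are data-independent and the non-checkpoint outputs are post-processing). Your treatment is in fact slightly more explicit than the paper's in spelling out the composition argument and the substitution that yields $n_{\min}$.
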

	\begin{proof}
	    To see that Algorithm~\ref{alg:CO} is $\epsilon$-DP, we observe that the output of this algorithm throughout the data stream can be summarized by its outputs at the checkpoints $\{\cp(s) : s \in [n] \}$ (the points in the stream at which a new checkpoint is reached and a new value released are known publicly, so this suffices for privacy analysis).  It follows that there is a choice of $\epsilon^* = \Omega(\alpha \epsilon/\log n)$ that gives us an $\epsilon$-DP mechanism.
	
	    We now prove the accuracy guarantee. For any arbitrary $s \geq \frac{\log |\calX|/\beta}{12 \alpha^* \epsilon^*}$, i.e., the default checkpoint value, the output $v_s$ of the algorithm will equal the value at the most recent checkpoint, i.e., $v_{\cp(s)}$.  Then, since $|X[1:s]| = s \leq (1+\alpha/2) \cp(s) \leq  (1 + \alpha/2) |X[1:\cp(s)]|$, by Lemma~\ref{lem:geometric} it follows that $\rank_{X[1:s]} (v_s) \in (1 - \alpha, 1 + \alpha) qn$, i.e., $v_s$ is an $\alpha$-approximate $q$-quantile for $X[1:s]$.
	\end{proof}
	
	We conclude by mentioning that our continual observation solution incurs roughly a $O(\log n / \alpha)$ overhead in the space complexity, which is in line with classical works in differential privacy and adversarially robust streaming \cite{BJWY20, DworkNPR10}.
	Concurrently, Stemmer and Kaplan \cite{kaplan2021note} developed a notion of \emph{streaming sanitizers} which yields a continual observation guarantee ``for free'', without incurring such an overhead over the one-shot case.

\section{Experimental Evaluation}
\label{sec:experiments}
\pgfplotstableread{
x         y    y-min  y-max
100000	0.000037	0.000030	0.000022
200000	0.000039	0.000036	0.000047
300000	0.000034	0.000031	0.000035
400000	0.000030	0.000027	0.000038
}{\GkexpSmallUnif}

\pgfplotstableread{
x         y    y-min  y-max
100000	0.000009	0.000008	0.000016
200000	0.000005	0.000005	0.000012
300000	0.000004	0.000004	0.000007
400000	0.000002	0.000002	0.000002
}{\FullSmallUnif}

\pgfplotstableread{
x         y    y-min  y-max
100000 	 0.000135 	 0.000129 	 0.000148
200000 	 0.000201 	 0.000187 	 0.000233
300000 	 0.000150 	 0.000141 	 0.000137
400000 	 0.000254 	 0.000148 	 0.000135
}{\GkexpSmallNorm}

\pgfplotstableread{
x         y    y-min  y-max
100000 	 0.000056 	 0.000052 	 0.000082
200000 	 0.000029 	 0.000027 	 0.000039
300000 	 0.000017 	 0.000016 	 0.000024
400000 	 0.000019 	 0.000016 	 0.000032
}{\FullSmallNorm}

\pgfplotstableread{
x         y    
100000 	 28575
200000 	 26565
300000 	 25296
400000 	 25011
}{\GkexpSmallUnifSizes}

\pgfplotstableread{
x         y    
100000 	 100000
200000 	 200000
300000 	 300000
400000 	 400000
}{\FullSmallUnifSizes}

\pgfplotstableread{
x         y    
100000 	 28656
200000 	 26646
300000 	 25398
400000 	 25053
}{\GkexpSmallNormSizes}

\pgfplotstableread{
x         y    
100000 	 100000
200000 	 200000
300000 	 300000
400000 	 400000
}{\FullSmallNormSizes}

\pgfplotstableread{
x         y    y-min  y-max
100000	0.040248	0.021147	0.022083
200000	0.041315	0.024133	0.022454
300000	0.040818	0.023345	0.024196
400000	0.038616	0.022057	0.023917
}{\GkexpLargeUnif}

\pgfplotstableread{
x         y    y-min  y-max
100000	0.000009	0.000008	0.000014
200000	0.000005	0.000005	0.000011
300000	0.000003	0.000002	0.000006
400000	0.000002	0.000002	0.000002
}{\FullLargeUnif}

\pgfplotstableread{
x         y    y-min  y-max
100000 	 0.211188 	 0.194443 	 0.192134
200000 	 0.210906 	 0.182216 	 0.198140
300000 	 0.215453 	 0.183121 	 0.188972
400000 	 0.212650 	 0.188389 	 0.187417
}{\GkexpLargeNorm}

\pgfplotstableread{
x         y    y-min  y-max
100000 	 0.000065 	 0.000057 	 0.000089
200000 	 0.000023 	 0.000021 	 0.000048
300000 	 0.000018 	 0.000015 	 0.000016
400000 	 0.000017 	 0.000015 	 0.000028
}{\FullLargeNorm}

\pgfplotstableread{
x         y    
100000 	 33
200000 	 33
300000 	 33
400000 	 33
}{\GkexpLargeUnifSizes}

\pgfplotstableread{
x         y    
100000 	 100000
200000 	 200000
300000 	 300000
400000 	 400000
}{\FullLargeUnifSizes}

\pgfplotstableread{
x         y    
100000 	 30
200000 	 30
300000 	 30
400000 	 30
}{\GkexpLargeNormSizes}

\pgfplotstableread{
x         y    
100000 	 100000
200000 	 200000
300000 	 300000
400000 	 400000
}{\FullLargeNormSizes}

\pgfplotstableread{
x   y   y-min   y-max
0.1 	 0.000199 	 0.000192 	 0.000328
0.5 	 0.000105 	 0.000095 	 0.000123
1 	 0.000118 	 0.000105 	 0.000073
5 	 0.000137 	 0.000034 	 0.000032
}{\GkexpSmallUnifEps}

\pgfplotstableread{
x   y   y-min   y-max
0.1 	 0.000207 	 0.000195 	 0.000373
0.5 	 0.000048 	 0.000044 	 0.000095
1 	 0.000024 	 0.000022 	 0.000025
5 	 0.000020 	 0.000019 	 0.000025
}{\FullSmallUnifEps}

\pgfplotstableread{
x   y   y-min   y-max
0.1 	 0.000567 	 0.000537 	 0.001105
0.5 	 0.000165 	 0.000136 	 0.000185
1 	 0.000148 	 0.000122 	 0.000133
5 	 0.000138 	 0.000126 	 0.000129
}{\GkexpSmallNormEps}

\pgfplotstableread{
x   y   y-min   y-max
0.1 	 0.000605 	 0.000581 	 0.001227
0.5 	 0.000087 	 0.000077 	 0.000143
1 	 0.000061 	 0.000056 	 0.000081
5 	 0.000031 	 0.000028 	 0.000029
}{\FullSmallNormEps}

\pgfplotstableread{
x   y   y-min   y-max
0.1 	 0.094414 	 0.073439 	 0.080084
0.5 	 0.107291 	 0.065611 	 0.068421
1 	 0.117251 	 0.066682 	 0.063510
5 	 0.110183 	 0.059223 	 0.068683
}{\GkexpLargeUnifEps}

\pgfplotstableread{
x   y   y-min   y-max
0.1 	 0.000217 	 0.000199 	 0.000357
0.5 	 0.000042 	 0.000037 	 0.000064
1 	 0.000026 	 0.000024 	 0.000040
5 	 0.000020 	 0.000019 	 0.000025
}{\FullLargeUnifEps}

\pgfplotstableread{
x   y   y-min   y-max
0.1 	 0.209547 	 0.185342 	 0.185598
0.5 	 0.205141 	 0.197839 	 0.188772
1 	 0.194986 	 0.162147 	 0.190140
5 	 0.194043 	 0.168914 	 0.176988
}{\GkexpLargeNormEps}

\pgfplotstableread{
x   y   y-min   y-max
0.1 	 0.000522 	 0.000507 	 0.001426
0.5 	 0.000091 	 0.000086 	 0.000190
1 	 0.000052 	 0.000045 	 0.000076
5 	 0.000024 	 0.000023 	 0.000026
}{\FullLargeNormEps}

\pgfplotstableread{
x   y   y-min   y-max
0.1	0.010178	0.002032	0.017535
0.5	0.008884	0.002736	0.015853
1	0.009588	0.001967	0.017143
5	0.009606	0.001988	0.016456
}{\TaxiTwo}

\pgfplotstableread{
x   y   y-min   y-max
0.1	0.003025	0.002519	0.003483
0.5	0.003004	0.002439	0.003508
1	0.003030	0.002495	0.003572
5	0.002998	0.002479	0.003486
}{\TaxiThree}

\pgfplotstableread{
x   y   y-min   y-max
0.1	0.000237	0.000049	0.000476
0.5	0.000118	0.000037	0.000163
1	0.000098	0.000041	0.000158
5	0.000091	0.000040	0.000148 
}{\TaxiFour}

\pgfplotstableread{
x   y   y-min   y-max
0.1	0.000047	0.000003	0.000103
0.5	0.000041	0.000006	0.000077
1	0.000044	0.000007	0.000079
5	0.000073	0.000063	0.000082
}{\TaxiFive}

\pgfplotstableread{
x   y   y-min   y-max
0.1	0.000030	0.000004	0.000067
0.5	0.000009	0.000002	0.000021
1	0.000007	0.000002	0.000014
5	0.000003	0.000001	0.000006
}{\TaxiFull}

\pgfplotstableread{
x         y   
Full 	 1710672
1E-2 	 1929
1E-3 	 19182
1E-4     142365
1E-5     940836
}{\TaxiSizes}

\pgfplotstableread{
x   y   y-min   y-max
0.1	0.029838	0.007035	0.061898
0.5	0.029229	0.005445	0.054302
1	0.031480	0.007561	0.058147
5	0.032375	0.007250	0.060064
}{\EthCOTwo}

\pgfplotstableread{
x   y   y-min   y-max
0.1	0.016639	0.012351	0.021676
0.5	0.015923	0.011719	0.021117
1	0.015704	0.011720	0.021162
5	0.014832	0.012137	0.017394	
}{\EthCOThree}

\pgfplotstableread{
x   y   y-min   y-max
0.1	0.000576	0.000106	0.001011
0.5	0.000531	0.000097	0.000989
1	0.000515	0.000117	0.001004
5	0.000512	0.000075	0.001003
}{\EthCOFour}

\pgfplotstableread{
x   y   y-min   y-max
0.1	0.000247	0.000089	0.000500
0.5	0.000238	0.000043	0.000467
1	0.000233	0.000047	0.000475
5	0.000220	0.000030	0.000488	
}{\EthCOFive}

\pgfplotstableread{
x   y   y-min   y-max
0.1	0.000139	0.000027	0.000249
0.5	0.000134	0.000028	0.000236
1	0.000071	0.000000	0.000215
5	0.000000	0.000000	0.000000
}{\EthCOFull}

In this section, we experimentally evaluate our sublinear-space exponential-based mechanism, $\dpexpgkGumb$. We study how well our algorithm
performs in terms of accuracy and space usage. 
Note that we proved space complexity and accuracy bounds for the algorithm; part of this section validates that the space complexity of the algorithm is indeed very small in practice, and that the accuracy is typically closely tied to the approximation parameter $\alpha$.


Our main baselines will be $\dpexpfull$ (without use of the GK sketch but with the use of DP) and the true quantile value. The true quantile values are used to compute \textit{relative error}, the absolute value of the difference between the estimated quantile and the true quantile, divided by the standard deviation of the data set.

\paragraph{\textbf{$\dpexpfull$}}
The use of the exponential mechanism to $\epsilon$-privately compute $q$-quantiles
without the use of the sketching data structure. The space used by
this algorithm is of order $\Theta(n)$.
\arxiv{See Section~\ref{sec:fullspace} for further implementation details.}
We experimentally validate $\dpexpgkGumb$ 
varying
the following parameters:
$\eps > 0$ (the privacy parameter),
$n$ (the stream length), and
$\alpha$ (the approximation parameter).
We show results on both synthetically-generated datasets and
time-series real-world datasets.

The real-world time-series datasets~\citep{DG17} are:
(1) \textbf{Taxi Service Trajectory}: A dataset from the UCI
machine learning repository describing
trajectories performed by all 442 taxis (at the time) in the city
of Porto in Portugal~\citep{Moreira-MatiasGFMD13}.
This dataset contains real-valued
attributes with about 1.5 million instances.
(2) \textbf{Gas Sensor Dataset}: A UCI repository dataset containing
recordings of 16 chemical sensors exposed to varying
concentrations of two gas mixtures~\citep{FONOLLOSA2015618}.
The sensor measurements are acquired
continuously during a 12-hour time range and contains about
4 millions instances.

\subsection{Synthetic Datasets}

In this section, we compare our methods on synthetically generated
datasets. We vary the following parameters:
\begin{enumerate}
\item\textbf{Privacy Parameter $\eps$}:
For each choice of $\eps \geq 0$, we run the DP algorithms 
over 100 trials and output means and confidence intervals over these trials.
\item\textbf{Stream Length $n$}: The size of the stream
$x_1, \ldots, x_n$ received by the sketching algorithm.
When we vary $n$, we fix $\eps = 1$.
\item\textbf{Approximation Parameter $\alpha$}: The
approximation factor used by the internal GK sketch in our solution. 
\item\textbf{Data Distribution}: We generate data from
uniform and Gaussian distributions.
We use a uniform distribution in range $[0, 1]$ 
(i.e., $U(0, 1)$) or a normal distribution
with mean $0$ and variance $1$ (i.e., $N(0, 1)$), clipped to the interval $[-10,10]$.\footnote{Clipping is required for the exponential mechanism, as it must operate on some bounded interval of values. In any case, we never expect to see samples from $N(0,1)$ that lie outside $[-10,10]$ for any practical purpose; the probability for any given sample to satisfy this is minuscule, at about $\approx 10^{-21}$.}
\end{enumerate}

We show results on space usage and relative error (both plotted mostly on logarithmic scales) 
from non-DP estimates, as we vary the parameters listed above. Here, the \emph{relative error} 
is the absolute difference between the DP estimate and the true non-DP quantile value divided by the standard deviation of the data set.
For plots that compare the relative error to the stream length
(e.g., Figures~\ref{fig:SmallUnifError} and~\ref{fig:SmallNormError}),
the relative error are on the y-axis while the
stream length or the privacy parameter are on the x-axis.
For plots that compare the data structure size to the stream length
(e.g., Figures~\ref{fig:SmallUnifSize} and~\ref{fig:SmallNormSize}), the sizes are on the y-axis
while the stream length is be on the x-axis.
We graph the mean relative error over 100 trials of the exponential mechanism per experiment, as well as the
$80\%$ confidence interval computed by taking the $10$th and $90$th percentile.
While the quantile to be estimated can also be varied,
we fix it to $q = 0.5$ (the median) throughout as the relative error and the space usage generally do not seem to
vary much with choice of $q$.


In Figures~\ref{fig:SmallUnif}
and~\ref{fig:SmallNorm}
we vary the stream length
$n$ for an approximation factor of $\alpha = 10^{-4}$. The streams are either
normally or uniformly distributed.
In Figures~\ref{fig:SmallUnifError} and~\ref{fig:SmallUnifSize}, we compare 
$\dpexpgkGumb$ (space strongly sublinear in $n$)
vs. $\dpexpfull$ (uses space of $O(n)$) in terms of space usage
and accuracy.
We verify experimentally
that $\dpexpfull$ will use more space than $\dpexpgkGumb$.
On the other hand, the space savings incurred by $\dpexpgkGumb$ are expected to
create some drop in accuracy; recall that such a drop must occur even for non-private streaming algorithms.
Figures~\ref{fig:SmallNormError} and~\ref{fig:SmallNormSize} the streams
are from a normal distribution instead of uniform.

In general we find that although our method $\dpexpgkGumb$ incurs higher error, 
in absolute terms it remains quite small and the $95\%$ confidence intervals tend to be adjacent for $\dpexpgkGumb$ and $\dpexpfull$.
However, there is a clear trend of an exponential gap developing between their respective space usages which is a natural consequence
of the space complexity guarantee of the GK sketch. This holds for both distributions studied (Figures~\ref{fig:SmallUnif} and Figure~\ref{fig:SmallNorm}).

In Figures~\ref{fig:LargeUnif} and \ref{fig:LargeNorm}, we vary
the stream length for a relatively large approximation factor of 0.1. 
Here we see that compared to the
non-approximate method we incur far higher error, although there is also a 
concomitant increase in the space savings. This is not a typical 
use-case since the non-private error can itself be large, but we get a complete picture
of how space usage and performance vary with this user-defined parameter.

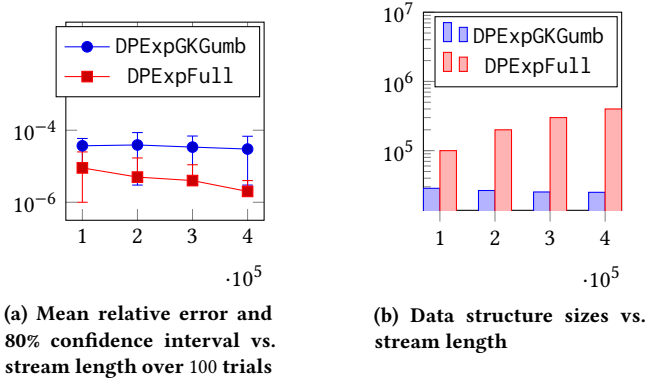
\begin{figure}
    \begin{subfigure}{.2\textwidth}
        \begin{tikzpicture}
            \begin{axis}[
            baseline=(current bounding box.center),
            height=120,width=120,,
            ymode = log,
            ymax = 0.1,
            bar width=10000,
            xtick={100000,200000,300000,400000,500000},
            ytick={1e-6,1e-4},
            ]
            
            \addplot+[error bars/.cd,
            y dir=both,y explicit]
            table [y error plus=y-max, y error minus=y-min] {\GkexpSmallUnif};
            \addplot+[error bars/.cd,
            y dir=both,y explicit]
            table [y error plus=y-max, y error minus=y-min] {\FullSmallUnif};
            \legend{$\dpexpgkGumb$,$\dpexpfull$}
            \end{axis}
        \end{tikzpicture}
        \caption{Mean relative error and 80\% confidence interval vs. stream length over $100$ trials}
        \label{fig:SmallUnifError}
    \end{subfigure}
    \hfill
    \begin{subfigure}{.2\textwidth}
        \begin{tikzpicture}
            \begin{axis}[
            baseline=(current bounding box.center),
            height=120,width=120,,
            ymode = log,
            ybar=0pt,
            restrict y to domain=0:500000,
            ymax = 10000000,
            bar width=30000,
            xtick={100000,200000,300000,400000,500000},
            xtick pos=left,
            ytick pos=left,
            ]
            
            \addplot+[error bars/.cd,
            y dir=both,y explicit]
            table {\GkexpSmallUnifSizes};
            \addplot+[error bars/.cd,
            y dir=both,y explicit]
            table {\FullSmallUnifSizes};
            \legend{$\dpexpgkGumb$,$\dpexpfull$}
            \end{axis}
        \end{tikzpicture}
        \caption{Data structure sizes vs. stream length \newline \newline}
        \label{fig:SmallUnifSize}
    \end{subfigure}
    \caption{$\dpexpgkGumb$ versus $\dpexpfull$ for uniformly random data, $\alpha = 10^{-4}$, $\epsilon = 1$, $q = 0.5$}
    \label{fig:SmallUnif}
\end{figure}

\begin{figure}
    \begin{subfigure}{.2\textwidth}
    \centering
    \begin{tikzpicture}
        \begin{axis}[
        height=120,width=120,,
        ymode = log,
        ymax = 0.1,
        bar width=30000,
        xtick={100000,200000,300000,400000,500000},
        ytick={1e-6,1e-4},
        ]
        \addplot+[error bars/.cd,
            y dir=both,y explicit]
            table [y error plus=y-max, y error minus=y-min] {\GkexpSmallNorm};
            \addplot+[error bars/.cd,
            y dir=both,y explicit]
            table [y error plus=y-max, y error minus=y-min] {\FullSmallNorm};
    \legend{$\dpexpgkGumb$,$\dpexpfull$}
    \end{axis}
\end{tikzpicture}
    \caption{Mean relative error and 80\% confidence interval vs. stream length over $100$ trials}
    \label{fig:SmallNormError}
\end{subfigure}
    \hfill
    \begin{subfigure}{.2\textwidth}
        \begin{tikzpicture}
            \begin{axis}[
            baseline=(current bounding box.center),
            height=120,width=120,,
            ymode = log,
            ybar=0pt,
            ymax = 10000000,
            bar width=30000,
            xtick={100000,200000,300000,400000,500000},
            xtick pos=left,
            ytick pos=left,
            ]
            
            \addplot+[error bars/.cd,
            y dir=both,y explicit]
            table {\GkexpSmallNormSizes};
            \addplot+[error bars/.cd,
            y dir=both,y explicit]
            table {\FullSmallNormSizes};
            \legend{$\dpexpgkGumb$,$\dpexpfull$}
            \end{axis}
        \end{tikzpicture}
        \caption{Data structure sizes vs. stream length \newline}
        \label{fig:SmallNormSize}
    \end{subfigure}
    \caption{$\dpexpgkGumb$ versus $\dpexpfull$ for normally distributed data, $\alpha = 10^{-4}$, $\epsilon = 1$, $q = 0.5$}
    \label{fig:SmallNorm}
\end{figure}
        
\begin{figure}
\begin{subfigure}{0.2\textwidth}
    \centering
    \begin{tikzpicture}
            \begin{axis}[
            baseline=(current bounding box.center),
            height=120,width=120,,
            restrict y to domain=0:0.1,
            ymax = 0.0008,
            bar width=30000,
            xtick={100000,200000,300000,400000,500000},
            ytick = {0,0.0001, 0.0003,  0.0005},
            yticklabels = {0,$10^{-4}$, $3 \cdot 10^{-4}$, $5 \cdot 10^{-4}$},
            scaled y ticks = false,
            ]
            
            \addplot+[error bars/.cd,
            y dir=both,y explicit]
            table [y error plus=y-max, y error minus=y-min] {\GkexpSmallUnif};
            \addplot+[error bars/.cd,
            y dir=both,y explicit]
            table [y error plus=y-max, y error minus=y-min] {\GkexpSmallNorm};
            \legend{Uniform,Normal}
            \end{axis}
        \end{tikzpicture}
    \caption{Relative error vs. datastream size for $\alpha = 10^{-4}$, $\epsilon = 1$, $q = 0.5$}
    \label{fig:SmallError}
\end{subfigure}
\qquad
\qquad
\begin{subfigure}{.2\textwidth}
    \centering
    \begin{tikzpicture}
            \begin{axis}[
            baseline=(current bounding box.center),
            height=120,width=120,,
            ymax = 0.8,
            bar width=30000,
            ytick = {0,0.1,0.3,0.5},
            yticklabels = {0,0.1,0.3,0.5}, 
            xtick={100000,200000,300000,400000,500000},
            ]
            
            \addplot+[error bars/.cd,
            y dir=both,y explicit]
            table [y error plus=y-max, y error minus=y-min] {\GkexpLargeUnif};
            \addplot+[error bars/.cd,
            y dir=both,y explicit]
            table [y error plus=y-max, y error minus=y-min] {\GkexpLargeNorm};
            \legend{Uniform,Normal}
            \end{axis}
        \end{tikzpicture}
    \caption{$\dpexpgkGumb$ relative error vs. datastream size for $\alpha = 10^{-1}$, $\epsilon = 1$, $q = 0.5$}
    \label{fig:LargeError}
\end{subfigure}
\end{figure}

\begin{figure}
    \begin{subfigure}{.2\textwidth}
        \begin{tikzpicture}
            \begin{axis}[
            baseline=(current bounding box.center),
            height=120,width=120,,
            ymode = log,
            ymax = 100000,
            ytick = {0.000001,0.001,1},
            bar width=30000,
            xtick={100000,200000,300000,400000,500000},
            ]
            
            \addplot+[error bars/.cd,
            y dir=both,y explicit]
            table [y error plus=y-max, y error minus=y-min] {\GkexpLargeUnif};
            \addplot+[error bars/.cd,
            y dir=both,y explicit]
            table [y error plus=y-max, y error minus=y-min] {\FullLargeUnif};
            \legend{$\dpexpgkGumb$,$\dpexpfull$}
            \end{axis}
        \end{tikzpicture}
        \caption{Mean relative error and 80\% confidence interval vs. stream length over $100$ trials}
        \label{fig:LargeUnifError}
    \end{subfigure}
    \hfill
    \begin{subfigure}{.2\textwidth}
        \begin{tikzpicture}
            \begin{axis}[
            baseline=(current bounding box.center),
            height=120,width=120,,
            ymode = log,
            ybar=0pt,
            ymax = 1000000000,
            ytick = {10,1000,100000,10000000},
            bar width=30000,
            xtick={100000,200000,300000,400000,500000},
            xtick pos=left,
            ytick pos=left,
            ]
            
            \addplot+[error bars/.cd,
            y dir=both,y explicit]
            table {\GkexpLargeUnifSizes};
            \addplot+[error bars/.cd,
            y dir=both,y explicit]
            table {\FullLargeUnifSizes};
            \legend{$\dpexpgkGumb$,$\dpexpfull$}
            \end{axis}
        \end{tikzpicture}
        \caption{Data structure sizes vs. stream length \newline}
        \label{fig:LargeUnifSize}
    \end{subfigure}
    \caption{$\dpexpgkGumb$ versus $\dpexpfull$ for uniformly random data, $\alpha = 10^{-1}$, $\epsilon = 1$, $q = 0.5$}
    \label{fig:LargeUnif}
\end{figure}
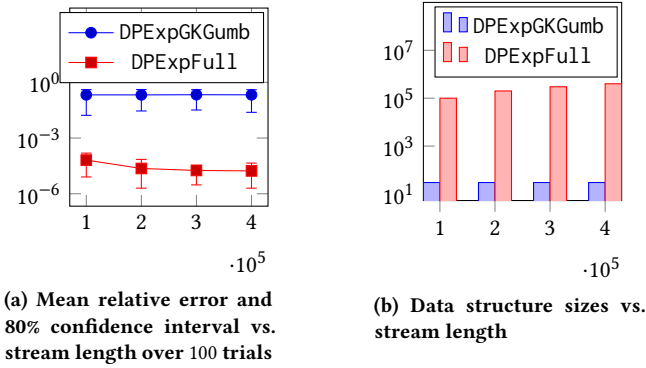
\begin{figure}
    \begin{subfigure}{.2\textwidth}
    \centering
    \begin{tikzpicture}
        \begin{axis}[
        height=120,width=120,,
        ymode = log,
        ymax = 10000,
        ytick = {0.000001,0.001,1},
        bar width=30000,
        xtick={100000,200000,300000,400000,500000},
        ]
        \addplot+[error bars/.cd,
            y dir=both,y explicit]
            table [y error plus=y-max, y error minus=y-min] {\GkexpLargeNorm};
            \addplot+[error bars/.cd,
            y dir=both,y explicit]
            table [y error plus=y-max, y error minus=y-min] {\FullLargeNorm};
    \legend{$\dpexpgkGumb$,$\dpexpfull$}
    \end{axis}
\end{tikzpicture}
    \caption{Mean relative error and 80\% confidence interval vs. stream length over $100$ trials}
    \label{fig:LargeNormError}
\end{subfigure}
    \hfill
    \begin{subfigure}{.2\textwidth}
        \begin{tikzpicture}
            \begin{axis}[
            baseline=(current bounding box.center),
            ymode = log,
            height = 120,
            width = 120,
            ybar=0pt,
            restrict y to domain=0:500000,
            bar width=30000,
            ymax = 1000000000,
            ytick = {10,1000,100000,10000000},
            xtick={100000,200000,300000,400000,500000},
            xtick pos=left,
            ytick pos=left,
            ]
            
            \addplot+[error bars/.cd,
            y dir=both,y explicit]
            table {\GkexpLargeNormSizes};
            \addplot+[error bars/.cd,
            y dir=both,y explicit]
            table {\FullLargeNormSizes};
            \legend{$\dpexpgkGumb$,$\dpexpfull$}
            \end{axis}
        \end{tikzpicture}
        \caption{Data structure sizes vs. stream length \newline}
        \label{fig:LargeNormSize}
    \end{subfigure}
    \caption{$\dpexpgkGumb$ versus $\dpexpfull$ for normally distributed data, $\alpha = 10^{-1}$, $\epsilon = 1$, $q = 0.5$ }
    \label{fig:LargeNorm}
\end{figure}




In Figures~\ref{fig:SmallErrorEps} and~\ref{fig:LargeErrorEps}, we also vary the privacy
parameter. In the small approximation factor setting we see the inverse tendency of accuracy 
with privacy which is characteristic of most DP algorithms. On the other hand, in the large approximation setting
(and indeed even in the small approximation setting for intermediate privacy parameter values), there is no such clear drop
in performance with more privacy. This motivates the question of determining the true interplay between the approximation 
factor $\alpha$ and the private parameter $\epsilon$
\arxiv{, as discussed further in Section~\ref{sec:conclusion}}.

As the main takeaway, and as predicted by the theoretical results, our algorithm performs well in practical settings where one wishes to estimate some quantity across all data items privately and using small space. For example, our results indicate that choosing an approximation factor of $\alpha = 10^{-4}$ induces an error which is also of order about $10^{-4}$ for privately computing parameters chosen according to a uniform or normal distribution, all while saving orders of magnitude in the space complexity. Another takeaway is that choosing a large approximation factor like $\alpha=0.1$ can cause large errors and is therefore not advisable in practice.


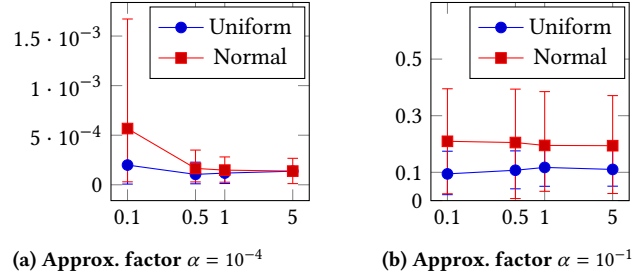
\begin{figure}
    \begin{subfigure}{.2\textwidth}
    \centering
    \begin{tikzpicture}
        \begin{axis}[
        height=120,width=120,,
        restrict y to domain=0:0.1,
        scaled y ticks = false,
        xticklabels={0.1,0.5,1,5},
        xtick={0.1,0.5,1,5},
        xmode = log
        ]
        \addplot+[error bars/.cd,
            y dir=both,y explicit]
            table [y error plus=y-max, y error minus=y-min] {\GkexpSmallUnifEps};
            \addplot+[error bars/.cd,
            y dir=both,y explicit]
            table [y error plus=y-max, y error minus=y-min] {\GkexpSmallNormEps};
    \legend{Uniform,Normal}
    \end{axis}
\end{tikzpicture}
    \caption{Approx. factor $\alpha = 10^{-4}$}
    \label{fig:SmallErrorEps}
\end{subfigure}
    \hfill
    \begin{subfigure}{.2\textwidth}
    \centering
    \begin{tikzpicture}
        \begin{axis}[
        height=120,width=120,,
        ymin = 0,
        ymax = 0.7,
        xticklabels={0.1,0.5,1,5},
        xtick={0.1,0.5,1,5},
        yticklabels={0,0.1,0.3,0.5},
        ytick={0,0.1,0.3,0.5},
        xmode = log
        ]
        \addplot+[error bars/.cd,
            y dir=both,y explicit]
            table [y error plus=y-max, y error minus=y-min] {\GkexpLargeUnifEps};
            \addplot+[error bars/.cd,
            y dir=both,y explicit]
            table [y error plus=y-max, y error minus=y-min] {\GkexpLargeNormEps};
        \legend{Uniform,Normal}
        \end{axis}
    \end{tikzpicture}
    \caption{Approx. factor $\alpha = 10^{-1}$}
    \label{fig:LargeErrorEps}
\end{subfigure}
    \caption{Relative error versus privacy parameter}
\end{figure}


\subsection{Real-World Datasets}

\begin{figure}
\begin{subfigure}{0.2\textwidth}
  \centering
  \begin{tikzpicture}
        \begin{axis}[
        height=120,width=120,,
        ymax = 1e7,
        ymin = 1e-6,
        xticklabels={0.1,0.5,1,5},
        xtick={0.1,0.5,1,5},
        ytick={1e-8,1e-4,1e-2,1},
        xmode = log,
        ymode = log,
        ]
            \addplot+[error bars/.cd,
            y dir=both,y explicit]
            table [y error plus=y-max, y error minus=y-min] {\TaxiTwo};
            \addplot+[error bars/.cd,
            y dir=both,y explicit]
            table [y error plus=y-max, y error minus=y-min] {\TaxiFive};
            \addplot+[error bars/.cd,
            y dir=both,y explicit]
            table [y error plus=y-max, y error minus=y-min] {\TaxiFull};
        \legend{1E-2,1E-5,$\dpexpfull$}
        \end{axis}
    \end{tikzpicture}
    \caption{Mean relative error and 90\% CI versus privacy parameter $\epsilon$ for $\dpexpgkGumb$
    \arxiv{with approximation parameter values $\alpha = 10^{-2}$ and $\dots,10^{-5}$, and $\dpexpfull$}.}
\label{fig:taxierror}
\end{subfigure}
\qquad
\begin{subfigure}{0.2\textwidth}
\centering
  \begin{tikzpicture}
    \begin{axis}  
    [ybar,
    height=120,width=120,,
    enlargelimits=0.15,  
    symbolic x coords={1E-2,1E-5, $\dpexpfull$}, 
    xtick=data, 
    ymode = log,
    xtick pos=left,
    ytick pos=left,
    ]  
    \addplot coordinates {($\dpexpfull$,1710672)
    (1E-2,1929)
    (1E-5,940836)};  
\end{axis}  
\end{tikzpicture}
\caption{Space usage of $\dpexpgkGumb$
\arxiv{for approximation parameter values $\alpha = 10^{-2}$ and $10^{-5}$ in $\dpexpgkGumb$, and $\dpexpfull$}.}
\label{fig:taxispace}
\end{subfigure}
\caption{Taxi data set}
\end{figure}
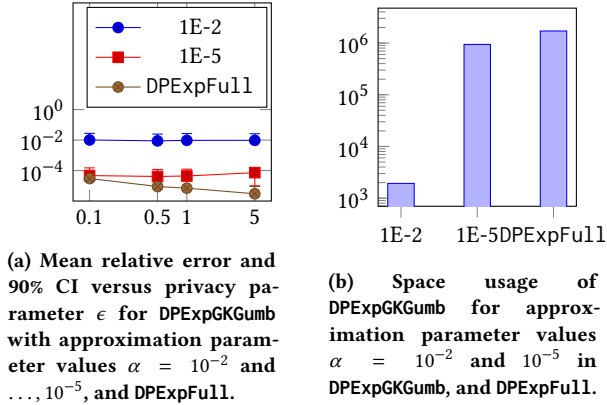

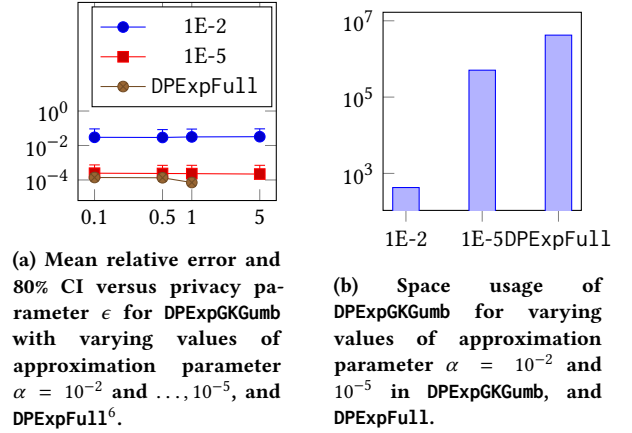
\begin{figure}
\begin{subfigure}{0.2\textwidth}
  \centering
  \begin{tikzpicture}
        \begin{axis}[
        height=120,width=120,,
        ymax = 2000000,
        xticklabels={0.1,0.5,1,5},
        xtick={0.1,0.5,1,5},
        ytick={1e-4, 1e-2, 1},
        xmode = log,
        ymode = log,
        ]
            \addplot+[error bars/.cd,
            y dir=both,y explicit]
            table [y error plus=y-max, y error minus=y-min] {\EthCOTwo};
            \addplot+[error bars/.cd,
            y dir=both,y explicit]
            table [y error plus=y-max, y error minus=y-min] {\EthCOFive};
            \addplot+[error bars/.cd,
            y dir=both,y explicit]
            table [y error plus=y-max, y error minus=y-min] {\EthCOFull};
        \legend{1E-2,1E-5,$\dpexpfull$}
        \end{axis}
    \end{tikzpicture}
    \caption{Mean relative error and 80\% CI versus privacy parameter $\epsilon$ for $\dpexpgkGumb$ 
    \arxiv{with varying values of approximation parameter $\alpha = 10^{-2}$ and $\dots,10^{-5}$, and $\dpexpfull$}\footnotemark.
    }
\label{fig:gaserror}
\end{subfigure}
\qquad
\begin{subfigure}{0.2\textwidth}
\centering
  \begin{tikzpicture}
    \begin{axis}  
    [ybar,
    height=120,width=120,,
    enlargelimits=0.15, 
    symbolic x coords={1E-2,1E-5,$\dpexpfull$}, 
    xtick=data, 
    ymode = log,
    xtick pos=left,
    ytick pos=left
    ]  
    \addplot coordinates {($\dpexpfull$,4208262)
    (1E-2,426)
    (1E-5,507585)};  
\end{axis}  
\end{tikzpicture}
\caption{Space usage of $\dpexpgkGumb$ \arxiv{for varying values of approximation parameter $\alpha = 10^{-2}$ and $10^{-5}$ in $\dpexpgkGumb$, and $\dpexpfull$}.}
\label{fig:gasspace}
\end{subfigure}
\caption{Gas sensor data set}
\end{figure}

In Table~\ref{tab:attributes}, 
we show the properties of attributes available from the taxi service and
gas sensor datasets. We pick a real-valued attribute from each dataset (the TIMESTAMP and the first ETHYLENE\_CO gas sensor value, respectively)
and calculate the median on these datasets.
In Figures~\ref{fig:taxispace} and~\ref{fig:gasspace},
for different values of the approximation parameter
$\alpha$, we show the space usage incurred on the
gas sensor and taxi cab datasets. The larger
the approximation factor, the larger the space savings with
$\dpexpgk$.
Comparing $\dpexpfull$ to $\dpexpgk$, 
we see space savings of 2 times up to 1000 times as we vary the
approximation factor. 
These results are consistent with our expectations that the
space savings are inversely proportional to the allowed approximation factor. 

\begin{table}[t]
\begin{tabular}{| l | l | l | l |}\hline
                     &  \# of Attributes       &   \# Instances  & Attribute Type  \\\hline
Taxi Service    &  9            &  1,710,671       &  Real \\\hline
Gas Sensor      &  19           &  4,178,504           & Real    \\\hline

\end{tabular}
\caption{Comparing properties of the datasets.}
\label{tab:attributes}
\end{table}

\footnotetext{The 80\% CI for error incurred by $\dpexpfull$ was entirely supported on the point $0$ and drops off axis as we use the $\log$ scale.}

\arxiv{
\subsection{Full Space Quantile Computation}
\label{sec:fullspace}
Without the bounded space requirement (i.e., space sublinear in
the stream length), we can use the exponential mechanism with
a utility function that uses the entire stream of values $X$. In that case, the
sensitivity of the utility function is at most 1. We use this
as one of the baselines for our experimental validation.

\begin{lemma}
Given any insertion only stream
$$
X = (x_1, x_2, \ldots, x_{n-1}, x_n),
$$
the sensitivity of the utility
function $u$ (under swap differential privacy)
is at most 1. $i.e., \Delta_u \leq 1$.
The function $u$ is defined as
$u(X, e) = - |\rank(X, e) -  r|$
where $r$ is the approximate $\lfloor q\cdot n\rfloor$ rank of the 
sketch and
$\rank(X, e)$ is the rank of $e$ amongst all values in the stream
$X$.
\end{lemma}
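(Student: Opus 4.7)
The plan is to directly bound the change in $u(X, e)$ when we pass from $X$ to a swap-neighbor $X'$, by using the reverse triangle inequality to reduce the question to bounding the change in $\rank(X, e)$.

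First, I would observe that under swap differential privacy, $X$ and $X'$ have the same length $n$, so the target rank $r = \lfloor q \cdot n \rfloor$ is the same for both inputs; in particular $r$ does not contribute to the sensitivity. Next, write $X' = (x_1', \ldots, x_n')$, so that $X$ and $X'$ agree in every coordinate except possibly one position, say position $i$, where $x_i$ is replaced by $x_i'$. For any fixed $e \in \calX$,
\begin{align*}
\rank(X, e) - \rank(X', e) &= \sum_{j=1}^n \left(\ind[x_j \leq e] - \ind[x_j' \leq e]\right) \\
&= \ind[x_i \leq e] - \ind[x_i' \leq e],
\end{align*}
which lies in $\{-1, 0, +1\}$. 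Hence $|\rank(X, e) - \rank(X', e)| \leq 1$.

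Finally, applying the reverse triangle inequality to $|\cdot - r|$,
\begin{align*}
|u(X, e) - u(X', e)| &= \bigl| |\rank(X, e) - r| - |\rank(X', e) - r| \bigr| \\
&\leq |\rank(X, e) - \rank(X', e)| \leq 1.
\end{align*}
Taking the supremum over neighboring pairs $X \sim X'$ and over $e \in \calX$ gives $\Delta_u \leq 1$, as claimed. There is no real obstacle here: the only subtlety is remembering that swap DP preserves $n$ (so $r$ is unchanged) and that a single coordinate swap can flip the indicator $\ind[\cdot \leq e]$ by at most one unit.
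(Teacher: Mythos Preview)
Your proposal is correct and follows essentially the same approach as the paper: both argue that under swap DP the target rank $r$ is unchanged, bound $|\rank(X,e)-\rank(X',e)|\leq 1$ via the single changed indicator, and then apply the reverse triangle inequality. If anything, your version is cleaner and more explicit than the paper's, which spends some extra words on the add-remove case and on the timing of the swap without needing them.
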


\begin{proof}

Let $|X| = n$. The utility function
becomes $-|\rank(X, e) - n_q|$ where 
$n_q = \lfloor q\cdot n\rfloor$.
Consider two streams with only one element changed: $X, X'$, denoting the element by $x_d$. Then
at time $d\leq n$, in the second stream $x_d'$ is 
inserted instead of $x_d$. In both cases, $n_q$ changes by
at most $q$ (in the case of add-remove DP) and for swap
DP, $n_q$ remains the same.
And for any $e$,
$\rank(X', e)$ would differ from
$\rank(X, e)$ by at most 1 since the rank of any element can
change by at most 1 after adding, deleting, or replacing an
item in the stream. Furthermore, for any $n\geq d$, the rank of
any $e$ will differ in $X, X'$ by at most 1 replacing $x_d$
with $x_d'$ can displace the rank of any element by at most 1.
Also, the term $n_q$ will remain the same.
(Note that in the add-remove privacy definition $n_q = \lfloor q\cdot n\rfloor$
would change to either
$\lfloor q\cdot (n+1)\rfloor$ or $\lfloor q\cdot (n-1)\rfloor$.)

The ``reverse triangle inequality'' says that for any real numbers
$x$ and $y$, $|x - y| \geq ||x| - |y||$.
As a result,
$-|\rank(X, e) - n_q| + |\rank(X', e) - n_q| \leq |\rank(X', e) -\rank(X, e)| \leq 1$
for any $e$. 
\end{proof}
}

\section{Conclusion \& Future Work}
\label{sec:conclusion}
In this work, we presented sublinear-space and differentially private algorithms
for approximately estimating quantiles in a dataset. Our solutions are 
two-part: one based on the exponential mechanism and efficiently
implemented via the use of the Gumbel distribution; the other based on
constructing histograms. 
\arxiv{
Our algorithms are supplemented with theoretical
utility guarantees. 
Furthermore, we experimentally validate our methods on
both synthetic and real-world datasets.
Our work leaves room for further exploration in various directions. Some of these questions have been addressed (at least partially) by recent subsequent work \cite{kaplan2021note}, but for completeness, we keep them here in their original form.
\begin{description}
\item[Interplay between $\alpha$ and $\eps$:] The space complexity bounds we obtain are (up to lower order terms) inversely linear in $\alpha$ and in $\eps$. While it is either known or easy to show that such linear dependence in each of these parameters in itself is necessary, it is not clear whether the $\alpha^{-1} \eps^{-1}$ term in Theorem \ref{thm:utility_universal_bound} can be replaced with, say, $\alpha^{-1} + \eps^{-1}$. Such an improvement, if possible, seems to require substantially modifying the baseline Greenwald-Khanna sketch or adding randomness. 
\item[Alternative streaming baselines:] We base our mechanisms upon  the GK-sketch, which is known to be space-optimal among deterministic streaming algorithms for quantile approximation. The use of a deterministic baseline simplifies the analysis and the overall solution, but better randomized streaming algorithms for the same problem are known to exist. What would be the benefit of working, e.g., with the (optimal among randomized algorithms) KLL-sketch \cite{KarninLL16}? 
\item[Dependence in universe size:] The dependence of our space complexity bounds in the size of the universe, $\mathcal{X}$, is logarithmic. Recent work of Kaplan et al.~\citep{DBLP:conf/colt/KaplanLMNS20} (see also~\cite{BunNSV15}) on the sample (not space) complexity of privately learning thresholds in one dimension,
a fundamental problem at the intersection of learning theory and privacy, demonstrate a bound polynomial in $\log^* |\mathcal{X}|$ on the sample complexity. As quantile estimation and threshold learning are closely related problems, this raises the question of whether techniques developed in the aforementioned papers can improve the dependence on $|\mathcal{X}|$ in our bounds.
\item[Random order:] The results presented here (except for those about normally distributed data) all assume that the data stream is presented in worst case order, an assumption that may be too strong for some scenarios. Can improved bounds be proved when the data elements are chosen in advance but their order is chosen randomly? This can serve as a middle ground between the most general case (which we address in this paper) and the case where data is assumed to be generated according to a certain distribution.
\end{description}
}

\bibliographystyle{ACM-Reference-Format}
\bibliography{main}


\begin{thebibliography}{55}


\ifx \showCODEN    \undefined \def \showCODEN     #1{\unskip}     \fi
\ifx \showDOI      \undefined \def \showDOI       #1{#1}\fi
\ifx \showISBNx    \undefined \def \showISBNx     #1{\unskip}     \fi
\ifx \showISBNxiii \undefined \def \showISBNxiii  #1{\unskip}     \fi
\ifx \showISSN     \undefined \def \showISSN      #1{\unskip}     \fi
\ifx \showLCCN     \undefined \def \showLCCN      #1{\unskip}     \fi
\ifx \shownote     \undefined \def \shownote      #1{#1}          \fi
\ifx \showarticletitle \undefined \def \showarticletitle #1{#1}   \fi
\ifx \showURL      \undefined \def \showURL       {\relax}        \fi
\providecommand\bibfield[2]{#2}
\providecommand\bibinfo[2]{#2}
\providecommand\natexlab[1]{#1}
\providecommand\showeprint[2][]{arXiv:#2}

\bibitem[\protect\citeauthoryear{Abernethy, Lee, and Tewari}{Abernethy
  et~al\mbox{.}}{2017}]%
        {ALT17}
\bibfield{author}{\bibinfo{person}{Jacob Abernethy}, \bibinfo{person}{Chansoo
  Lee}, {and} \bibinfo{person}{Ambuj Tewari}.} \bibinfo{year}{2017}\natexlab{}.
\newblock \bibinfo{booktitle}{\emph{Perturbation Techniques in Online Learning
  and Optimization}}.
\newblock \bibinfo{pages}{233--264}.
\newblock


\bibitem[\protect\citeauthoryear{Agarwal, Cormode, Huang, Phillips, Wei, and
  Yi}{Agarwal et~al\mbox{.}}{2013}]%
        {AgarwalCHPWY13}
\bibfield{author}{\bibinfo{person}{Pankaj~K. Agarwal}, \bibinfo{person}{Graham
  Cormode}, \bibinfo{person}{Zengfeng Huang}, \bibinfo{person}{Jeff~M.
  Phillips}, \bibinfo{person}{Zhewei Wei}, {and} \bibinfo{person}{Ke Yi}.}
  \bibinfo{year}{2013}\natexlab{}.
\newblock \showarticletitle{Mergeable summaries}.
\newblock \bibinfo{journal}{\emph{{ACM} Trans. Database Syst.}}
  \bibinfo{volume}{38}, \bibinfo{number}{4} (\bibinfo{year}{2013}),
  \bibinfo{pages}{26:1--26:28}.
\newblock


\bibitem[\protect\citeauthoryear{Asi and Duchi}{Asi and Duchi}{2020}]%
        {AsiD20}
\bibfield{author}{\bibinfo{person}{Hilal Asi} {and} \bibinfo{person}{John~C.
  Duchi}.} \bibinfo{year}{2020}\natexlab{}.
\newblock \showarticletitle{Near Instance-Optimality in Differential Privacy}.
\newblock \bibinfo{journal}{\emph{CoRR}}  \bibinfo{volume}{abs/2005.10630}
  (\bibinfo{year}{2020}).
\newblock


\bibitem[\protect\citeauthoryear{Aumüller, Lebeda, and Pagh}{Aumüller
  et~al\mbox{.}}{2021}]%
        {aumuller2021differentially}
\bibfield{author}{\bibinfo{person}{Martin Aumüller},
  \bibinfo{person}{Christian~Janos Lebeda}, {and} \bibinfo{person}{Rasmus
  Pagh}.} \bibinfo{year}{2021}\natexlab{}.
\newblock \showarticletitle{Differentially private sparse vectors with low
  error, optimal space, and fast access}.
\newblock \bibinfo{journal}{\emph{CoRR}}  \bibinfo{volume}{abs/2106.10068}
  (\bibinfo{year}{2021}).
\newblock
\showeprint[arxiv]{2106.10068}~[cs.CR]
\urldef\tempurl%
\url{https://arxiv.org/abs/2106.10068}
\showURL{%
\tempurl}


\bibitem[\protect\citeauthoryear{Balcer and Vadhan}{Balcer and Vadhan}{2018}]%
        {BalcerV18}
\bibfield{author}{\bibinfo{person}{Victor Balcer} {and}
  \bibinfo{person}{Salil~P. Vadhan}.} \bibinfo{year}{2018}\natexlab{}.
\newblock \showarticletitle{Differential Privacy on Finite Computers}. In
  \bibinfo{booktitle}{\emph{9th Innovations in Theoretical Computer Science
  Conference, {ITCS} 2018, January 11-14, 2018, Cambridge, MA, {USA}}}
  \emph{(\bibinfo{series}{LIPIcs})}, Vol.~\bibinfo{volume}{94}.
  \bibinfo{publisher}{Schloss Dagstuhl - Leibniz-Zentrum f{\"{u}}r Informatik},
  \bibinfo{pages}{43:1--43:21}.
\newblock


\bibitem[\protect\citeauthoryear{Ben{-}Eliezer, Jayaram, Woodruff, and
  Yogev}{Ben{-}Eliezer et~al\mbox{.}}{2020}]%
        {BJWY20}
\bibfield{author}{\bibinfo{person}{Omri Ben{-}Eliezer}, \bibinfo{person}{Rajesh
  Jayaram}, \bibinfo{person}{David~P. Woodruff}, {and} \bibinfo{person}{Eylon
  Yogev}.} \bibinfo{year}{2020}\natexlab{}.
\newblock \showarticletitle{A Framework for Adversarially Robust Streaming
  Algorithms}. In \bibinfo{booktitle}{\emph{Proceedings of the 39th {ACM}
  {SIGMOD-SIGACT-SIGAI} Symposium on Principles of Database Systems, {PODS}
  2020, Portland, OR, USA, June 14-19, 2020}}. \bibinfo{pages}{63--80}.
\newblock


\bibitem[\protect\citeauthoryear{Blocki, Blum, Datta, and Sheffet}{Blocki
  et~al\mbox{.}}{2012}]%
        {BlockiBDS12}
\bibfield{author}{\bibinfo{person}{Jeremiah Blocki}, \bibinfo{person}{Avrim
  Blum}, \bibinfo{person}{Anupam Datta}, {and} \bibinfo{person}{Or Sheffet}.}
  \bibinfo{year}{2012}\natexlab{}.
\newblock \showarticletitle{The Johnson-Lindenstrauss Transform Itself
  Preserves Differential Privacy}. In \bibinfo{booktitle}{\emph{53rd Annual
  {IEEE} Symposium on Foundations of Computer Science, {FOCS} 2012, New
  Brunswick, NJ, USA, October 20-23, 2012}}. \bibinfo{publisher}{{IEEE}
  Computer Society}, \bibinfo{pages}{410--419}.
\newblock


\bibitem[\protect\citeauthoryear{B{\"{o}}hler and Kerschbaum}{B{\"{o}}hler and
  Kerschbaum}{2020}]%
        {BohlerK20}
\bibfield{author}{\bibinfo{person}{Jonas B{\"{o}}hler} {and}
  \bibinfo{person}{Florian Kerschbaum}.} \bibinfo{year}{2020}\natexlab{}.
\newblock \showarticletitle{Secure Sublinear Time Differentially Private Median
  Computation}. In \bibinfo{booktitle}{\emph{27th Annual Network and
  Distributed System Security Symposium, {NDSS} 2020, San Diego, California,
  USA, February 23-26, 2020}}.
\newblock
\urldef\tempurl%
\url{https://www.ndss-symposium.org/ndss-paper/secure-sublinear-time-differentially-private-median-computation/}
\showURL{%
\tempurl}


\bibitem[\protect\citeauthoryear{Bun, Nissim, Stemmer, and Vadhan}{Bun
  et~al\mbox{.}}{2015}]%
        {BunNSV15}
\bibfield{author}{\bibinfo{person}{Mark Bun}, \bibinfo{person}{Kobbi Nissim},
  \bibinfo{person}{Uri Stemmer}, {and} \bibinfo{person}{Salil~P. Vadhan}.}
  \bibinfo{year}{2015}\natexlab{}.
\newblock \showarticletitle{Differentially Private Release and Learning of
  Threshold Functions}. In \bibinfo{booktitle}{\emph{{IEEE} 56th Annual
  Symposium on Foundations of Computer Science, {FOCS} 2015, Berkeley, CA, USA,
  17-20 October, 2015}}. \bibinfo{pages}{634--649}.
\newblock
\urldef\tempurl%
\url{https://doi.org/10.1109/FOCS.2015.45}
\showURL{%
\tempurl}


\bibitem[\protect\citeauthoryear{Cardoso and Rogers}{Cardoso and
  Rogers}{2021}]%
        {cardoso2021differentially}
\bibfield{author}{\bibinfo{person}{Adrian~Rivera Cardoso} {and}
  \bibinfo{person}{Ryan Rogers}.} \bibinfo{year}{2021}\natexlab{}.
\newblock \bibinfo{title}{Differentially Private Histograms under Continual
  Observation: Streaming Selection into the Unknown}.
\newblock
\newblock
\showeprint[arxiv]{2103.16787}~[cs.DS]


\bibitem[\protect\citeauthoryear{Chan, Shi, and Song}{Chan
  et~al\mbox{.}}{2011}]%
        {ChanSS11}
\bibfield{author}{\bibinfo{person}{T.{-}H.~Hubert Chan},
  \bibinfo{person}{Elaine Shi}, {and} \bibinfo{person}{Dawn Song}.}
  \bibinfo{year}{2011}\natexlab{}.
\newblock \showarticletitle{Private and Continual Release of Statistics}.
\newblock \bibinfo{journal}{\emph{{ACM} Trans. Inf. Syst. Secur.}}
  \bibinfo{volume}{14}, \bibinfo{number}{3} (\bibinfo{year}{2011}),
  \bibinfo{pages}{26:1--26:24}.
\newblock


\bibitem[\protect\citeauthoryear{Chen, Machanavajjhala, Hay, and Miklau}{Chen
  et~al\mbox{.}}{2017}]%
        {ChenMHM17}
\bibfield{author}{\bibinfo{person}{Yan Chen}, \bibinfo{person}{Ashwin
  Machanavajjhala}, \bibinfo{person}{Michael Hay}, {and}
  \bibinfo{person}{Gerome Miklau}.} \bibinfo{year}{2017}\natexlab{}.
\newblock \showarticletitle{PeGaSus: Data-Adaptive Differentially Private
  Stream Processing}. In \bibinfo{booktitle}{\emph{Proceedings of the 2017
  {ACM} {SIGSAC} Conference on Computer and Communications Security, {CCS}
  2017, Dallas, TX, USA, October 30 - November 03, 2017}}.
  \bibinfo{pages}{1375--1388}.
\newblock


\bibitem[\protect\citeauthoryear{Choi, Dachman{-}Soled, Kulkarni, and
  Yerukhimovich}{Choi et~al\mbox{.}}{2020}]%
        {ChoiDKY20}
\bibfield{author}{\bibinfo{person}{Seung~Geol Choi}, \bibinfo{person}{Dana
  Dachman{-}Soled}, \bibinfo{person}{Mukul Kulkarni}, {and}
  \bibinfo{person}{Arkady Yerukhimovich}.} \bibinfo{year}{2020}\natexlab{}.
\newblock \showarticletitle{Differentially-Private Multi-Party Sketching for
  Large-Scale Statistics}.
\newblock \bibinfo{journal}{\emph{Proc. Priv. Enhancing Technol.}}
  \bibinfo{volume}{2020}, \bibinfo{number}{3} (\bibinfo{year}{2020}),
  \bibinfo{pages}{153--174}.
\newblock


\bibitem[\protect\citeauthoryear{Cormode, Karnin, Liberty, Thaler, and
  Vesel{\'{y}}}{Cormode et~al\mbox{.}}{2020}]%
        {CKLTV20}
\bibfield{author}{\bibinfo{person}{Graham Cormode}, \bibinfo{person}{Zohar~S.
  Karnin}, \bibinfo{person}{Edo Liberty}, \bibinfo{person}{Justin Thaler},
  {and} \bibinfo{person}{Pavel Vesel{\'{y}}}.} \bibinfo{year}{2020}\natexlab{}.
\newblock \showarticletitle{Relative Error Streaming Quantiles}.
\newblock \bibinfo{journal}{\emph{CoRR}}  \bibinfo{volume}{abs/2004.01668}
  (\bibinfo{year}{2020}).
\newblock
\urldef\tempurl%
\url{https://arxiv.org/abs/2004.01668}
\showURL{%
\tempurl}


\bibitem[\protect\citeauthoryear{Cormode, Korn, Muthukrishnan, and
  Srivastava}{Cormode et~al\mbox{.}}{2006}]%
        {CormodeKMS06}
\bibfield{author}{\bibinfo{person}{Graham Cormode}, \bibinfo{person}{Flip
  Korn}, \bibinfo{person}{S. Muthukrishnan}, {and} \bibinfo{person}{Divesh
  Srivastava}.} \bibinfo{year}{2006}\natexlab{}.
\newblock \showarticletitle{Space- and time-efficient deterministic algorithms
  for biased quantiles over data streams}. In
  \bibinfo{booktitle}{\emph{Proceedings of the Twenty-Fifth {ACM}
  {SIGACT-SIGMOD-SIGART} Symposium on Principles of Database Systems, June
  26-28, 2006, Chicago, Illinois, {USA}}}. \bibinfo{pages}{263--272}.
\newblock


\bibitem[\protect\citeauthoryear{Cormode and Vesel\'{y}}{Cormode and
  Vesel\'{y}}{2020}]%
        {CormodeVesely2020}
\bibfield{author}{\bibinfo{person}{Graham Cormode} {and} \bibinfo{person}{Pavel
  Vesel\'{y}}.} \bibinfo{year}{2020}\natexlab{}.
\newblock \showarticletitle{A Tight Lower Bound for Comparison-Based Quantile
  Summaries}. In \bibinfo{booktitle}{\emph{Proceedings of the 39th ACM
  SIGMOD-SIGACT-SIGAI Symposium on Principles of Database Systems}} (Portland,
  OR, USA) \emph{(\bibinfo{series}{PODS'20})}. \bibinfo{publisher}{Association
  for Computing Machinery}, \bibinfo{address}{New York, NY, USA},
  \bibinfo{pages}{81–93}.
\newblock
\showISBNx{9781450371087}
\urldef\tempurl%
\url{https://doi.org/10.1145/3375395.3387650}
\showDOI{\tempurl}


\bibitem[\protect\citeauthoryear{Dua and Graff}{Dua and Graff}{2017}]%
        {DG17}
\bibfield{author}{\bibinfo{person}{Dheeru Dua} {and} \bibinfo{person}{Casey
  Graff}.} \bibinfo{year}{2017}\natexlab{}.
\newblock \bibinfo{title}{{UCI} Machine Learning Repository}.
\newblock
\newblock
\urldef\tempurl%
\url{http://archive.ics.uci.edu/ml}
\showURL{%
\tempurl}


\bibitem[\protect\citeauthoryear{Dvoretzky, Kiefer, and Wolfowitz}{Dvoretzky
  et~al\mbox{.}}{1956}]%
        {dvoretzky1956}
\bibfield{author}{\bibinfo{person}{A. Dvoretzky}, \bibinfo{person}{J. Kiefer},
  {and} \bibinfo{person}{J. Wolfowitz}.} \bibinfo{year}{1956}\natexlab{}.
\newblock \showarticletitle{Asymptotic Minimax Character of the Sample
  Distribution Function and of the Classical Multinomial Estimator}.
\newblock \bibinfo{journal}{\emph{Ann. Math. Statist.}} \bibinfo{volume}{27},
  \bibinfo{number}{3} (\bibinfo{date}{09} \bibinfo{year}{1956}),
  \bibinfo{pages}{642--669}.
\newblock
\urldef\tempurl%
\url{https://doi.org/10.1214/aoms/1177728174}
\showDOI{\tempurl}


\bibitem[\protect\citeauthoryear{Dwork and Lei}{Dwork and Lei}{2009}]%
        {DworkL09}
\bibfield{author}{\bibinfo{person}{Cynthia Dwork} {and} \bibinfo{person}{Jing
  Lei}.} \bibinfo{year}{2009}\natexlab{}.
\newblock \showarticletitle{Differential privacy and robust statistics}. In
  \bibinfo{booktitle}{\emph{Proceedings of the 41st Annual {ACM} Symposium on
  Theory of Computing, {STOC} 2009, Bethesda, MD, USA, May 31 - June 2, 2009}}.
  \bibinfo{pages}{371--380}.
\newblock


\bibitem[\protect\citeauthoryear{Dwork, McSherry, Nissim, and Smith}{Dwork
  et~al\mbox{.}}{2006}]%
        {DworkMNS06}
\bibfield{author}{\bibinfo{person}{Cynthia Dwork}, \bibinfo{person}{Frank
  McSherry}, \bibinfo{person}{Kobbi Nissim}, {and} \bibinfo{person}{Adam~D.
  Smith}.} \bibinfo{year}{2006}\natexlab{}.
\newblock \showarticletitle{Calibrating Noise to Sensitivity in Private Data
  Analysis}. In \bibinfo{booktitle}{\emph{Theory of Cryptography, Third Theory
  of Cryptography Conference, {TCC} 2006, New York, NY, USA, March 4-7, 2006,
  Proceedings}}. \bibinfo{pages}{265--284}.
\newblock


\bibitem[\protect\citeauthoryear{Dwork, Naor, Pitassi, and Rothblum}{Dwork
  et~al\mbox{.}}{2010}]%
        {DworkNPR10}
\bibfield{author}{\bibinfo{person}{Cynthia Dwork}, \bibinfo{person}{Moni Naor},
  \bibinfo{person}{Toniann Pitassi}, {and} \bibinfo{person}{Guy~N. Rothblum}.}
  \bibinfo{year}{2010}\natexlab{}.
\newblock \showarticletitle{Differential privacy under continual observation}.
  In \bibinfo{booktitle}{\emph{Proceedings of the 42nd {ACM} Symposium on
  Theory of Computing, {STOC} 2010, Cambridge, Massachusetts, USA, 5-8 June
  2010}}. \bibinfo{publisher}{{ACM}}, \bibinfo{pages}{715--724}.
\newblock


\bibitem[\protect\citeauthoryear{Felber and Ostrovsky}{Felber and
  Ostrovsky}{2017}]%
        {FelberO17}
\bibfield{author}{\bibinfo{person}{David Felber} {and} \bibinfo{person}{Rafail
  Ostrovsky}.} \bibinfo{year}{2017}\natexlab{}.
\newblock \showarticletitle{A Randomized Online Quantile Summary in
  O((1/{\(\epsilon\)}) log(1/{\(\epsilon\)})) Words}.
\newblock \bibinfo{journal}{\emph{Theory Comput.}} \bibinfo{volume}{13},
  \bibinfo{number}{1} (\bibinfo{year}{2017}), \bibinfo{pages}{1--17}.
\newblock


\bibitem[\protect\citeauthoryear{Fonollosa, Sheik, Huerta, and Marco}{Fonollosa
  et~al\mbox{.}}{2015}]%
        {FONOLLOSA2015618}
\bibfield{author}{\bibinfo{person}{Jordi Fonollosa}, \bibinfo{person}{Sadique
  Sheik}, \bibinfo{person}{Ram{\'{o}}n Huerta}, {and} \bibinfo{person}{Santiago
  Marco}.} \bibinfo{year}{2015}\natexlab{}.
\newblock \showarticletitle{Reservoir computing compensates slow response of
  chemosensor arrays exposed to fast varying gas concentrations in continuous
  monitoring}.
\newblock \bibinfo{journal}{\emph{Sensors and Actuators B: Chemical}}
  \bibinfo{volume}{215} (\bibinfo{year}{2015}), \bibinfo{pages}{618--629}.
\newblock
\showISSN{0925-4005}
\urldef\tempurl%
\url{https://www.sciencedirect.com/science/article/pii/S0925400515003524}
\showURL{%
\tempurl}


\bibitem[\protect\citeauthoryear{Gillenwater, Joseph, and Kulesza}{Gillenwater
  et~al\mbox{.}}{2021}]%
        {GJK21}
\bibfield{author}{\bibinfo{person}{Jennifer Gillenwater},
  \bibinfo{person}{Matthew Joseph}, {and} \bibinfo{person}{Alex Kulesza}.}
  \bibinfo{year}{2021}\natexlab{}.
\newblock \showarticletitle{Differentially Private Quantiles}.
\newblock \bibinfo{journal}{\emph{CoRR}}  \bibinfo{volume}{abs/2102.08244}
  (\bibinfo{year}{2021}).
\newblock
\urldef\tempurl%
\url{https://arxiv.org/abs/2102.08244}
\showURL{%
\tempurl}


\bibitem[\protect\citeauthoryear{Greenwald and Khanna}{Greenwald and
  Khanna}{2001}]%
        {GreenwaldK01}
\bibfield{author}{\bibinfo{person}{Michael Greenwald} {and}
  \bibinfo{person}{Sanjeev Khanna}.} \bibinfo{year}{2001}\natexlab{}.
\newblock \showarticletitle{Space-Efficient Online Computation of Quantile
  Summaries}. In \bibinfo{booktitle}{\emph{Proceedings of the 2001 {ACM}
  {SIGMOD} international conference on Management of data, Santa Barbara, CA,
  USA, May 21-24, 2001}}. \bibinfo{pages}{58--66}.
\newblock
\urldef\tempurl%
\url{https://doi.org/10.1145/375663.375670}
\showURL{%
\tempurl}


\bibitem[\protect\citeauthoryear{Greenwald and Khanna}{Greenwald and
  Khanna}{2004}]%
        {GreenwaldK04}
\bibfield{author}{\bibinfo{person}{Michael Greenwald} {and}
  \bibinfo{person}{Sanjeev Khanna}.} \bibinfo{year}{2004}\natexlab{}.
\newblock \showarticletitle{Power-Conserving Computation of Order-Statistics
  over Sensor Networks}. In \bibinfo{booktitle}{\emph{Proceedings of the
  Twenty-third {ACM} {SIGACT-SIGMOD-SIGART} Symposium on Principles of Database
  Systems, June 14-16, 2004, Paris, France}}. \bibinfo{pages}{275--285}.
\newblock
\urldef\tempurl%
\url{https://doi.org/10.1145/1055558.1055597}
\showURL{%
\tempurl}


\bibitem[\protect\citeauthoryear{Greenwald and Khanna}{Greenwald and
  Khanna}{2016}]%
        {GreenwaldK16}
\bibfield{author}{\bibinfo{person}{Michael~B. Greenwald} {and}
  \bibinfo{person}{Sanjeev Khanna}.} \bibinfo{year}{2016}\natexlab{}.
\newblock \showarticletitle{Quantiles and Equi-depth Histograms over Streams}.
\newblock In \bibinfo{booktitle}{\emph{Data Stream Management - Processing
  High-Speed Data Streams}}. \bibinfo{pages}{45--86}.
\newblock


\bibitem[\protect\citeauthoryear{Huber}{Huber}{1964}]%
        {Huber64}
\bibfield{author}{\bibinfo{person}{Peter~J. Huber}.}
  \bibinfo{year}{1964}\natexlab{}.
\newblock \showarticletitle{{Robust Estimation of a Location Parameter}}.
\newblock \bibinfo{journal}{\emph{The Annals of Mathematical Statistics}}
  \bibinfo{volume}{35}, \bibinfo{number}{1} (\bibinfo{year}{1964}),
  \bibinfo{pages}{73 -- 101}.
\newblock
\urldef\tempurl%
\url{https://doi.org/10.1214/aoms/1177703732}
\showDOI{\tempurl}


\bibitem[\protect\citeauthoryear{Hung and Ting}{Hung and Ting}{2010}]%
        {HungDing10}
\bibfield{author}{\bibinfo{person}{Regant Y.~S. Hung} {and}
  \bibinfo{person}{Hing{-}Fung Ting}.} \bibinfo{year}{2010}\natexlab{}.
\newblock \showarticletitle{An $\Omega(1/\eps \log(1/\eps))$ Space Lower Bound
  for Finding \emph{epsilon}-Approximate Quantiles in a Data Stream}. In
  \bibinfo{booktitle}{\emph{Frontiers in Algorithmics, 4th International
  Workshop, {FAW} 2010, Wuhan, China, August 11-13, 2010. Proceedings}}
  \emph{(\bibinfo{series}{Lecture Notes in Computer Science})},
  Vol.~\bibinfo{volume}{6213}. \bibinfo{publisher}{Springer},
  \bibinfo{pages}{89--100}.
\newblock
\urldef\tempurl%
\url{https://doi.org/10.1007/978-3-642-14553-7\_11}
\showDOI{\tempurl}


\bibitem[\protect\citeauthoryear{Kaplan, Ligett, Mansour, Naor, and
  Stemmer}{Kaplan et~al\mbox{.}}{2020}]%
        {DBLP:conf/colt/KaplanLMNS20}
\bibfield{author}{\bibinfo{person}{Haim Kaplan}, \bibinfo{person}{Katrina
  Ligett}, \bibinfo{person}{Yishay Mansour}, \bibinfo{person}{Moni Naor}, {and}
  \bibinfo{person}{Uri Stemmer}.} \bibinfo{year}{2020}\natexlab{}.
\newblock \showarticletitle{Privately Learning Thresholds: Closing the
  Exponential Gap}. In \bibinfo{booktitle}{\emph{Conference on Learning Theory,
  {COLT} 2020, 9-12 July 2020, Virtual Event [Graz, Austria]}}
  \emph{(\bibinfo{series}{Proceedings of Machine Learning Research})},
  Vol.~\bibinfo{volume}{125}. \bibinfo{publisher}{{PMLR}},
  \bibinfo{pages}{2263--2285}.
\newblock
\urldef\tempurl%
\url{http://proceedings.mlr.press/v125/kaplan20a.html}
\showURL{%
\tempurl}


\bibitem[\protect\citeauthoryear{Kaplan, Schnapp, and Stemmer}{Kaplan
  et~al\mbox{.}}{2021}]%
        {KSS21}
\bibfield{author}{\bibinfo{person}{Haim Kaplan}, \bibinfo{person}{Shachar
  Schnapp}, {and} \bibinfo{person}{Uri Stemmer}.}
  \bibinfo{year}{2021}\natexlab{}.
\newblock \showarticletitle{Differentially Private Approximate Quantiles}.
\newblock \bibinfo{journal}{\emph{CoRR}}  \bibinfo{volume}{abs/2110.05429}
  (\bibinfo{year}{2021}).
\newblock
\urldef\tempurl%
\url{https://arxiv.org/abs/2110.05429}
\showURL{%
\tempurl}


\bibitem[\protect\citeauthoryear{Kaplan and Stemmer}{Kaplan and
  Stemmer}{2021}]%
        {kaplan2021note}
\bibfield{author}{\bibinfo{person}{Haim Kaplan} {and} \bibinfo{person}{Uri
  Stemmer}.} \bibinfo{year}{2021}\natexlab{}.
\newblock \bibinfo{title}{A Note on Sanitizing Streams with Differential
  Privacy}.
\newblock
\newblock
\showeprint[arxiv]{2111.13762}~[cs.DS]


\bibitem[\protect\citeauthoryear{Karnin, Lang, and Liberty}{Karnin
  et~al\mbox{.}}{2016}]%
        {KarninLL16}
\bibfield{author}{\bibinfo{person}{Zohar~S. Karnin}, \bibinfo{person}{Kevin~J.
  Lang}, {and} \bibinfo{person}{Edo Liberty}.} \bibinfo{year}{2016}\natexlab{}.
\newblock \showarticletitle{Optimal Quantile Approximation in Streams}. In
  \bibinfo{booktitle}{\emph{{IEEE} 57th Annual Symposium on Foundations of
  Computer Science, {FOCS} 2016, 9-11 October 2016, Hyatt Regency, New
  Brunswick, New Jersey, {USA}}}. \bibinfo{pages}{71--78}.
\newblock


\bibitem[\protect\citeauthoryear{Karwa and Vadhan}{Karwa and Vadhan}{2017}]%
        {KV17}
\bibfield{author}{\bibinfo{person}{Vishesh Karwa} {and}
  \bibinfo{person}{Salil~P. Vadhan}.} \bibinfo{year}{2017}\natexlab{}.
\newblock \showarticletitle{Finite Sample Differentially Private Confidence
  Intervals}.
\newblock \bibinfo{journal}{\emph{CoRR}}  \bibinfo{volume}{abs/1711.03908}
  (\bibinfo{year}{2017}).
\newblock
\urldef\tempurl%
\url{http://arxiv.org/abs/1711.03908}
\showURL{%
\tempurl}


\bibitem[\protect\citeauthoryear{Keener}{Keener}{2010}]%
        {keener2010theoretical}
\bibfield{author}{\bibinfo{person}{R.W. Keener}.}
  \bibinfo{year}{2010}\natexlab{}.
\newblock \bibinfo{booktitle}{\emph{Theoretical Statistics: Topics for a Core
  Course}}.
\newblock \bibinfo{publisher}{Springer New York}.
\newblock
\showISBNx{9780387938394}
\showLCCN{2010935925}


\bibitem[\protect\citeauthoryear{Kruskal and Wallis}{Kruskal and
  Wallis}{1952}]%
        {KW52}
\bibfield{author}{\bibinfo{person}{William~H. Kruskal} {and}
  \bibinfo{person}{W.~Allen Wallis}.} \bibinfo{year}{1952}\natexlab{}.
\newblock \showarticletitle{Use of Ranks in One-Criterion Variance Analysis}.
\newblock \bibinfo{journal}{\emph{J. Amer. Statist. Assoc.}}
  \bibinfo{volume}{47}, \bibinfo{number}{260} (\bibinfo{year}{1952}),
  \bibinfo{pages}{583--621}.
\newblock
\urldef\tempurl%
\url{https://doi.org/10.1080/01621459.1952.10483441}
\showDOI{\tempurl}
\showeprint{https://www.tandfonline.com/doi/pdf/10.1080/01621459.1952.10483441}


\bibitem[\protect\citeauthoryear{Lehmann and Romano}{Lehmann and
  Romano}{2005}]%
        {LehmannE}
\bibfield{author}{\bibinfo{person}{Erich~L. Lehmann} {and}
  \bibinfo{person}{Joseph~P. Romano}.} \bibinfo{year}{2005}\natexlab{}.
\newblock \bibinfo{booktitle}{\emph{Testing Statistical Hypotheses}
  (\bibinfo{edition}{3.} ed.)}.
\newblock \bibinfo{publisher}{Springer New York}, \bibinfo{address}{New York,
  NY}.
\newblock
\showISBNx{9780387276052}


\bibitem[\protect\citeauthoryear{Li, Hay, Rastogi, Miklau, and McGregor}{Li
  et~al\mbox{.}}{2009}]%
        {LHRMM09}
\bibfield{author}{\bibinfo{person}{Chao Li}, \bibinfo{person}{Michael Hay},
  \bibinfo{person}{Vibhor Rastogi}, \bibinfo{person}{Gerome Miklau}, {and}
  \bibinfo{person}{Andrew McGregor}.} \bibinfo{year}{2009}\natexlab{}.
\newblock \showarticletitle{Optimizing Histogram Queries under Differential
  Privacy}.
\newblock \bibinfo{journal}{\emph{CoRR}}  \bibinfo{volume}{abs/0912.4742}
  (\bibinfo{year}{2009}).
\newblock


\bibitem[\protect\citeauthoryear{Luo, Wang, Yi, and Cormode}{Luo
  et~al\mbox{.}}{2016}]%
        {LuoWYC16}
\bibfield{author}{\bibinfo{person}{Ge Luo}, \bibinfo{person}{Lu Wang},
  \bibinfo{person}{Ke Yi}, {and} \bibinfo{person}{Graham Cormode}.}
  \bibinfo{year}{2016}\natexlab{}.
\newblock \bibinfo{title}{Quantiles over data streams: Experimental
  comparisons, new analyses, and further improvements}.
\newblock , \bibinfo{numpages}{449--472}~pages.
\newblock


\bibitem[\protect\citeauthoryear{Manku, Rajagopalan, and Lindsay}{Manku
  et~al\mbox{.}}{1999}]%
        {MankuRL99}
\bibfield{author}{\bibinfo{person}{Gurmeet~Singh Manku},
  \bibinfo{person}{Sridhar Rajagopalan}, {and} \bibinfo{person}{Bruce~G.
  Lindsay}.} \bibinfo{year}{1999}\natexlab{}.
\newblock \showarticletitle{Random Sampling Techniques for Space Efficient
  Online Computation of Order Statistics of Large Datasets}. In
  \bibinfo{booktitle}{\emph{{SIGMOD} 1999, Proceedings {ACM} {SIGMOD}
  International Conference on Management of Data, June 1-3, 1999, Philadelphia,
  Pennsylvania, {USA}}}. \bibinfo{pages}{251--262}.
\newblock
\urldef\tempurl%
\url{https://doi.org/10.1145/304182.304204}
\showURL{%
\tempurl}


\bibitem[\protect\citeauthoryear{McSherry and Talwar}{McSherry and
  Talwar}{2007}]%
        {McSherryT07}
\bibfield{author}{\bibinfo{person}{Frank McSherry} {and} \bibinfo{person}{Kunal
  Talwar}.} \bibinfo{year}{2007}\natexlab{}.
\newblock \showarticletitle{Mechanism Design via Differential Privacy}. In
  \bibinfo{booktitle}{\emph{48th Annual {IEEE} Symposium on Foundations of
  Computer Science {(FOCS} 2007), October 20-23, 2007, Providence, RI, USA,
  Proceedings}}. \bibinfo{pages}{94--103}.
\newblock
\urldef\tempurl%
\url{https://doi.org/10.1109/FOCS.2007.41}
\showURL{%
\tempurl}


\bibitem[\protect\citeauthoryear{Mir, Muthukrishnan, Nikolov, and Wright}{Mir
  et~al\mbox{.}}{2011}]%
        {MirMNW11}
\bibfield{author}{\bibinfo{person}{Darakhshan~J. Mir}, \bibinfo{person}{S.
  Muthukrishnan}, \bibinfo{person}{Aleksandar Nikolov}, {and}
  \bibinfo{person}{Rebecca~N. Wright}.} \bibinfo{year}{2011}\natexlab{}.
\newblock \showarticletitle{Pan-private algorithms via statistics on sketches}.
  In \bibinfo{booktitle}{\emph{Proceedings of the 30th {ACM}
  {SIGMOD-SIGACT-SIGART} Symposium on Principles of Database Systems, {PODS}
  2011, June 12-16, 2011, Athens, Greece}}. \bibinfo{pages}{37--48}.
\newblock
\urldef\tempurl%
\url{https://doi.org/10.1145/1989284.1989290}
\showURL{%
\tempurl}


\bibitem[\protect\citeauthoryear{Mironov}{Mironov}{2012}]%
        {Mironov12}
\bibfield{author}{\bibinfo{person}{Ilya Mironov}.}
  \bibinfo{year}{2012}\natexlab{}.
\newblock \showarticletitle{On significance of the least significant bits for
  differential privacy}. In \bibinfo{booktitle}{\emph{the {ACM} Conference on
  Computer and Communications Security, CCS'12, Raleigh, NC, USA, October
  16-18, 2012}}. \bibinfo{pages}{650--661}.
\newblock


\bibitem[\protect\citeauthoryear{Moreira{-}Matias, Gama, Ferreira,
  Mendes{-}Moreira, and Damas}{Moreira{-}Matias et~al\mbox{.}}{2013}]%
        {Moreira-MatiasGFMD13}
\bibfield{author}{\bibinfo{person}{Lu{\'{\i}}s Moreira{-}Matias},
  \bibinfo{person}{Jo{\~{a}}o Gama}, \bibinfo{person}{Michel Ferreira},
  \bibinfo{person}{Jo{\~{a}}o Mendes{-}Moreira}, {and}
  \bibinfo{person}{Lu{\'{\i}}s Damas}.} \bibinfo{year}{2013}\natexlab{}.
\newblock \showarticletitle{Predicting Taxi-Passenger Demand Using Streaming
  Data}.
\newblock \bibinfo{journal}{\emph{{IEEE} Trans. Intell. Transp. Syst.}}
  \bibinfo{volume}{14}, \bibinfo{number}{3} (\bibinfo{year}{2013}),
  \bibinfo{pages}{1393--1402}.
\newblock
\urldef\tempurl%
\url{https://doi.org/10.1109/TITS.2013.2262376}
\showURL{%
\tempurl}


\bibitem[\protect\citeauthoryear{Munro and Paterson}{Munro and
  Paterson}{1980}]%
        {MunroPaterson80}
\bibfield{author}{\bibinfo{person}{J.I. Munro} {and} \bibinfo{person}{M.S.
  Paterson}.} \bibinfo{year}{1980}\natexlab{}.
\newblock \showarticletitle{Selection and sorting with limited storage}.
\newblock \bibinfo{journal}{\emph{Theoretical Computer Science}}
  \bibinfo{volume}{12}, \bibinfo{number}{3} (\bibinfo{year}{1980}),
  \bibinfo{pages}{315--323}.
\newblock
\showISSN{0304-3975}
\urldef\tempurl%
\url{https://doi.org/10.1016/0304-3975(80)90061-4}
\showDOI{\tempurl}


\bibitem[\protect\citeauthoryear{Nissim, Raskhodnikova, and Smith}{Nissim
  et~al\mbox{.}}{2007}]%
        {NissimRS07}
\bibfield{author}{\bibinfo{person}{Kobbi Nissim}, \bibinfo{person}{Sofya
  Raskhodnikova}, {and} \bibinfo{person}{Adam~D. Smith}.}
  \bibinfo{year}{2007}\natexlab{}.
\newblock \showarticletitle{Smooth sensitivity and sampling in private data
  analysis}. In \bibinfo{booktitle}{\emph{Proceedings of the 39th Annual {ACM}
  Symposium on Theory of Computing, San Diego, California, USA, June 11-13,
  2007}}. \bibinfo{pages}{75--84}.
\newblock
\urldef\tempurl%
\url{https://doi.org/10.1145/1250790.1250803}
\showURL{%
\tempurl}


\bibitem[\protect\citeauthoryear{Perrier, Asghar, and Kaafar}{Perrier
  et~al\mbox{.}}{2019}]%
        {PerrierAK19}
\bibfield{author}{\bibinfo{person}{Victor Perrier},
  \bibinfo{person}{Hassan~Jameel Asghar}, {and} \bibinfo{person}{Dali Kaafar}.}
  \bibinfo{year}{2019}\natexlab{}.
\newblock \showarticletitle{Private Continual Release of Real-Valued Data
  Streams}. In \bibinfo{booktitle}{\emph{26th Annual Network and Distributed
  System Security Symposium, {NDSS} 2019, San Diego, California, USA, February
  24-27, 2019}}.
\newblock
\urldef\tempurl%
\url{https://www.ndss-symposium.org/ndss-paper/private-continual-release-of-real-valued-data-streams/}
\showURL{%
\tempurl}


\bibitem[\protect\citeauthoryear{Shrivastava, Buragohain, Agrawal, and
  Suri}{Shrivastava et~al\mbox{.}}{2004}]%
        {ShrivastavaBAS04}
\bibfield{author}{\bibinfo{person}{Nisheeth Shrivastava},
  \bibinfo{person}{Chiranjeeb Buragohain}, \bibinfo{person}{Divyakant Agrawal},
  {and} \bibinfo{person}{Subhash Suri}.} \bibinfo{year}{2004}\natexlab{}.
\newblock \showarticletitle{Medians and beyond: New Aggregation Techniques for
  Sensor Networks}. In \bibinfo{booktitle}{\emph{Proceedings of the 2nd
  International Conference on Embedded Networked Sensor Systems}} (Baltimore,
  MD, USA) \emph{(\bibinfo{series}{SenSys '04})}.
  \bibinfo{publisher}{Association for Computing Machinery},
  \bibinfo{address}{New York, NY, USA}, \bibinfo{pages}{239–249}.
\newblock
\showISBNx{1581138792}
\urldef\tempurl%
\url{https://doi.org/10.1145/1031495.1031524}
\showDOI{\tempurl}


\bibitem[\protect\citeauthoryear{Smith}{Smith}{2011}]%
        {Smith11}
\bibfield{author}{\bibinfo{person}{Adam~D. Smith}.}
  \bibinfo{year}{2011}\natexlab{}.
\newblock \showarticletitle{Privacy-preserving statistical estimation with
  optimal convergence rates}. In \bibinfo{booktitle}{\emph{Proceedings of the
  43rd {ACM} Symposium on Theory of Computing, {STOC} 2011, San Jose, CA, USA,
  6-8 June 2011}}. \bibinfo{pages}{813--822}.
\newblock
\urldef\tempurl%
\url{https://doi.org/10.1145/1993636.1993743}
\showURL{%
\tempurl}


\bibitem[\protect\citeauthoryear{Smith, Song, and Thakurta}{Smith
  et~al\mbox{.}}{2020}]%
        {Smith0T20}
\bibfield{author}{\bibinfo{person}{Adam~D. Smith}, \bibinfo{person}{Shuang
  Song}, {and} \bibinfo{person}{Abhradeep Thakurta}.}
  \bibinfo{year}{2020}\natexlab{}.
\newblock \showarticletitle{The Flajolet-Martin Sketch Itself Preserves
  Differential Privacy: Private Counting with Minimal Space}. In
  \bibinfo{booktitle}{\emph{NeurIPS}}.
\newblock


\bibitem[\protect\citeauthoryear{Tukey}{Tukey}{1960}]%
        {Tukey60}
\bibfield{author}{\bibinfo{person}{J.~W. Tukey}.}
  \bibinfo{year}{1960}\natexlab{}.
\newblock \showarticletitle{A survey of sampling from contaminated
  distributions}.
\newblock \bibinfo{journal}{\emph{Contributions to Probability and Statistics}}
  (\bibinfo{year}{1960}), \bibinfo{pages}{448--485}.
\newblock
\urldef\tempurl%
\url{https://ci.nii.ac.jp/naid/20000755025/en/}
\showURL{%
\tempurl}


\bibitem[\protect\citeauthoryear{Tzamos, Vlatakis{-}Gkaragkounis, and
  Zadik}{Tzamos et~al\mbox{.}}{2020}]%
        {TzamosVZ20}
\bibfield{author}{\bibinfo{person}{Christos Tzamos},
  \bibinfo{person}{Emmanouil{-}Vasileios Vlatakis{-}Gkaragkounis}, {and}
  \bibinfo{person}{Ilias Zadik}.} \bibinfo{year}{2020}\natexlab{}.
\newblock \showarticletitle{Optimal Private Median Estimation under Minimal
  Distributional Assumptions}. In \bibinfo{booktitle}{\emph{NeurIPS}}.
\newblock


\bibitem[\protect\citeauthoryear{Vadhan}{Vadhan}{2017}]%
        {Vadhan17}
\bibfield{author}{\bibinfo{person}{Salil~P. Vadhan}.}
  \bibinfo{year}{2017}\natexlab{}.
\newblock \showarticletitle{The Complexity of Differential Privacy}.
\newblock In \bibinfo{booktitle}{\emph{Tutorials on the Foundations of
  Cryptography}}. \bibinfo{pages}{347--450}.
\newblock


\bibitem[\protect\citeauthoryear{Wang, Luo, Yi, and Cormode}{Wang
  et~al\mbox{.}}{2013}]%
        {WangLYC13}
\bibfield{author}{\bibinfo{person}{Lu Wang}, \bibinfo{person}{Ge Luo},
  \bibinfo{person}{Ke Yi}, {and} \bibinfo{person}{Graham Cormode}.}
  \bibinfo{year}{2013}\natexlab{}.
\newblock \showarticletitle{Quantiles over data streams: an experimental
  study}. In \bibinfo{booktitle}{\emph{Proceedings of the {ACM} {SIGMOD}
  International Conference on Management of Data, {SIGMOD} 2013, New York, NY,
  USA, June 22-27, 2013}}. \bibinfo{pages}{737--748}.
\newblock


\bibitem[\protect\citeauthoryear{Xiang, Ding, He, and Zhou}{Xiang
  et~al\mbox{.}}{2020}]%
        {XiangD0Z20}
\bibfield{author}{\bibinfo{person}{Zhuolun Xiang}, \bibinfo{person}{Bolin
  Ding}, \bibinfo{person}{Xi He}, {and} \bibinfo{person}{Jingren Zhou}.}
  \bibinfo{year}{2020}\natexlab{}.
\newblock \showarticletitle{Linear and Range Counting under Metric-based Local
  Differential Privacy}. In \bibinfo{booktitle}{\emph{{IEEE} International
  Symposium on Information Theory, {ISIT} 2020, Los Angeles, CA, USA, June
  21-26, 2020}}. \bibinfo{pages}{908--913}.
\newblock
\urldef\tempurl%
\url{https://doi.org/10.1109/ISIT44484.2020.9173952}
\showURL{%
\tempurl}


\end{thebibliography}

\appendix

\section{Greenwald-Khanna Sketch}

For completeness of our algorithm's description, we specify the operations in the Greenwald-Khanna (GK)
non-private sketch.
Throughout, we will use $n = n(t)$ to denote the number of elements encountered up to
time $t\in\mathbb{Z}_+$.
Some of the operations outlined here will be used a subroutines for the DP
procedures. 

\subsection{The Sketch}

Let $X = (x_1, x_2, \ldots, x_n)$ be a stream of items and $S(X)$ be the resulting
sketch with size sublinear in $n$. The GK sketch stores
$$
S(X) = \langle t_0, t_1, \ldots, t_{s-1}\rangle,\quad \forall i\in\{0, \ldots, s-1\}, t_i = (v_i, g_i, \Delta_i),
$$
where 
$g_i = r_{min}(v_i) - r_{min}(v_{i-1})$ and
$\Delta_i = r_{max}(v_i) - r_{min}(v_i)$.
We reserve $v_0, v_{s-1}$ be denote the smallest and largest elements seem in the stream
$X$, respectively.
We use $S(X)[i]$ to refer to the $i$th tuple in the sketch $S(X)$.
i.e., for any $i$, $S(X)[i] = t_i = (v_i, g_i, \Delta_i)$.

Implicitly, the goal is to (implicitly) maintain bounds
$r_{min}(v)$ and $r_{max}(v)$ for every $v$ in $S(X)$.
$r_{min}(v)$ and $r_{max}(v)$ are the lower and upper bounds on the rank of $v$ amongst all
items in $X$, respectively. We can compute these bounds as follows:
$$
r_{min}(v_i) = \sum_{j\leq i} g_j,\quad
r_{min}(v_i) = \sum_{j\leq i} g_j + \Delta_i.
$$

As a result, $g_i + \Delta_i - 1$ is an upper bound on the number of items between
$v_{i-1}$ and $v_i$. In addition, $n = \sum_i g_i$.

The sketch is built in such a way to guarantee (maximum) error of 
$\max_{i=0}^{s-1} (g_i + \Delta_i)/2$ for approximately computing any quantile using the sketch.

We will also impose a tree structure over tuples in $S(X)$ (mostly because of the
merge procedure) as follows:
the tree $T(X)$ associated with $S(X)$ has a node $V_i$ for each $t_i$. The
parent of a node $V_i$ is the node $V_j$ such that $j$ is the smallest index greater than $i$
with $\band(t_j) > \band(t_i)$.

$\band(t_i)$ is the band of $\Delta_i$ at time $n$ and $\band_\tau(n)$ as all tuples 
that had band value of $\tau$. All possible values of $\Delta$ are denoted as bands and it
can take on values between\\ $(0, \frac{1}{2}2\alpha n, \frac{3}{4}2\alpha n, \ldots, \frac{2^i-1}{2^i}2\alpha n, \ldots, 2\alpha n - 1, 2\alpha n)$ corresponding to capacities of
$(2\alpha n, \alpha n, \ldots, 8, 4, 2, 1)$.

\subsection{Quantile}

Algorithm~\ref{alg:quantile} computes the $\alpha$-approximate $q$-quantile based on the
sketch $S(X)$ that has size that is sublinear in $n$.

The algorithm goes through all tuples and checks if the condition
$\max(r - r_{min}(v_i), r_{max}(v_i) - r) \leq \alpha n$ is satisfied and return
$(i, v_i)$ as the representative approximate quantile.
This algorithm will be used a subroutine for one or more of our differentially private
algorithms.

\begin{lemma}[Proposition 1 \& Corollary 1~\cite{GreenwaldK01}]
    If after receiving $n$ items in the stream, the sketch $S(X)$ satisfies the property
    $\max_i(g_i + \Delta_i)\leq 2\alpha n$, then Algorithm~\ref{alg:quantile} returns an
    $\alpha$-approximate $q$-quantile.
\end{lemma}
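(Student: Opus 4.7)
The plan is to prove the lemma in two steps: first, exhibit some index $i^*$ such that the tuple $v_{i^*}$ satisfies the algorithm's acceptance test $\max(r - r_{\min}(v_{i^*}),\ r_{\max}(v_{i^*}) - r) \leq \alpha n$ where $r = \lceil qn \rceil$; and second, argue that any such tuple corresponds to an $\alpha$-approximate $q$-quantile.

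For the existence step I would define $i^* = \min\{i : r_{\min}(v_i) \geq r - \alpha n\}$, which is well-defined because the final sketch tuple has $r_{\min}(v_{s-1}) = n \geq r - \alpha n$. By construction, $r - r_{\min}(v_{i^*}) \leq \alpha n$, settling half of the acceptance condition. For the other half, the minimality of $i^*$ gives $r_{\min}(v_{i^*-1}) < r - \alpha n$ (the boundary case $i^* = 0$ is handled separately, using that $r_{\max}(v_0) = 1$ forces $r \leq 1 + \alpha n$, so both sides of the $\max$ are at most $\alpha n$ automatically). Combining the telescoping identity $r_{\max}(v_{i^*}) = r_{\min}(v_{i^*-1}) + g_{i^*} + \Delta_{i^*}$ with the hypothesis $g_{i^*} + \Delta_{i^*} \leq 2\alpha n$ then yields
\begin{align*}
    r_{\max}(v_{i^*}) < (r - \alpha n) + 2\alpha n = r + \alpha n,
\end{align*}
so $r_{\max}(v_{i^*}) - r < \alpha n$, as required. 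For the correctness step, the GK invariant $\rank(X, \val(v_i)) \subseteq [r_{\min}(v_i), r_{\max}(v_i)]$ from Lemma~\ref{lem:GKguarantee}, combined with the acceptance condition, places the true rank of $\val(v_{i^*})$ in $[r - \alpha n, r + \alpha n]$. Hence $\val(v_{i^*})$ is an $\alpha$-approximate $q$-quantile in the sense of Definition~\ref{def:basics}, and the algorithm, which returns the first tuple meeting the condition, outputs a tuple with the same guarantee.

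The main obstacle I anticipate is not the core inequality, which is a one-line consequence of $g_{i^*} + \Delta_{i^*} \leq 2\alpha n$, but rather careful bookkeeping at the endpoints: verifying the behavior of the sentinel tuples (minimum, maximum, and the $\pm\infty$ tuples of Remark~\ref{rem:infty}), and ensuring strict versus non-strict inequalities line up so that the acceptance condition is genuinely attained by $v_{i^*}$ (or by an earlier tuple that the scanning algorithm returns instead). These are routine once tracked explicitly, but they are the typical source of off-by-one slips in rank-interval arguments of this kind.
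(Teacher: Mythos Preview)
Your proposal is correct and follows essentially the same argument as the paper: both proofs pick an extremal index and use the telescoping identity $r_{\max}(v_i) = r_{\min}(v_{i-1}) + g_i + \Delta_i$ together with the hypothesis $g_i + \Delta_i \leq 2\alpha n$ to bound the remaining side of the acceptance condition. The only cosmetic difference is the direction of approach: the paper defines $j$ as the smallest index with $r_{\max}(v_j) > r + e$ (where $e = \max_i(g_i+\Delta_i)/2$) and shows $v_{j-1}$ passes the test, handling the boundary via the last tuple when $r > n - e$; you instead define $i^*$ as the smallest index with $r_{\min}(v_{i^*}) \geq r - \alpha n$ and show $v_{i^*}$ itself passes, handling the boundary via the first tuple when $i^* = 0$. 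These are dual formulations of the same idea and neither has any advantage over the other.
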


\begin{proof}
    The algorithm computes $r = \lceil q n\rceil$. 
    Then the condition $\max(r - r_{min}(v_i), r_{max}(v_i) - r) \leq \alpha n$ clearly is (by definition)
    an $\alpha$-approximate $q$-quantile. We still need to show that such $v_i$ always exists.
    First set $e = \max_i(g_i + \Delta_i)/2$.
    If $r > n-e$, then $r_{min}(v_{s-1}) = r_{max}(v_{s-1}) = n$ so that $i=s-1$ satisfies the
    property. When $r \leq n - e$, then the algorithm chooses the smallest index $j$ such that
    $r_{max}(v_j) > r + e$ so that $r - e \leq r_{min}(v_{j-1})$. This follows since if
    $r - e > r_{min}(v_{j-1})$ then
    $r_{max}(v_j) = r_{min}(v_{j-1}) + g_j + \Delta_j > r_{min}(v_{j-1}) + 2e$ which contradicts
    the definition of $e$.
\end{proof}

\begin{algorithm}
\KwIn{$S(X), q, n, \alpha\text{ (approximation parameter)}$}

Compute $r = \lceil q n\rceil$

\For{$i=0,\ldots, s-1$} {
    $(v_i, g_i, \Delta_i) = S(X)[i]$
    
    \If {$\max(r - r_{min}(v_i), r_{max}(v_i) - r) \leq \alpha n$} {
        \Return $(i, v_i)$
    }
}

\Return $\perp$

\caption{$\Quantile(S(X), q, n, \alpha)$: Computing $\alpha$-Approximate Quantiles}
\label{alg:quantile}
\end{algorithm}

\subsection{Insert}

Algorithm~\ref{alg:streaminsert} goes through a stream of items and inserts into the 
sketch. The algorithm calls a compress operator on the data structure
every time that $i\equiv  0 \mod\frac{1}{2\alpha}$ for any $i\in[m]$.

\begin{algorithm}
\KwData{$x_1, x_2, \ldots, x_m, \ldots$}
\KwIn{$S(X), \alpha\text{ (approximation parameter)}$}

$n = 0$

\For{$i=1, \ldots, m, \ldots$} {
   
\If {$i \equiv 0 \mod\frac{1}{2\alpha}$} {
$\Compress(S(X), \alpha, n)$
}

$\Insert(S(X), \alpha, x_i)$

$n = n + 1$
}

\Return $S(X), n$

\caption{Inserting a stream of items into Summary Sketch}
\label{alg:streaminsert}
\end{algorithm}

Algorithm~\ref{alg:insert} inserts a particular item $x_n$ into the data structure
$S(X)$. In the special case where $x_n$ is a minimum or maximum, it inserts 
the tuple $(x_n, 1, 0)$ at the beginning or end of $S(X)$. Otherwise, it finds
an index $i$ such that $v_{i-1} \leq x_n < v_i$ and then inserts the
tuple $(x_n, 1, \lfloor 2\alpha n\rfloor)$ into $S(X)$ at position $i$.

\begin{algorithm}
\KwData{$x_n$}
\KwIn{$S(X), \alpha\text{ (approximation parameter)}$}

$(v_0, g_0, \Delta_0) = S(X)[0]$

$(v_{s-1}, g_{s-1}, \Delta_{s-1}) = S(X)[s-1]$

\If {$x_n < v_0$} {
Shift all positions in $S(X)[0\ldots s-1]$ to $S(X)[i\ldots s]$

$S(X)[0] = (x_n, 1, 0)$

}

\ElseIf {$x_n > v_{s-1}$} {

$S(X)[s] = (x_n, 1, 0)$

} 
\Else {
\For{$i=0,\ldots, s-1$} {
    $(v_i, g_i, \Delta_i) = S(X)[i]$
    
    \If {$v_{i-1} \leq x_n < v_i$} {
        Shift all positions in $S(X)[i\ldots s-1]$ to $S(X)[i+1\ldots s]$
        
        $S(X)[i] = (x_n, 1, \lfloor 2\alpha n\rfloor)$
    }
}
}
\Return $S(X), s+1$

\caption{$\Insert(S(X), \alpha, x_n)$: Inserting into Summary Sketch}
\label{alg:insert}
\end{algorithm}

\subsection{Compress}

\begin{algorithm}
\KwIn{$S(X), \alpha\text{ (approximation parameter)}, n$}

\If {$n < \frac{1}{2\alpha}$} {
	\Return
}

\For{$i=s-2,\ldots, 0$} {
	$t_i = (v_i, g_i, \Delta_i) = S(X)[i]$
	
	$t_{i+1} = (v_{i+1}, g_{i+1}, \Delta_{i+1}) = S(X)[i+1]$
	
	Compute $g_i^*$, the sum of $g$-values of tuple $t_i$ and its descendants
	
	\If {$\band(t_i)\leq \band(t_{i+1}) \And (g_i^* + g_{i+1} + \Delta_{i+1} < 2\alpha n)$} {
		
		Delete all descendants of $t_i$ and the tuple $t_i$ from sketch $S(X)$
		
		Update $t_{i+1}$ in $S(X)$ to $(v_{i+1}, g_i^* + g_{i+1}, \Delta_{i+1})$
	}
}

\caption{Compressing the Sketch}
\label{alg:compress}
\end{algorithm}

The $\Compress$ operation is an internal operation used for compressing
(contiguous) tuples in $S(X)$. The goal of this operation is to merge a node and its 
descendants into either its right sibling or parent node. After merge,
we have to maintain the property that the tuple is not full. A tuple is full
when $g_i + \Delta_i \geq \lfloor 2\alpha n\rfloor$.
By Proposition~\ref{prop:compress}, a node and its children will form a 
contiguous segment. Let
$g_i^*$ be the sum of $g$-values of tuple $t_i$ and all of its descendants.
Then merging $t_i$ and its descendants would update
$t_{i+1}$ in $S(X)$ to $(v_{i+1}, g_i^* + g_{i+1}, \Delta_{i+1})$ and delete
$t_i$ and all of its descendants.

\begin{proposition}[Proposition 4 in~\cite{GreenwaldK01}]
For any node $V$, the set of all its descendants in the tree
forms a contiguous segment in $S(X)$.
\label{prop:compress}
\end{proposition}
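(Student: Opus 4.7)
The plan is to identify the set of descendants of a fixed node $V_k$ \emph{explicitly} as an index interval and then verify this identification by two inclusions. The parent function $p(i) = \min\{j > i : \band(t_j) > \band(t_i)\}$ is the familiar ``next strictly greater element'' construction on the sequence of bands, so the ancestor chain $i \to p(i) \to p^2(i) \to \cdots$ satisfies two monotonicity properties I will use repeatedly: both the indices and the band values are strictly increasing along the chain.

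Define $\ell = \ell(k)$ to be the largest index strictly less than $k$ with $\band(t_\ell) \geq \band(t_k)$, or $\ell = -1$ if no such index exists. My claim is that the set of descendants of $V_k$ equals $\{V_i : \ell < i < k\}$, which is a contiguous segment in $S(X)$ as required. For the forward inclusion I would induct on $i$ in reverse from $k-1$ down to $\ell+1$: the base case $p(k-1) = k$ is automatic because $\band(t_{k-1}) < \band(t_k)$; for the inductive step, the bound $\band(t_i) < \band(t_k)$ guarantees $p(i) \leq k$, so either $p(i) = k$ and $V_i$ is a direct child of $V_k$, or $p(i) \in (i, k-1]$ and $V_{p(i)}$ is already a descendant by the inductive hypothesis, making $V_i$ one as well.

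For the reverse inclusion, indices $\geq k$ are excluded immediately since ancestor chains move to strictly larger indices. For $i \leq \ell$, I would argue by contradiction: if $V_i$ were a descendant of $V_k$ via the chain $i = i_0 < i_1 < \cdots < i_t = k$, let $s$ be the last position with $i_s \leq \ell$. The strict band increase along the chain forces $\band(t_{i_s}) < \band(t_k) \leq \band(t_\ell)$, which rules out $i_s = \ell$ and hence gives $i_s < \ell$. But then $\ell$ itself is a valid candidate witness for $p(i_s)$ with $i_s < \ell < i_{s+1}$, contradicting the minimality clause in the definition of $p(i_s) = i_{s+1}$.

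The main obstacle to watch is precisely this last step: one must confirm that an ancestor chain starting strictly below $\ell$ cannot ``jump over'' the blocker at index $\ell$ without $\ell$ becoming eligible as an intermediate parent. The combination of strict band monotonicity along chains with the minimality clause in the definition of $p$ is exactly what rules this out, closing both inclusions and proving that the descendants of $V_k$ occupy the contiguous range $[\ell+1, k-1]$ of $S(X)$.
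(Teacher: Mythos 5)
Your proof is correct. One thing to be aware of: the paper itself does not prove this statement at all — it is imported verbatim as Proposition 4 of Greenwald--Khanna, stated in the appendix with only a citation — so there is no in-paper argument to compare against; the original reference establishes it alongside its band-evolution machinery, whereas your argument is a self-contained, purely order-theoretic verification from the static definition given here, namely that the parent of $V_i$ is the node at the smallest index $j>i$ with $\band(t_j)>\band(t_i)$ (the classical ``nearest strictly greater element to the right'' structure). Your identification of the descendant set of $V_k$ with the index interval $(\ell,k)$, where $\ell$ is the largest index below $k$ carrying a band $\geq\band(t_k)$, is right, and both inclusions check out: the downward induction for the forward inclusion only needs $p(i)\leq k$, which is witnessed by $k$ itself since $\band(t_i)<\band(t_k)$ on that range, and the reverse inclusion correctly combines strict band growth along ancestor chains with the minimality clause in the parent map $p$ to show a chain starting at an index $\leq\ell$ cannot jump over $\ell$. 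Using $\geq$ rather than $>$ in the definition of $\ell$ is exactly what handles ties in band values, and the conventions $\ell=-1$ and an empty descendant range cover the boundary cases. The only cosmetic point is whether ``descendants'' includes $V_k$ itself (the Compress routine deletes the descendants and the tuple separately, suggesting it does not); either convention yields a contiguous segment, $[\ell+1,k-1]$ or $[\ell+1,k]$, which is all that is needed.
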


\end{document}